\documentclass[12pt]{amsart}
\usepackage{amsmath,amscd,amsthm,amsfonts, amssymb,amsxtra, mathrsfs,enumerate, stackrel, appendix}
\usepackage{calrsfs}
\usepackage{scalerel}
\usepackage[us,12hr]{datetime}
\usepackage[all]{xy} \SelectTips{cm}{}
\usepackage{hyperref}
 \hypersetup{colorlinks=true,citecolor=blue}
   \usepackage{graphicx}
   \usepackage[font=scriptsize]{caption}
   
\usepackage{todonotes}
\usepackage{tikz-cd}

\CDat
\newtheorem{thm}{Theorem}[section]  
\newtheorem*{un-no-thm}{Theorem}
\newtheorem{cor}[thm]{Corollary}     % Numbered along with thm
\newtheorem{lem}[thm]{Lemma}         % Numbered along with thm
\newtheorem{prop}[thm]{Proposition}  
\newtheorem{add}[thm]{Addendum}
\newtheorem{conjecture}[thm]{Conjecture}
\newtheorem{bigthm}{Theorem}
\newtheorem{bigcor}[bigthm]{Corollary}

\newtheorem{bigconj}[bigthm]{Conjecture}

\theoremstyle{definition}
\newtheorem{defn}[thm]{Definition}   % Numbered along with thm

\theoremstyle{definition}
\theoremstyle{definition}
\theoremstyle{remark}
\newtheorem{rem}[thm]{Remark}

\newtheorem{construct}[thm]{Construction}

\newtheorem*{out}{Outline}

\newtheorem*{intro-rem}{Remark}
\newtheorem*{intro-rems}{Remarks}

\newtheorem{ex}[thm]{Example}

 \newcommand{\ubar}[1]{\text{\b{$#1$}}}

\DeclareMathOperator{\hPic}{hPic}

\DeclareMathOperator{\Diff}{Mfd}

\DeclareMathOperator{\Pic}{Pic}

\DeclareMathOperator{\op}{op}

\DeclareMathOperator{\set}{Set}

\DeclareMathOperator{\Prin}{Bun}

\begin{document}
\title{The geometry of Floquet states}
%\date{\today, \currenttime} 

\author{Vladimir Y. Chernyak}
\address{Wayne State University,
Detroit, MI 48202} 
\email{chernyak@chem.wayne.edu} 
\author{John R. Klein} 
\address{Wayne State University,
Detroit, MI 48202} 
\email{klein@math.wayne.edu} 

\thanks{}
\subjclass[2020]{Primary 58B05;
Secondary 81Q70, 58A40, 53C05, 55R15, 55P35}
\begin{abstract} 
This paper studies the moduli space of ``good'' periodic Hamiltonians, i.e., periodic Hamiltonians with nondegenerate Floquet spectrum. 
The corresponding Floquet states determine a geometric model for the moduli space of such 
Hamiltonians. The model is expressed in terms of configuration spaces, flag manifolds, and principal bundles. 
A corollary of the main result is the complete determination of the rational cohomology
ring of the space of good periodic Hamiltonians. 

The paper also considers the adiabatic driving of a 1-parameter family of good periodic Hamiltonians.
We formulate two conjectures describing the resulting asymptotic quantum evolution. These are Floquet analogues of
the adiabatic theorem in quantum mechanics.
\end{abstract}

\maketitle%\pagestyle{fancy}
\setlength{\parindent}{15pt}
\setlength{\parskip}{1pt plus 0pt minus 1pt}
\def\smsh{\wedge}
\def\flush{\flushpar}
\def\dbslash{/\!\! /}
\def\:{\colon\!}
\def\Bbb{\mathbb}
\def\bold{\mathbf}
\def\cal{\mathcal}
\def\End{\text{\rm End}}
\def\stableto{\mapstochar \!\!\to}
\def\lr{{\ell r}}
\def\ad{\text{\rm{ad}}}
\def\tr{\sigma}
\def\Top{\text{\rm Top}}
\def\smallbullet{\raisebox{-1pt}{\scaleto{\!\!\bullet}{2pt}}}

\setcounter{tocdepth}{1}
{
  \hypersetup{linkcolor=olive}
  \tableofcontents
}

%\endtableofcontents

\section{Introduction} \label{sec:intro}

The classical adiabatic theorem of quantum mechanics \cite{Kato1950} describes the evolution of an eigenstate when a Hamiltonian is varied slowly. Let $H(s)$, $s\in[0,1]$, be a smooth family of self-adjoint operators on a finite-dimensional complex Hilbert space, and let $E(s)$ be a simple eigenvalue separated from the rest of the spectrum by a uniform positive gap. If a normalized state evolves under the Hamiltonian $H(t/N)$ and starts in the eigenspace for $E(0)$, then its distance from the instantaneous eigenspace for $E(t/N)$ is $O(N^{-1})$, uniformly for $0\leq t\leq N$, as $N\to\infty$. Here $t$ is physical time and $s=t/N$ is the slow parameter. More precisely, to leading order the evolution is the product of a dynamical phase and parallel transport in the eigenline bundle; for a closed path, this transport gives the Berry phase \cite{Simon1983}. This relation between slow evolution and the geometry of eigenspaces motivates the periodic analogues considered in this paper, in which Floquet states take the place of instantaneous eigenstates.

Floquet theory is the study of dynamical systems whose evolution is 
governed by differential equations with periodic coefficients. In the setting of quantum mechanics, Floquet theory arises when considering a time-dependent Schrödinger equation
\[
i\frac{d\psi(t)}{dt} =h(t)\psi(t)\, ,
\]
in which the time-dependent Hamiltonian $h(t)$ is periodic: $h(t+T) = h(t)$, where $T$ is the time period. By rescaling, we may assume that $T = 1$. 
Evolution over integer numbers of periods is governed by the monodromy operator; its eigenvalues are the Floquet quasi-periods, while the evolution within a period supplies the remaining time-dependent information. 

The corresponding adiabatic problem arises when a periodic Hamiltonian is itself varied slowly along a path of periodic Hamiltonians whose monodromy operators have simple spectrum. We formulate two conjectures concerning this evolution. The Adiabatic Floquet Conjecture (Conjecture~\ref{bigconj:adiabatic-Floquet}) predicts that the evolving state follows, in projective space, the Floquet states of the periodic Hamiltonians obtained by freezing the slow parameter. The Geometric Adiabatic Floquet Conjecture (Conjecture~\ref{bigconj:geometric-adiabatic-Floquet}) refines this approximation by identifying the phase factors and parallel transport that govern the final state. Together, these conjectures motivate the study of the geometry and topology of the space of periodic Hamiltonians with simple monodromy spectrum.

The purpose of this paper is to develop a geometric framework for these conjectures by describing Floquet states and the space of periodic Hamiltonians that admit a complete set of nondegenerate Floquet states. Our approach is based on the observation that a Floquet state is naturally a \emph{quasi-periodic} object: A solution $\psi$ satisfies
\[
\psi(t+1)=\lambda \psi(t)
\]
for some phase factor $\lambda\in U(1)$, the \emph{quasi-period}. Passing to projective space removes the phase ambiguity and produces a genuine loop. This suggests that Floquet theory should admit a formulation in terms of loop spaces, principal bundles, and holonomy.

Consider the Hopf bundle
\[
S(W)\to \mathbb P(W),
\]
where $W$ is a complex Hilbert space of finite dimension $n\geq 1$. We introduce the diffeological space
$\cal S(W)$ whose points are quasi-periodic maps $\Bbb R \to S(W)$. There is an obvious action
of $U(1)$ on $\cal S(W)$ and the quotient map 
\[
\cal S(W) \to \cal P(W)
\]
is a  principal $U(1)$-bundle. We call the points of $ \cal S(W)$ {\it normalized quasi-periodic states} and, analogously,
the points of $\cal P(W)$ are called {\it projective quasi-periodic states}. These are diffeological spaces in the sense of
\cite{IZ}. In Proposition \ref{prop:projective-quasi-periodic} below,  we exhibit a preferred diffeomorphism
\[
\mathcal P(W)\cong L\mathbb P(W)\times \Omega^1(S^1),
\]
where $L\Bbb P(W)$ is the space of smooth free loops on $\Bbb P(W)$ and  $\Omega^1(S^1)$ is the space of $1$-forms on the circle.
The factor $L\mathbb P(W)$ records the underlying projective evolution, while the $1$-form encodes the additional phase information. 
The quasi-period itself is recovered from these data by means of  a formula involving the holonomy of the canonical line bundle over projective space.

We then apply this framework to Floquet theory. For a ``good'' periodic Hamiltonian $h(t)$,
Floquet states become distinguished points in the space of quasi-periodic states.\footnote{The property of being {\it good} is a non-degeneracy condition
on the eigenvalues of the corresponding evolution operator; see Definition \ref{defn:good-periodic-hamiltonian} below.} 
We will exhibit a concrete geometric model for the diffeological moduli space $\cal H_F(W)$ of good periodic Hamiltonians and determine
its diffeomorphism type.

The rough idea is that the quasi-periods determine a configuration of points on the unit circle, while the corresponding eigenspaces determine a flag in $W$. 
These data describe a diffeological principal bundle
\begin{equation} \label{eqn:prin-bundle}
\widetilde{\mathcal H}_V(W)\longrightarrow \mathcal H_F(W)
\end{equation}
with structure group $G = T\rtimes \mathbb Z$, where $T=U(1)^{\times n}$ is an $n$-torus. We then investigate the quotient of the $G$-action
in two steps: We first quotient out by $T$ and identify the resulting space together with its $\mathbb Z$-action. Then we determine
the resulting quotient by the $\mathbb Z$-action. The main result is the following.

\begin{bigthm} \label{bigthm:diffeo-hamiltonians} There is a preferred diffeomorphism, natural under unitary isomorphisms of $W$,
\[
\cal H_F(W) \cong  \tilde C_n(S^1) \times \Omega SU(W) \times  \Omega \Bbb R \times F(W)\, ,
\]
where $\tilde C_n(S^1)$ is the universal cover of the configuration space of subsets of the circle of cardinality $n$,
$SU(W)=\ker(\det\:U(W)\to U(1))$, $\Omega SU(W)$ is its smooth based loop space,
$\Omega\Bbb R$ is the smooth based loop space of the real numbers, and $F(W)$ is the complete flag manifold on $W$.

In particular, $\cal H_F(W)$ has the structure of a nice Fréchet manifold.
\end{bigthm}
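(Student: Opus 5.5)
The plan is to pass from Hamiltonians to their evolution operators, split off a "linear part" governed by the Floquet spectrum, and recognize the remaining periodic part as a loop. Here $\Omega U(W)$ and $\Omega SU(W)$ denote smooth based loop spaces at the identity, and $\Omega_0U(W)$ the component of loops whose determinant has degree $0$.

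\emph{Step 1: from Hamiltonians to evolution paths.} A smooth $1$-periodic Hamiltonian $h\colon \Bbb R\to \mathfrak u(W)$ (up to the factor $-i$) determines its evolution $g\colon\Bbb R\to U(W)$ by $g'=-ihg$ and $g(0)=1$. Periodicity of $h$ is equivalent to
\[
g(t+1)=g(t)g(1).
\]
Conversely $h=i g'g^{-1}$, so $h\mapsto g$ is a diffeomorphism of diffeological spaces onto the space of smooth $g$ with $g(0)=1$ and $g(t+1)=g(t)g(1)$. Under this identification, $\cal H_F(W)$ corresponds to the open subspace where the monodromy $g(1)$ has simple spectrum.

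\emph{Step 2: the $G$-torsor of spectral data.} I would realize the bundle \eqref{eqn:prin-bundle} concretely. A point of $\widetilde{\cal H}_V(W)$ consists of $g$ together with:
\begin{itemize}
\item real lifts $\theta_1<\dots<\theta_n<\theta_1+2\pi$ of the eigenvalues of $g(1)$, so that $\theta$ is a point of $\tilde C_n(S^1)$;
\item an ordered unitary eigenbasis $V$ with $g(1)=V e^{i\,\mathrm{diag}(\theta)}V^*$.
\end{itemize}
The torus $T$ rescales the basis vectors, and $\Bbb Z$ acts by the deck transformation $(\theta_1,\dots,\theta_n)\mapsto(\theta_2,\dots,\theta_n,\theta_1+2\pi)$ together with the corresponding cyclic relabelling of $V$. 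Local smoothness of the eigendecomposition on the regular locus shows this is a principal $G$-bundle. Modulo $T$, the pair $(\theta,V)$ is the same as $\theta\in\tilde C_n(S^1)$ together with the complete flag $\Phi\in F(W)$ spanned by the eigenlines in the order given by $\theta$.

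\emph{Step 3: the periodic part.} Set
\[
\ell(t)=g(t)\,Ve^{-it\,\mathrm{diag}(\theta)}V^*.
\]
This is independent of the $T$-ambiguity in $V$. The quasi-periodicity of $g$ gives $\ell(t+1)=\ell(t)$ and $\ell(0)=1$, so $\ell\in\Omega U(W)$. Conversely, $(\theta,\Phi,\ell)$ recovers $g(t)=\ell(t)Ve^{it\,\mathrm{diag}(\theta)}V^*$, and every such $g$ has simple monodromy with spectral data $(\theta,\Phi)$. This yields a $\Bbb Z$-equivariant diffeomorphism
\[
\widetilde{\cal H}_V(W)/T\cong \tilde C_n(S^1)\times F(W)\times \Omega U(W).
\]
The generator of $\Bbb Z$ replaces $\theta_1$ by $\theta_1+2\pi$. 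This multiplies $Ve^{-it\,\mathrm{diag}(\theta)}V^*$ by a loop of determinant degree $-1$, so it shifts the degree of $\det\ell$ by $\pm1$ and acts freely on $\pi_0\Omega U(W)\cong\Bbb Z$. Taking the quotient therefore amounts to normalizing the degree to $0$, giving
\[
\cal H_F(W)\cong \tilde C_n(S^1)\times F(W)\times\Omega_0U(W).
\]

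\emph{Step 4: splitting the degree-$0$ loops.} For $\ell\in \Omega_0U(W)$, write $\det\ell=e^{i\varphi}$ with $\varphi\in\Omega\Bbb R$ the unique lift satisfying $\varphi(0)=0$. The map $\ell\mapsto(\ell e^{-i\varphi/n},\varphi)$ is then a diffeomorphism $\Omega_0U(W)\cong\Omega SU(W)\times\Omega\Bbb R$, with inverse $(\sigma,\varphi)\mapsto\sigma e^{i\varphi/n}$. Each construction used only the inner product, so everything is equivariant under unitary isomorphisms of $W$; this gives naturality. Since each factor is a Fréchet manifold (the loop spaces are modelled on smooth loops in Lie algebras), the Fréchet structure follows.

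The main obstacle is Step 3's quotient by $\Bbb Z$: one must check that the choices of lifts and ordering glue to a global smooth section rather than merely a local one. I would handle this by making the degree normalization explicit: among the $\Bbb Z$-orbit, choose the unique lift $\theta$ for which $\deg\det\ell=0$. I would then verify directly that this choice depends smoothly on $g$ in the diffeological sense, using plots together with the smoothness of the eigendecomposition on the regular locus.
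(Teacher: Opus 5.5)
Your proposal is correct and is essentially the paper's argument: your loop $\ell(t)=g(t)e^{-it\sum_j\theta_jP_{L_j}}$ is the paper's $K_h(t)=\mathcal U_h(t)e^{-2\pi itB_h}$, normalizing $\deg\det\ell=0$ is the paper's trace normalization $\sum_j x_j=\sigma_h$, and your Step 4 is the same determinant splitting. The only differences are cosmetic. The paper reaches the $T$-quotient through quasi-periodic frames and $LV(W)\cong\Omega U(W)\times V(W)$ rather than directly. For the ``nice'' clause, you can simply note that $\cal H_F(W)$ is an open subset of the nuclear Fr\'echet space $LH(W)$, which is easier than assembling niceness from the factors.
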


One consequence of Theorem \ref{bigthm:diffeo-hamiltonians} is that $\cal H_F(W)$ is simply connected.
Then  by Corollary \ref{cor:1-conn-classification}  below, we obtain the following.

\begin{bigcor}\label{bigcor:classification} The diffeological principal $G$-bundle \eqref{eqn:prin-bundle} is classified by an element
of the orbit set
\[
H^2(\cal H_F(W); \Lambda)/\Bbb Z\, ,
\]
where $\Lambda = \pi_1(T) = \Bbb Z^n$, and $\Bbb Z$ acts on $\Lambda$ by cyclic permutation of its coordinates.
\end{bigcor}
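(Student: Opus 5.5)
The plan is to reduce the classification to the simply connected case, where principal bundles with structure group $G = T\rtimes \Bbb Z$ over a base $X$ correspond to $T$-bundles up to the $\Bbb Z$-twist. Take $X=\cal H_F(W)$.

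\emph{Step 1: simple connectivity.} By Theorem~\ref{bigthm:diffeo-hamiltonians}, $X$ is diffeomorphic to the product $\tilde C_n(S^1)\times \Omega SU(W)\times \Omega\Bbb R\times F(W)$, so it suffices to check each factor.
\begin{itemize}
\item $\tilde C_n(S^1)$ is a universal cover, hence simply connected.
\item $\pi_1(\Omega SU(W))=\pi_2(SU(W))=0$.
\item $\Omega \Bbb R$ is contractible.
\item $F(W)=U(W)/T$ is simply connected, by the long exact sequence: $\pi_1 T\to \pi_1 U(W)$ is surjective.
\end{itemize}
These are statements about smooth loop spaces, so we need the comparison between smooth and continuous homotopy groups. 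I would take this from the diffeological homotopy theory the paper invokes for Corollary~\ref{cor:1-conn-classification}.

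\emph{Step 2: reduce from $G$ to $T$.} The composite $G\to \Bbb Z$ yields a principal $\Bbb Z$-bundle $P/T\to X$, i.e.\ a covering space. Because $X$ is simply connected and (diffeologically) connected, this covering is trivial. Choosing a section $X\to P/T$ reduces the structure group to $T$; call the resulting $T$-bundle $Q\subset P$, so that $P\cong Q\times_T G$. The set of sections is a $\Bbb Z$-torsor, since $X$ is connected. Changing the section by $k\in\Bbb Z$ replaces $Q$ by its twist $Q^{k}$, i.e.\ $Q$ with the $T$-action precomposed by the automorphism of $T$ coming from the semidirect product structure. On $\Lambda=\pi_1(T)$ this is cyclic permutation of coordinates.

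\emph{Step 3: classify the $T$-bundle.} Principal $T$-bundles over $X$ are classified by $H^2(X;\Lambda)$ through their first Chern classes (coordinatewise), so the class of $P$ is well-defined up to the $\Bbb Z$-action. Conversely, for any two $G$-bundles whose $T$-reductions have classes in the same orbit, the reductions can be chosen isomorphic, so the $G$-bundles are isomorphic.

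This is exactly the content of Corollary~\ref{cor:1-conn-classification} for a simply connected base, so the final write-up is Step 1 followed by an appeal to that corollary. The main obstacle is the diffeological version of Step 3: that diffeological $T$-bundles over this Fréchet manifold are classified by $H^2(X;\Lambda)$. I expect to handle it by using that $X$ is a nice Fréchet manifold, which admits smooth partitions of unity and has the homotopy type of a CW complex. Smooth $T$-bundles then agree with topological ones, which are classified by $[X,BT]=H^2(X;\Lambda)$. I assume this is established alongside Corollary~\ref{cor:1-conn-classification}.
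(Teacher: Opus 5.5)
Your proposal is correct and follows the paper's argument: the paper deduces that $\cal H_F(W)$ is simply connected from the product decomposition in Theorem~\ref{bigthm:diffeo-hamiltonians}, and then applies Corollary~\ref{cor:1-conn-classification}; your Steps 2--3 only sketch that corollary, which the appendices prove via twisted \v{C}ech cocycles and the fine-sheaf exponential sequence rather than via $[X,BT]$. The one thing to add in Step 1 is that $\Omega SU(W)$ is connected, $\pi_0\Omega SU(W)=\pi_1 SU(W)=0$, because that corollary uses connectedness of the base.
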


\begin{rem} A representative of this orbit admits a concrete description.
There is an isomorphism
\[
H^2(\cal H_F(W); \Lambda) \cong \textstyle\bigoplus\limits_{i=1}^n H^2(\cal H_F(W); \Bbb Z)\, ,
\]
and each component $H^2(\cal H_F(W); \Bbb Z)$ is isomorphic, via the first Chern class, to the group of isomorphism
classes of complex line bundles over $\cal H_F(W)$. Consequently, an element of $H^2(\cal H_F(W); \Lambda)$
amounts to specifying the isomorphism classes of an ordered $n$-tuple of complex line bundles $\eta_1,\dots,\eta_n$ over $\cal H_F(W)$.

Let $\cal H_F(W) \to C_n(S^1)$ be the map which assigns to a good periodic Hamiltonian the set
of its quasi-periods. Since $\cal H_F(W)$ is simply connected, choose a lift
$\cal H_F(W) \to \tilde C_n(S^1)$. A point of $\tilde C_n(S^1)$ may be described as an ordered $n$-tuple of real numbers
$(t_1,\dots,t_n)\in\Bbb R^n$ that satisfies the inequalities
\[
t_1 < t_2 < \cdots < t_n < t_1+1\, ,
\]
where the covering projection is given by
\[
(t_1,\dots,t_n)\mapsto\{e^{2\pi i t_1},\dots,e^{2\pi i t_n}\}\, .
\]
Thus
the chosen lift determines a total ordering of the set of quasi-periods associated with a good periodic Hamiltonian.
If $h\in\cal H_F(W)$ has $j$-th quasi-period $\lambda_j$,
then the corresponding eigenspace of its monodromy is a complex line $L_j(h)$ in $W$. The assignment $h\mapsto L_j(h)$ describes the line bundle $\eta_j$.
The orbit in Corollary~\ref{bigcor:classification} is represented by
\[
\bigl(c_1(\eta_1),\dots,c_1(\eta_n)\bigr).
\]
Changing the lift cyclically permutes these line bundles and their Chern classes, leaving the orbit unchanged.
The normalization in Construction~\ref{prop:intrinsic-product-decomposition} selects a preferred lift and hence a preferred representative of this orbit.
\end{rem}

Note that $\tilde C_n(S^1)$ and $\Omega \Bbb R$ are contractible spaces, so another 
consequence of Theorem \ref{bigthm:diffeo-hamiltonians} 
is the following result. 

\begin{bigcor}  \label{bigcor:diffeo-hamiltonians}There is a homotopy equivalence
\[
\cal H_F(W) \simeq \Omega SU(W) \times F(W)\, .
\]
In particular, the rational cohomology ring of $\cal H_F(W)$ is isomorphic to the graded tensor product
\[
\Bbb Q[s_1,s_2,\dots s_{n-1}] \otimes \Bbb Q[t_1,t_2,\dots t_{n}]/\Sigma\, ,
\]
in which $\deg s_j = 2j, \deg t_j = 2$ and $\Sigma \subset \Bbb Q[t_1,t_2,\dots t_{n}]$ is the ideal
generated by the elementary symmetric polynomials $e_1,\dots,e_n$ in $t_1,\dots,t_n$.
\end{bigcor}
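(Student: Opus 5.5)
The plan is to deduce the homotopy equivalence directly from Theorem~\ref{bigthm:diffeo-hamiltonians} and then compute the rational cohomology with the Künneth theorem, using two classical facts: the cohomology of the based loop space $\Omega SU(n)$ and the Borel presentation of the cohomology of the complete flag manifold.

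First, Theorem~\ref{bigthm:diffeo-hamiltonians} gives a diffeomorphism $\cal H_F(W)\cong \tilde C_n(S^1)\times \Omega SU(W)\times\Omega\Bbb R\times F(W)$. The factor $\tilde C_n(S^1)$ is identified in the remark above with the open convex subset $\{t_1<\cdots<t_n<t_1+1\}\subset\Bbb R^n$. That set is contractible by a straight-line homotopy, since convex combinations preserve the strict inequalities. The factor $\Omega\Bbb R$ of smooth based loops in $\Bbb R$ is a vector space, so it contracts via $(s,\gamma)\mapsto s\gamma$. These homotopies are smooth, hence continuous in the underlying (D-)topology. Projecting away the two contractible factors therefore gives a homotopy equivalence $\cal H_F(W)\simeq \Omega SU(W)\times F(W)$.

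One technical point needs checking here: the smooth based loop space $\Omega SU(W)$ in its Fréchet or diffeological topology must be homotopy equivalent to the continuous loop space. This is standard; for example, smooth loops are dense in continuous loops and the inclusion is a homotopy equivalence of Fréchet manifolds (cf. Pressley–Segal). I expect this to be the only genuinely delicate step, and I would handle it by citing that result.

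For the cohomology, both factors have torsion-free cohomology concentrated in even degrees. So over $\Bbb Q$ the Künneth theorem gives a ring isomorphism $H^*(\cal H_F(W);\Bbb Q)\cong H^*(\Omega SU(W);\Bbb Q)\otimes H^*(F(W);\Bbb Q)$, with no sign issues because all classes lie in even degrees.

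For the loop factor, Bott's computation (or rational homotopy theory) applies. $SU(n)$ is rationally a product of odd spheres $S^3\times S^5\times\cdots\times S^{2n-1}$, so $\Omega SU(n)$ is rationally $\prod_{j=1}^{n-1}\Omega S^{2j+1}\simeq_{\Bbb Q}\prod K(\Bbb Q,2j)$. It is an H-space, so its rational cohomology is a free graded-commutative algebra, here polynomial on generators $s_j$ of degree $2j$, $1\le j\le n-1$.

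For $F(W)\cong U(n)/T$, Borel's theorem gives $H^*(F(W);\Bbb Q)\cong \Bbb Q[t_1,\dots,t_n]/(e_1,\dots,e_n)$. Here $t_i=c_1$ of the $i$-th tautological line bundle $L_i/L_{i-1}$, and the relations come from the Whitney formula $\prod_i(1+t_i)=c(W)=1$, since the trivial bundle $W$ splits as the sum of these line bundles. Combining the two presentations yields the stated ring.
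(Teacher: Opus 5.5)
Your proposal is correct and takes essentially the same route as the paper: you drop the contractible factors $\tilde C_n(S^1)$ and $\Omega\Bbb R$ from the diffeomorphism of Theorem~\ref{bigthm:diffeo-hamiltonians}, pass from smooth to continuous based loops on $SU(W)$ (the paper cites Kihara for a weak homotopy equivalence where you cite Pressley--Segal), and then combine Bott's computation for $\Omega SU(W)$ with Borel's presentation of $H^*(F(W);\Bbb Q)$. Your explicit appeal to the K\"unneth theorem, which applies because $F(W)$ is a compact manifold, spells out a step the paper leaves implicit.
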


\subsection{The Adiabatic Floquet Conjecture} 
The philosophy of  diffeology is that to understand an infinite dimensional space $X$
it is enough to study its plots $M \to X$, where $M$ is a finite dimensional smooth manifold.\footnote{Classically, such infinite dimensional 
spaces arose from differential topological considerations, such as a function space of smooth maps \cite{Chen_iterated}.}

Let $X = \cal H_F(W)$. Then a plot of the form
$U \times [0,1] \to \cal H_F(W)$
is called a {\it parametrized driving protocol,} in which $U$, the parameter space, is a smooth manifold.
Here we focus on the simplest case by taking $U$ to be a point.
In this case, we have a smooth 1-parameter family of good periodic Hamiltonians
\[
h\: [0,1] \to \cal H_F(W)\, .
\]
Choose a normalized Floquet state $\psi_0\in\cal S(W)$ of $h(0)$,
and fix an initial time $t_1\in\Bbb R$. Set
\[
M_s:=\mathcal U_{h(s)}(t_1+1,t_1),
\qquad L_0:=\Bbb C\psi_0(t_1).
\]
Since the monodromy operators $M_s$ have simple spectrum, the eigenline
$L_0$ continues uniquely to a smooth family of eigenlines $L_s$.
Let $\lambda(s)\in U(1)$ be the eigenvalue of $M_s$ on $L_s$.
Solving the frozen Schr\"odinger equation for $h(s)$ with any unit
initial vector in $L_s$ determines a projective quasi-periodic state
$[\psi]_s\in\cal P(W)$, independent of the phase of that vector.
This construction gives a smooth path
\[
[\psi]_\bullet\:[0,1]\longrightarrow\cal P(W),
\qquad [\psi]_0=[\psi_0].
\]
The continuation and smoothness are explained in \S\ref{sec:Floquet-conjecture}.

The pair $([\psi]_\bullet,\psi_0)$ specifies a path lifting problem for
the principal $U(1)$-bundle $p\:\cal S(W)\to\cal P(W)$.
Since the frozen equation for $h(0)$ determines $\psi_0$ from its value
$\psi_0(t_1)$, we also denote these lifting data by
\[
([\psi]_\bullet,\psi_0(t_1)).
\]

In \S\ref{sec:Floquet-conjecture},
 we construct a
  1-parameter family of connections $\nabla^F_t$ on $p$, where $t\in\Bbb R$. Let $\Gamma^F_t$ be the corresponding family of lifting functions, and let $t_0\in\Bbb R$ be given.
Let  
\[
\Gamma^F_{t_0}([\psi]_\bullet,\psi_0(t_1))\in \cal S(W) 
\]
denote the endpoint at $s=1$ of the $\nabla^F_{t_0}$-horizontal lift of $[\psi]_\bullet$ with initial state $\psi_0$, encoded by $\psi_0(t_1)$.
In particular, if $t_2\in \Bbb R$, we have $\Gamma^F_{t_0}([\psi]_\bullet,\psi_0(t_1))(t_2) \in S(W)$.

\subsubsection{The adiabatic limit} Physicists often study {\it adiabatic} (slow) processes, in part because such systems are tractable, and in part because adiabatic evolution often reveals underlying geometry. 
With this in mind, we now traverse the fixed driving protocol increasingly slowly.
Since $h$ takes values in $\cal H_F(W)$, the frozen monodromy spectrum is simple;
for $n>1$, its eigenvalues are uniformly separated by compactness of $[0,1]$.

Consider the time interval $[t_1,t_2+N]$, where $t_1,t_2\in\Bbb R$ are fixed
and $N$ is a positive integer sufficiently large that $N+t_2-t_1>0$. Define
 \[
 \tau\: [t_1,t_2+N] \to [0,1], \qquad \tau(t) = \frac{t-t_1}{N + t_2-t_1}\, .
 \]
 The {\it time rescaled Hamiltonian} $\tilde h\: [t_1,t_2+N] \to H(W)$ is
 \[
 \tilde h(t) = h(\tau(t))(t) \, .
 \]
Let $\tilde \psi\: [t_1,t_2+N] \to S(W)$ solve the Schr\"odinger equation for $\tilde h$
with respect to the initial condition $\tilde \psi(t_1) = \psi_0(t_1)$.
In the adiabatic limit, all data other than $N$ are fixed, and the dependence of
$\tau$, $\tilde h$, and $\tilde\psi$ on $N$ is understood.
Let $Q_N(t)$ and $P_s(t)$ be the orthogonal projections onto
$\Bbb C\tilde\psi(t)$ and $[\psi]_s(t)$, respectively.
Write $\|\cdot\|_{\operatorname{op}}$ for the operator norm on $W$.

\begin{bigconj}[Adiabatic Floquet Conjecture]\label{bigconj:adiabatic-Floquet}
As $N\to\infty$ one has
\[
\sup_{t\in[t_1,t_2+N]}
\bigl\|Q_N(t)-P_{\tau(t)}(t)\bigr\|_{\operatorname{op}}
=O(1/N)\, .
\]
\end{bigconj}

The authors have obtained a formal derivation of Conjecture
\ref{bigconj:geometric-adiabatic-Floquet}. A mathematically rigorous treatment
is deferred to a future paper. We now describe this refinement, which also
keeps track of phase.

In \S\ref{sec:Floquet-conjecture} we construct
auxiliary functions 
\[
\begin{gathered}
\rho\: C^\infty([0,1], U(1))\to \Bbb R/(2\pi\Bbb Z)\, ,\\
\varphi\: C^\infty([0,1], U(1)) \times \Bbb R^3 \to \Bbb R\, .
\end{gathered}
\]

\begin{bigconj}[Geometric Adiabatic Floquet Conjecture]  \label{bigconj:geometric-adiabatic-Floquet} As $N \to \infty$ one has
\[
\tilde \psi(t_2+N) \approx e^{iN\rho(\lambda)} e^{i\varphi(\lambda,t_0,t_1,t_2)} \Gamma^F_{t_0}([\psi]_\bullet,\psi_0(t_1))(t_2)\, .
\]
Here ``$\approx$'' means an $O(1/N)$ error in the norm of $W$.
\end{bigconj}

\begin{rem} A rigorous derivation of Conjecture
\ref{bigconj:geometric-adiabatic-Floquet} from Conjecture
\ref{bigconj:adiabatic-Floquet} requires further estimates on the phase
and parallel transport. We defer this argument to a subsequent paper.

 In the above, $\lambda \in C^\infty([0,1], U(1))$ is the function whose value $\lambda(s)$ is the quasi-period associated with $[\psi]_s$.
The formula provides an asymptotic description of 
$\tilde \psi(t_2+N)$ in terms of the geometric, non-adiabatic factor $ \Gamma^{F}_{t_0}([\psi]_\bullet,\psi_0(t_1))$
arising from the connection $\nabla^F_{t_0}$. The remaining terms on the right are phase factors. 
Only the first of these depends upon $N$.

It is worth remarking that although each of the last two factors on the right in the display of Conjecture \ref{bigconj:geometric-adiabatic-Floquet} depends on $t_0$, we will see that the product
$e^{i\varphi(\lambda,t_0,t_1,t_2)} \Gamma^F_{t_0}([\psi]_\bullet,\psi_0(t_1))(t_2)$ does not.
\end{rem}

\begin{out}
The paper is organized as follows. In Section \ref{sec:prelim} we consider Fréchet manifolds and diffeological spaces. We 
introduce just enough of the theory to apply it to the kinds of objects considered in the paper. In Section \ref{sec:qp-states} we introduce
the diffeological spaces of normalized and projective quasi-periodic states and identify their diffeomorphism types.
Section \ref{sec:Floquet}, which is short, defines Floquet states, their quasi-periods, and introduces the space of good periodic Hamiltonians $\cal H_F(W)$.
In Section \ref{sec:H_F-model} we produce the principal bundle \eqref{eqn:prin-bundle}. Section \ref{sec:diffeo-type-H}
proves Theorem \ref{bigthm:diffeo-hamiltonians}.  Section \ref{sec:Floquet-conjecture} formulates the Adiabatic Floquet Conjecture.

We provide two appendices.
The first of these, Appendix \ref{sec:Picard}, is about the Picard group in the setting of Hermitian line bundles over a Fréchet manifold $X$. The main point
is that the first Chern class $\Pic(X) \to H^2(X;\underline{\Bbb Z})$ is an isomorphism when $X$ is sufficiently nice. 
In  Appendix \ref{sec:prin-TtimesH}, using the material of the previous appendix,  we exhibit a cohomological classification of
principal $G$-bundles over such an $X$ when $G =  T\rtimes H$, for $T$ an $n$-torus and $H$ a discrete group.
\end{out}
\section{Preliminaries} \label{sec:prelim}

Throughout the paper we work in the category of diffeological spaces. This setting is well suited to the infinite-dimensional spaces that arise naturally in Floquet theory, such as smooth loop spaces and spaces of quasi-periodic maps, while retaining the familiar geometric features of smooth manifolds and principal bundles.

We take Hermitian inner products to be complex-linear in the first argument
and conjugate-linear in the second.

\subsection{Fr\'echet manifolds}
We refer the reader to  \cite{Kriegl-Michor} for the foundations of the theory of infinite dimensional manifolds in the convenient setting.
A {\it Fr\'echet manifold} is a $C^\infty$-manifold with the property that its model spaces are Fr\'echet spaces, 
i.e.,  locally convex, metrizable, complete topological vector spaces.

% A  Fr\'echet manifold $X$ is {\it $C^\infty$-paracompact} \cite[16.1]{Kriegl-Michor} if  for any open cover, there exists a smooth partition of 
% unity subordinate to it. Following \cite[\S11]{Kihara}, one says that $X$ is {\it hereditarily $C^\infty$-paracompact} if every open submanifold
% of $X$ is $C^\infty$-paracompact. 

\begin{defn} A Fr\'echet manifold $X$ is {\it nice} if
\begin{enumerate} 
\item  the model spaces of $X$
are nuclear Fr\'echet spaces, 
%\item $X$ is {\it semiclassical} in the sense of \cite[\S1.2]{Kihara},
\item $X$ is Hausdorff, and 
\item $X$ is paracompact.
\end{enumerate}
\end{defn}

%\begin{rem}\label{rem:Kihara-terminology} In the terminology of Kihara, every Fr\'echet manifold $X$ is  semi-classical \cite[lem.~11.5]{Kihara}.
% If we assume that $X$ is nice, then it is hereditarily $C^\infty$-paracompact  \cite[cor.~11.17]{Kihara}. In particular,
% $X$ satisfies the assumptions of  \cite[th.~1.1]{Kihara}.
%\end{rem}

\begin{ex}\label{ex:Kihara-function-space} Let $M$ and $N$ be smooth finite dimensional manifolds with $M$ compact. 
Then
\[
\mathfrak{C}^\infty(M,N)\, ,
\] the space of smooth maps $f\: M \to N$, is a nice Fr\'echet manifold \cite[\S42]{Kriegl-Michor}, 
\cite[\S11]{Kihara}, \cite{Kihara_model}. This $C^\infty$-manifold is modeled on the nuclear Fr\'echet spaces $C^\infty(f^\ast TN)$, i.e., of smooth sections
of the pullback bundle $f^\ast TN \to M$. Furthermore, by  \cite[prop.~42.3]{Kriegl-Michor}, $\mathfrak{C}^\infty(M,N)$ is Lindelöf, smoothly paracompact, has separable connected components, and is metrizable.
\end{ex}

\subsection{Diffeological spaces} A standard reference for diffeology is \cite{IZ}. 
Let $\Diff$ be the category of finite dimensional smooth manifolds in which a morphism is a smooth map.
Let 
\[
\Bbb E\subset \Diff
\] 
be the full subcategory in which an object
is an open subset of some $\Bbb R^n$. Let $X$ be a set. 
A {\it parametrization} of $X$ is a set theoretic function $U\to X$ for some object $U\in \Bbb E$.

A  {\it diffeological space} \cite{IZ} consists of a set $X$ and a specified set of
parametrizations $D_X$ satisfying the following axioms:
\begin{enumerate}[(i).]
\item Every constant function $U \to X$ is in $D_X$.
\item If $p\: U\to X$ is a parametrization, $\{U_\alpha \}$ is an open cover of $U$, and 
$\{p_{\alpha}\: U_\alpha \to X\}$ is a family of elements of $D_X$ such that $p_{\alpha}=p|_{U_{\alpha}}$,
then $p$ is in $D_X$.
\item If $p\: U \to X$ is in $D_X$ and $f\:V\to U$ is a morphism of $\Bbb E$, then 
$p\circ f\: V\to X$ is in $D_X$.
\end{enumerate}
The elements of $D_X$ are called {\it plots}.
A morphism of diffeological spaces $(X,D_X) \to (Y,D_Y)$ is a plot preserving function $f\: X\to Y$.

Henceforth, we abbreviate the notation and instead refer to $(X,D_X)$  as $X$.
The category of diffeological
spaces is denoted by $\cal D$.
\begin{rem}[Diffeological spaces as concrete sheaves]
The inclusion functor $\Diff \to \cal D$ is fully faithful.
If $M$ is a smooth manifold, we will also denote its image in $\cal D$ by $M$.

Let $\set$ be the category of sets.
Then a diffeological space $X$ defines a sheaf 
\[
[X]\: \Diff^{\op} \to \set
\]
 in which
$[X](M) = \hom_{\cal D}(M,X)$. In other words, $[X]$ is a {\it smooth stack} having no automorphisms.
%A sheaf $\Diff^{\op} \to \set$
% is a {\it smooth stack} whose objects have no automorphisms.

% The functor $X\mapsto [X]$ is a fully faithful embedding of diffeological spaces
% into the category of smooth stacks.
\end{rem}

\subsubsection{The $\cal D$-topology}
A diffeological space $X$ has an associated topological space $LX$
with underlying set $X$ whose topology is final for
its set of plots, i.e.,
a subset $A\subset X$ is open if and only if its preimage with respect to every plot $U\to X$
is open. Conversely, a topological space $Y$ determines a diffeological space $RY$
with underlying set $Y$ in which
the plots are continuous functions $U\to Y$.

Let $\Top$ be the category of topological spaces. Then we have  a pair of adjoint functors
\[
L\: \cal D \rightleftarrows \Top \: R
\]
satisfying $L\circ R\circ L = L$ and $R\circ L \circ R = R$
\cite[prop.~3.3]{Christensen-Sinnamon-Wu}. If $X$ is a diffeological space, the {\it $\cal D$-topology} on $X$ is defined to be the topology on $LX$.

\begin{rem} Let $\Delta^n$ denote the standard $n$-simplex.
A space $Y$ lies in the image of $L$ if and only if it is {\it $\Delta$-generated}, i.e., $A\subset Y$ is open if and only if 
for every continuous map $f\: \Delta^n \to Y$, with $n \in \Bbb N$, we have $f^{-1}(A) \subset \Delta^n$ is open \cite[prop.~3.10]{Christensen-Sinnamon-Wu}.
\end{rem}

%Let $X(U)$ be the set elements of $D_X$ having domain $U$.
% Then $U\mapsto X(U)$ defines a sheaf
% \[
% X({-})\: \Bbb E^{\op} \to \set\, .
% \]
% which is {\it concrete} in that it is a sub-sheaf of  $\hom_{\set}({-},X)$ (cf.~\cite{Baez-Hoffnung}, \cite[cor.~2.10]{Watts-Wolbert}).

\begin{rem} Let $\cal F$ be the category of Fr\'echet manifolds.
Then the evident functor
\[
\cal F\to \cal D 
\]
is fully faithful \cite[th.~5]{Losik}. For a Fr\'echet manifold $X$, its topology coincides with the {\it ${\cal D}$-topology,} i.e., the final topology induced by its plots.
\end{rem}

\subsection{Inductions and subductions}
Let $f\: X\to Y$ be a map of diffeological spaces.

\begin{defn}[cf.~{\cite[p.~15]{IZ}}] The map $f$ is an {\it induction} if it has the following properties:
\begin{enumerate}[(i).]
\item $f$ is one-to-one, and
\item $\phi\: U \to X$ is a plot if and only if $f\circ \phi\: U\to Y$ is a plot.
\end{enumerate}
\end{defn}
In the case of inductions $f\: X\to Y$, the diffeology on $X$ is the {\it pullback} $f^\ast D_Y$ of the diffeology
on $Y$ \cite[p.~14]{IZ}.

\begin{defn}[cf.~{\cite[p.~25]{IZ}}]  The map $f$ is a {\it subduction} if
\begin{enumerate}[(i).]
\item $f$ is onto, and
\item $\psi\: V \to Y$ is a plot if and only if for every point $y\in V$, there is a neighborhood $U$ of $y$, a plot
$\phi\: U \to X$  and a commutative diagram
\[
\xymatrix{
U \ar[r]^{\phi} \ar[d]_{\cap} & X \ar[d]^{f} \\
V \ar[r]_{\psi} &Y .
}
\]
\end{enumerate}
\end{defn}
In this case the diffeology on $Y$ is given by the {\it pushforward} $f_\ast D_X$ of the diffeology on $X$ \cite[p.~24]{IZ}.

\subsection{Diffeological principal bundles}
Let $G$ be a Lie group acting smoothly on the right of a diffeological space $P$. 
We always equip the set of orbits $P/G$ with the diffeology making $P\to P/G$
into a subduction, i.e., $P/G$ has diffeology $f_\ast D_P$.
We say that the action is {\it free} if $p\cdot g = p$ implies $g = e$.

\begin{defn}
A $G$-invariant map $\pi\: P\to X$ is a {\it diffeological principal $G$-bundle} \cite[p.~243]{IZ} if
\begin{enumerate}
\item $\pi$ is a subduction, and
\item the {\it shearing map}
\begin{align*}
P \times G & @>>> P\times_X P\\
(p,g) & \mapsto (p,p\cdot g)
\end{align*}
is a diffeomorphism, where $P\times_X P \subset P\times P$ is equipped with
the subset diffeology.
\end{enumerate}
\end{defn}

\begin{rem}
The action of $G$ on $P$ is free, and $\pi$ induces a
diffeomorphism
\[
P/G \cong X.
\]
Thus $\pi$ is identified with the quotient projection
$P\to P/G$.
\end{rem}

\subsection{Local triviality} 
Let $p\: P\to X$ be a diffeological principal $G$-bundle. One says that $p$ admits {\it local sections}
if every $x\in X$ has a $\cal D$-open neighborhood $U\subset X$ over which
$p$ admits a smooth section $U\to P$.
If this is the case, the bundle  $p^{-1}(U)\to U$ is identified with the trivial bundle $U\times G \to U$.

Suppose $p$ sits in a Cartesian square of diffeological spaces
\[
\xymatrix{
P \ar[r] \ar[d]_p & E \ar[d]^q \\
X \ar[r] & B
}
\]
in which $q\: E\to B$ is a smooth locally trivial principal $G$-bundle of finite dimensional manifolds. 
Then $q$ admits smooth local sections. Since the diagram
is cartesian, it follows that $p$ also admits local sections. It follows that $p$ is a locally trivial
 diffeological principal $G$-bundle.

\begin{rem} If $\pi\: P \to X$ is a locally trivial diffeological principal $G$-bundle where $X$ is a finite dimensional
smooth manifold, then $P$ has a smooth manifold structure and $\pi$ is a (locally trivial) smooth principal $G$-bundle.
Conversely, a  smooth principal $G$-bundle of finite dimensional manifolds is diffeological.
\end{rem}

\subsection{Double quotients} Let $G$ be a Lie group acting on the right of a diffeological space $P$.
Given a closed normal subgroup $N\subset G$, we set $H = G/N$.
Then there is a diffeomorphism
\begin{equation} \label{eqn:double-quotient-formula}
P/G \cong (P/N)/H\, .
\end{equation}
Suppose in addition that $G = N\rtimes H$ is a semi-direct product, where $N\subset G$ is a closed subgroup and $H$ is discrete.
Set $X = P/N$ and $Y = P/G$. Then we have a pair of maps
\[
P @> p >> X @> q >> Y\, .
\]

\begin{prop}\label{prop:composite-bundle} Assume that $p$ is a diffeological principal $N$-bundle and 
that $q$ is a diffeological principal $H$-bundle.
Then the composition $q\circ p\: P\to Y$ is a diffeological principal $G$-bundle. Furthermore,
if $p$ and $q$ are locally trivial, then so is $q\circ p$.
\end{prop}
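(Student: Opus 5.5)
We verify the two defining conditions for $q\circ p\colon P\to Y$ to be a diffeological principal $G$-bundle, where $G=N\rtimes H$ acts on the right of $P$. Then we treat local triviality separately.

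\emph{Subduction.} The composite of two subductions is a subduction; this follows directly from the definition by composing local lifts of plots. Since $p$ and $q$ are subductions, so is $q\circ p$. It is $G$-invariant because $p$ is $N$-invariant, $q$ is $H$-invariant, and the induced action of $H=G/N$ on $X=P/N$ is the one for which $q$ is the quotient, by \eqref{eqn:double-quotient-formula}.

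\emph{Shearing map.} We must show that $\sigma\colon P\times G\to P\times_Y P$, $(x,g)\mapsto (x,xg)$, is a diffeomorphism. Since $G=N\times H$ as a diffeological space (with $H$ discrete), $\sigma$ is smooth. Bijectivity reduces to the two given bundles. If $q(p(x))=q(p(x'))$, then there is a unique $h\in H$ with $p(x')=p(x)h$. Choosing the section $H\hookrightarrow G$ of the semidirect product, $p(x')=p(xh)$, so there is a unique $n\in N$ with $x'=xhn$. Conversely, $xg=x$ forces $g\in N$ by freeness of the $H$-action on $X$, and then $g=e$. Smoothness of the inverse is the step needing care, so we take a plot $U\to P\times_Y P$, $u\mapsto (x(u),x'(u))$. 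The function $u\mapsto (p(x(u)),p(x'(u)))$ is a plot of $X\times_Y X$. By the shearing diffeomorphism for $q$, the function $u\mapsto h(u)$ is a plot of the discrete group $H$, hence locally constant. On each component where $h$ is constant, $u\mapsto (x(u)h,x'(u))$ is a plot of $P\times_X P$, because right multiplication by a fixed element of $G$ is a diffeomorphism of $P$. The shearing diffeomorphism for $p$ then gives $n(u)$ smooth in $N$. Hence $g(u)=h\,n(u)$ is a plot of $G$, and plots glue by axiom (ii). This local-constancy argument is the main point, and it is exactly where discreteness of $H$ is used.

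\emph{Local triviality.} Let $y\in Y$, and pick a $\mathcal D$-open $V\ni y$ with a smooth section $s\colon V\to X$ of $q$. Pick $x_0=s(y)$ and a $\mathcal D$-open $W\ni x_0$ with a section $t\colon W\to P$ of $p$. The set $V'=s^{-1}(W)$ is $\mathcal D$-open, since $s$ is smooth and hence $\mathcal D$-continuous, and it contains $y$. Then $t\circ s\colon V'\to P$ is a smooth section of $q\circ p$ over the $\mathcal D$-open neighbourhood $V'$ of $y$. Consequently $q\circ p$ admits local sections, and is therefore locally trivial.
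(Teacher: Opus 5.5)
Your proof is correct and follows essentially the same route as the paper: subductions compose, the shearing map of $q$ makes the $H$-valued component of a plot of $P\times_Y P$ locally constant, and on each such piece smoothness of the inverse reduces to the shearing map of $p$ combined with right translation by a fixed element of $G$. You also supply the local-triviality argument that the paper leaves to the reader, by composing local sections. The one small point worth making explicit there is that $s^{-1}(W)$ is $\mathcal D$-open in $Y$, not merely in $V$. This holds because restricting any plot of $Y$ to the open preimage of $V$ gives a plot of $V$, and its composite with $s$ is then a plot of $X$.
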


\begin{proof} The map $P\to Y$ is a subduction because subductions compose \cite[1.47]{IZ}.
The $G$-action is free, since the actions of $H$ on $X$ and of $N$ on $P$ are free,
and its orbits are the fibers of $P\to Y$.
To show that the map $P\times G\to P \times P$ is an induction we use the decomposition
\[
G = \coprod_{h\in H} N_h
\]
where $N_h \subset G$ is the subspace of elements of the form $(n,h)$ for $n\in N$.
Then $N_h$ is diffeomorphic to $N$ and
\[
P \times G =  \coprod_{h\in H} P\times N_h\, .
\]
Let $\phi\:U\to P\times_Y P$ be a plot. Projecting to $X\times_Y X$ and applying
 the inverse of the shearing map for $q$ gives a smooth $H$-valued function on $U$.
Since $H$ is discrete, this function is locally constant. Thus locally $\phi$ takes values
in the image of a single $P\times N_h$, and it is enough to show for each $h\in H$ that the composite map 
\[
P\times N \cong P \times N_h \to P\times G\to P \times P
\]
is an induction. This map is a composition of the form
\[
P \times N @>s>>  P\times P @> \sigma >> P \times P
\]
where $s$ is the shearing map  for  $p$ and $\sigma$ is the diffeomorphism
given by $(a,b) \mapsto (a,b\cdot (1,h))$. Hence, we have an induction since inductions compose.

Local triviality is straightforward and is left to the reader.
\end{proof}

\section{Quasi-periodic states} \label{sec:qp-states}

\subsection{Quasi-periodic maps} Let $p\: E \to M$ be a smooth principal $U(1)$-bundle
over a finite dimensional closed manifold. Assume that $p$ is equipped with a connection.
A smooth map
\[
f\: \Bbb R \to E
\]
is {\it quasi-periodic} if there is a $\lambda \in U(1)$ such that $f(t+1)=\lambda f(t)$
for all $t\in \Bbb R$. The number $\lambda$ is called the {\it quasi-period} of $f$.

By a straightforward induction,  $f(t+n)  = \lambda^nf(t)$ for all $t\in \Bbb R$ and $n  \ge 0$.
Moreover, if we replace $t$ by $t-n$ in the latter expression, then we obtain the identity $f(t-n) = \lambda^{-n}f(t)$.
Consequently,
$f$ is $\Bbb Z$-equivariant with respect to the translation action
on $\Bbb R$ and the action on $E$ defined by $n\cdot w := \lambda^nw$.

To each quasi-periodic map, there is a uniquely defined smooth map $f^\sharp \: \Bbb R/\Bbb Z \to M$
such that the square
\begin{equation} \label{eqn:sharp}
\xymatrix{
\Bbb R \ar[r]^f\ar[d] & E \ar[d]_p \\
\Bbb R/\Bbb Z \ar[r]_{f^\sharp} & M
}
\end{equation}
commutes. Conversely, given a smooth map $g\: \Bbb R/\Bbb Z \to M$ and a choice of lift of $g(0) \in M$ to a point $e\in E$, there
is a  preferred quasi-periodic map $f\:  \Bbb R \to E$ such that $f(0) = e$. The map $f$ is defined by the connection.

Let $\ast \in E$ be a fixed choice of basepoint.

\begin{defn} 
Let
\[
Q(\Bbb R,E) \quad \text{and } \quad  Q_\ast(\Bbb R,E)
\]
denote the diffeological spaces of quasi-periodic maps $f\:\Bbb R \to E$
and, respectively, the quasi-periodic maps  such that $f(0) = \ast$.
\end{defn}

\begin{rem} A  plot $U \to Q(\Bbb R,E)$
is a smooth map $F\:U\times \Bbb R \to E$ 
such that for every $x\in U$, the function $f_x = F(x,{-})$
is quasi-periodic. We equip $Q_\ast(\Bbb R,E) \subset Q(\Bbb R,E)$ with the subdiffeology.
\end{rem}

We now specialize to the case when the $U(1)$-principal bundle is given by $S^1 \to \ast$. Let $1\in S^1$ be the basepoint.
Consider the diffeological space of smooth $1$-forms $\Omega^1(S^1)$,
 in which a plot $U \to \Omega^1(S^1)$ is a smooth section
of $\pi^*\tau^\ast_{S^1}$, where $\tau^\ast_{S^1}$ is the cotangent bundle and $\pi\: U \times S^1\to S^1$ is the projection.

\begin{lem} \label{lem:qp-forms} There are preferred diffeomorphisms
\begin{enumerate}[(i).]
\item $Q_\ast(\Bbb R,S^1) \cong \Omega^1(S^1)$, and  
\item  $Q(\Bbb R,S^1) \cong S^1 \times \Omega^1(S^1)$.
\end{enumerate}
\end{lem}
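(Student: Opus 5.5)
The approach is to write a quasi-periodic map as the exponential of a real function and read off its logarithmic derivative. Let $f\:\Bbb R\to S^1$ be quasi-periodic with quasi-period $\lambda$. Since $\Bbb R$ is simply connected, $f$ lifts through $\exp(2\pi i\,\cdot)\:\Bbb R\to S^1$. We therefore write $f(t)=f(0)e^{2\pi i\theta(t)}$ with $\theta(0)=0$, and this $\theta$ is unique.

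The key observation is that the logarithmic derivative $f^{-1}df = 2\pi i\,\theta'(t)\,dt$ is a $1$-form on $\Bbb R$. Because $f(t+1)=\lambda f(t)$, this form is invariant under translation by $1$, so it descends to a $1$-form $\alpha_f$ on $\Bbb R/\Bbb Z = S^1$. For part (i), define
\[
Q_\ast(\Bbb R,S^1)\to\Omega^1(S^1),\qquad f\mapsto \tfrac{1}{2\pi i}f^{-1}df .
\]
The inverse sends $\alpha=a(t)\,dt$ (with $a$ $1$-periodic) to $f(t)=\exp\!\bigl(2\pi i\int_0^t a(u)\,du\bigr)$. This $f$ satisfies $f(0)=1$, and it is quasi-periodic with quasi-period $\lambda=\exp(2\pi i\int_{S^1}\alpha)$, because $\int_t^{t+1}a=\int_0^1 a$ by periodicity of $a$. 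Checking that the two maps are mutually inverse is immediate: the lift $\theta$ is unique once $\theta(0)=0$ is imposed, and $\theta'=a$.

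The only substantive step is smoothness in plots. Given a plot $F\:U\times\Bbb R\to S^1$ with each $F(x,-)$ quasi-periodic, the expression $F^{-1}\partial_tF\,dt$ is smooth on $U\times\Bbb R$ and invariant under $t\mapsto t+1$. Hence it descends to a smooth section of $\pi^\ast\tau^\ast_{S^1}$ over $U\times S^1$, which is a plot of $\Omega^1(S^1)$. Conversely, given a plot $a\:U\times S^1\to\Bbb R$, the map $(x,t)\mapsto \exp\bigl(2\pi i\int_0^t a(x,u)\,du\bigr)$ is smooth on $U\times\Bbb R$ by differentiation under the integral sign. So both maps send plots to plots, and we obtain a diffeomorphism. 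This plot-level bookkeeping, in particular checking that the descended form really is a plot of $\Omega^1(S^1)$ in the stated diffeology, is the step I expect to need the most care, though it is routine.

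For part (ii), use the free $S^1$-action by multiplication on $Q(\Bbb R,S^1)$. Define
\[
Q(\Bbb R,S^1)\to S^1\times Q_\ast(\Bbb R,S^1),\qquad f\mapsto \bigl(f(0),\,f(0)^{-1}f\bigr),
\]
with inverse $(z,g)\mapsto zg$. Both maps are smooth on plots, since evaluation at $0$ and pointwise multiplication are smooth. Composing with part (i) gives $Q(\Bbb R,S^1)\cong S^1\times\Omega^1(S^1)$, sending $f\mapsto\bigl(f(0),\tfrac{1}{2\pi i}f^{-1}df\bigr)$. This formula is preferred in that it involves no choices, and under it the quasi-period is recovered as $\exp\bigl(2\pi i\int_{S^1}\alpha\bigr)$.
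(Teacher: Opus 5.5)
Your proof is correct and follows the paper's argument. The forward map is integration, $\alpha\mapsto\exp(2\pi i\int_0^t\alpha)$, and the inverse is the logarithmic derivative $(2\pi i)^{-1}f^{-1}df$; the paper reaches this via the lift $f^\flat$ and the splitting $\alpha=c\,d\theta+d\phi$, but then states it in exactly your form. Part (ii) uses the same map $f\mapsto(f(0),f(0)^{-1}f)$. Going directly through the translation-invariant logarithmic derivative is a bit cleaner, and your plot-level smoothness check is more explicit than the paper's.
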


\begin{proof} We first prove (i).
Let $p\: \Bbb R \to \Bbb R/\Bbb Z = S^1$ be the projection given by $p(t) = \exp(2\pi i t)$.
For $\alpha\in  \Omega^1(S^1)$ we set
\[
f(t) := \exp(2\pi i\int_0^t p^\ast \alpha)\, .
\]
Then $f(0) = 1$ and 
\begin{align*}
f(t+1) &= \exp\left(2\pi i\int_0^{t+1} p^\ast \alpha\right) \\
& = \exp\left(2\pi i\int_0^t p^\ast \alpha\right) \exp\left(2\pi i\int_t^{t+1} p^\ast \alpha\right)\\
&= f(t) \exp\left(2\pi i\int_{S^1}\alpha \right)\, .
\end{align*}
We infer that $f$ is quasi-periodic with quasi-period $\exp(2\pi i\int_{S^1}\alpha)$.

Conversely, suppose that $f\: \Bbb R \to S^1$ is quasi-periodic, with $f(0) = 1$. Let $f^\flat\: \Bbb R \to \Bbb R$ be
the unique lift to the universal cover (i.e., $p\circ f^\flat = f$) satisfying $f^\flat(0) = 0$. Choose $s\in \Bbb R$ so that $\lambda = e^{2\pi i s}$.
Then 
\[
f^\flat(t+1) - f^\flat(t) \in s+ \Bbb Z\, ,
\]
for all $t$. By continuity, we infer that $f^\flat(t+1) - f^\flat(t)$ is a constant function, say $f^\flat(t+1) - f^\flat(t) = 2\pi c$. 

Set $g(t) = f^\flat(t) - 2\pi c t$. Then 
\[
g(t+1) = f^\flat(t+1) - 2\pi c (t+1) = f^\flat(t) + 2\pi c - 2\pi c (t+1)  = g(t)\, ,
\]
so $g(t)$ is $1$-periodic. Hence, there is a unique smooth function $\phi\: S^1 \to \Bbb R$ such that
\[
\phi(p(t)) = g(t) = f^\flat(t) - 2\pi c t\, .
\]
Equivalently, 
\[
f^\flat(t) = 2\pi c t + \phi(p(t)) =  \int_0^t p^*(cd\theta + d\phi)\, ,
\]
where $d\theta\in \Omega^1(S^1)$ is the arc length element (here we have used the identity $2\pi dt = p^\ast(d\theta)$).
Set $\alpha = cd\theta + d\phi$. Then
\[
f(t) = p(f^\flat(t)) = \exp(2\pi i \int_0^t p^*\alpha)\, .
\]
Hence, the two procedures outlined above are inverses. Both preserve smooth parameter families:
the inverse is characterized by $p^\ast\alpha=(2\pi i)^{-1}f^{-1}df$, and the forward map is defined by integration.
This establishes (i).

As for (ii), one first observes that the map 
\begin{align*}
Q(\Bbb R,S^1) &\to S^1 \times Q_\ast(\Bbb R,S^1)\\
f & \mapsto (f(0), t\mapsto f(t)f(0)^{-1})
\end{align*}
is a diffeomorphism. Then (ii) follows from (i).
\end{proof}

\begin{rem} The above proof shows that a quasi-periodic map $f\: \Bbb R\to S^1$ corresponds to an {\it exact} $1$-form $\alpha\in \Omega^1(S^1)$
if and only if $f$ is periodic and has winding number 0.
\end{rem}

\subsection{Quasi-periodic states}
For a finite dimensional complex Hilbert space $W$, we let $S(W)$ denote its unit sphere.
 Let $\Bbb P(W) = S(W)/U(1)$ be the associated projective space. Then the quotient map
 \[
 S(W) \to \Bbb P(W)
 \]
 is a $U(1)$-principal bundle with the {\it Fubini-Study connection}
(cf.~the proof of Lemma \ref{lem:aa-statement} below).
 
 \begin{defn}
A {\it normalized quasi-periodic state} is a quasi-periodic map $\psi\: \Bbb R\to S(W)$.
If $\lambda$ is the quasi-period of $\psi$ we occasionally write $\lambda = \lambda_\psi$.
We  sometimes denote the quasi-periodic state as a pair
$(\lambda,\psi)$.

Let
\[
\cal S(W) := Q(\Bbb R,S(W))
\]
be the diffeological space of normalized quasi-periodic states.
\end{defn}

As in \eqref{eqn:sharp},  a normalized quasi-periodic state $\psi$ determines  a smooth loop
\[
\psi^\sharp\: \Bbb R/\Bbb Z \to \Bbb P(W)\, .
\]

% begin{rem} If $(\lambda, \psi)$ is a normalized quasi-periodic state, then
 % $\psi$ is $\Bbb Z$-equivariant with respect to the translation action of $\Bbb Z$ on the real line and by
 % the action  on  $S(W)$ defined by $n\cdot x = \lambda^nx$. We let 
 % $S(W)_{\lambda}$ denote $S(W)$ with the latter action. 
% Then a quasi-periodic state with respect to a given $\lambda$ is the same thing as a smooth section of the sphere bundle
% \[
% S(W)_{\lambda} \times_{\Bbb Z} \Bbb R \to \Bbb R/\Bbb Z = S^1\, .
% \]
%\end{rem}

The group $U(1)$ acts on $\cal S(W)$ by $(\mu\cdot\psi)(t):=\mu\psi(t)$.
Let 
\[
\cal P(W) := \cal S(W)/U(1)
\]
be the quotient diffeological space.
Then one has a pullback square of diffeological spaces
\[
\xymatrix{
\cal S(W) \ar[r]\ar[d] & S(W) \ar[d] \\
\cal P(W) \ar[r] & \Bbb P(W)
}
\]
where the horizontal maps are defined by evaluation at $0$.
Indeed, a projective state and a unit vector over its value at $0$ determine a unique
normalized representative with that initial value, smoothly in families. It follows that
the left vertical map is a locally trivial diffeological principal $U(1)$-bundle.
Then $\cal P(W)$ is the (diffeological) space of {\it projective quasi-periodic states.}

There is a smooth map 
\[
\cal P(W) \to U(1) \times L\Bbb P(W)
\]
where $L\Bbb P(W)$ is the space of smooth free loops
in $\Bbb P(W)$. The map is given by $[\psi] \mapsto (\lambda_\psi, \psi^\sharp)$.

\begin{prop} \label{prop:projective-quasi-periodic}
There is a preferred diffeomorphism 
\[
\cal P(W) \cong L\Bbb P(W) \times \Omega^1(S^1) \, .
\]
\end{prop}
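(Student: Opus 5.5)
The plan is to compare an arbitrary normalized quasi-periodic state with the Fubini--Study horizontal lift of its projective loop. The difference between the two is a quasi-periodic map $\Bbb R\to S^1$, which Lemma~\ref{lem:qp-forms} turns into a $1$-form.

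Fix $[\psi]\in\cal P(W)$ and choose a representative $\psi\in\cal S(W)$; its projective loop is $\gamma=\psi^\sharp\in L\Bbb P(W)$. Let $\tilde\gamma\:\Bbb R\to S(W)$ be the horizontal lift of $\gamma\circ p$ (with $p\:\Bbb R\to\Bbb R/\Bbb Z$) for the Fubini--Study connection, with initial condition $\tilde\gamma(0)=\psi(0)$. This is exactly the preferred quasi-periodic map of \eqref{eqn:sharp}. Its quasi-period is $\mathrm{hol}(\gamma)^{-1}$ (up to sign convention), the holonomy of the Hopf bundle around $\gamma$.

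Because $\psi(t)$ and $\tilde\gamma(t)$ lie in the same fibre, there is a unique smooth $f\:\Bbb R\to U(1)$ with $\psi(t)=f(t)\tilde\gamma(t)$ and $f(0)=1$. We check quasi-periodicity of $f$ directly: $\psi(t+1)=\lambda_\psi\psi(t)$ and $\tilde\gamma(t+1)=\mu\tilde\gamma(t)$ together give $f(t+1)=\lambda_\psi\mu^{-1}f(t)$. So $f\in Q_\ast(\Bbb R,S^1)$, and Lemma~\ref{lem:qp-forms}(i) yields a unique $\alpha\in\Omega^1(S^1)$.

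If the representative $\psi$ is replaced by $\nu\psi$ with $\nu\in U(1)$, then $\tilde\gamma$ is replaced by $\nu\tilde\gamma$, so $f$ is unchanged. Hence $[\psi]\mapsto(\gamma,\alpha)$ is well defined on $\cal P(W)$.

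The inverse is built as follows. Given $(\gamma,\alpha)$, pick any unit vector $v$ over $\gamma(0)$ and set $\psi=f_\alpha\cdot\tilde\gamma_v$, where $f_\alpha$ is the map attached to $\alpha$ by Lemma~\ref{lem:qp-forms}. The class $[\psi]$ does not depend on $v$, by the same equivariance. The two constructions are visibly mutually inverse. This also recovers the remark in the introduction: $\lambda_\psi=\mathrm{hol}(\gamma)^{-1}\exp(2\pi i\int_{S^1}\alpha)$.

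Smoothness is the main point requiring care. For the forward map, a plot of $\cal P(W)$ locally lifts to a plot $U\times\Bbb R\to S(W)$ of $\cal S(W)$, because the quotient is a subduction and a locally trivial bundle (pullback of the Hopf bundle). Then:
\begin{itemize}
\item $\gamma_x$ depends smoothly on $x$;
\item the horizontal lift $\tilde\gamma_x(t)$ depends smoothly on $(x,t)$, by smooth dependence of ODE solutions on parameters;
\item $f_x(t)=\langle\psi_x(t),\tilde\gamma_x(t)\rangle$ is therefore smooth in $(x,t)$;
\item the lemma's diffeomorphism carries this to a smooth family of $\alpha_x$.
\end{itemize}
For the inverse, a plot of $L\Bbb P(W)\times\Omega^1(S^1)$ is smooth $\gamma\:U\times S^1\to\Bbb P(W)$ together with a smooth family $\alpha_x$. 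Locally on $U$ one chooses a smooth unit lift $v(x)$ of $\gamma_x(0)$, using local sections of the Hopf bundle. The ODE argument again gives a smooth family $\psi_x$, hence a plot of $\cal S(W)$, and therefore of $\cal P(W)$.

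I expect the main obstacle to be this parameter-dependent smoothness of horizontal lifts along loops varying in a plot. It is handled by writing the horizontal transport equation $\dot{\tilde\gamma}=-\langle\dot{\tilde\gamma}_0,\tilde\gamma_0\rangle\tilde\gamma_0$ in terms of an arbitrary smooth local lift $\tilde\gamma_0$, i.e.\ a linear ODE for a phase factor with smooth coefficients in $(x,t)$. That equation can be solved explicitly by an exponential integral, so smoothness in $(x,t)$ is immediate.
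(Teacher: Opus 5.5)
Your proposal is correct and follows the paper's own proof. You compare a representative with the Fubini--Study horizontal lift of its projective loop, convert the resulting based quasi-periodic $U(1)$-valued function into a $1$-form via Lemma~\ref{lem:qp-forms}, and invert by multiplying a horizontal lift by the function attached to $\alpha$. Your plot-level smoothness argument supplies what the paper leaves to the reader.

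The only difference is a sign convention. The paper writes $f(t)m(t)=f_0(t)$ and attaches $\alpha$ to $m$, whereas your phase function is the paper's $m^{-1}$, so your $\alpha$ is the negative of the paper's. This is harmless for the proposition, but the paper's sign is the one used in \eqref{eqn:boxed} and in the later normalization of the Floquet $1$-form.
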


\begin{proof} We will define a bijection on the level of points and leave it to the reader
to verify that it defines a diffeomorphism of diffeological spaces.

The $U(1)$-principal bundle $S(W)\to\Bbb P(W)$ is the unitary frame bundle
of the canonical line bundle $\gamma_W\subset\Bbb P(W)\times W$.
The latter has a preferred connection $\nabla$, obtained by orthogonally
projecting the trivial connection onto $\gamma_W$.
The induced connection on $S(W)$ determines a smooth holonomy map
\[
L\Bbb P(W) \to U(1)\, ,
\]
which assigns to a smooth loop $g\: S^1 \to \Bbb P(W)$ an isometry of the fiber of $\gamma_W$ at $g(1)$, or equivalently,
an element of $U(1)$.

On the level of points,
we first define a smooth map 
\[
\cal S(W) @> (h_1,h_2) >>  L\Bbb P(W) \times \Omega^1(S^1)
\]
which factors through $\cal P(W)$. 
The map $h_1$ is given by $h_1(f) =  f^\sharp$, as in \eqref{eqn:sharp}.

We next  define $h_2$. Suppose that
 $f\: \Bbb R \to S(W)$ is quasi-periodic. Let $p\: \Bbb R \to U(1)$ be the universal cover:  $p(s) = \exp( 2\pi i s)$.
The connection $\nabla$
induces another lift of $f^\sharp$ (with initial lift $f(0)$),  to a quasi-periodic map $f_0\: \Bbb R \to S(W)$.
Therefore,  
\[
f(t)m(t) = f_0(t)\, ,
\] where $m\: \Bbb R \to U(1)$ is quasi-periodic and satisfies $m(0) = 1$.
Let $\alpha \in \Omega^1(S^1)$ correspond to $m$ by Lemma \ref{lem:qp-forms}.
Set 
\[
h_2(f)  :=  \alpha\, .
\]
Then we have defined the map $(h_1,h_2)\: \cal S(W) \to  L\Bbb P(W) \times \Omega^1(S^1)$. We leave it to the
reader to verify that this map
factors through $\cal P(W)$.

Conversely, suppose that $(g,\alpha)\in L\Bbb P(W) \times \Omega^1(S^1)$.
Choose an initial lift of $g(1)$ to $S(W)$.
Then $g$ lifts to a quasi-periodic map $g^\flat_0\: \Bbb R\to S(W)$ using the connection $\nabla$.
 Let $m\: \Bbb R \to U(1)$ be the quasi-periodic map that satisfies $m(0) = 1$ and which corresponds to $\alpha$ as in Lemma \ref{lem:qp-forms}. 
 Set $g^\flat(t) := m(t)^{-1} g^{\flat}_0(t)$. If we had changed the lift of $g(1)$, then $g^\flat$ changes by some factor $\rho \in U(1)$,
 so $[g^\flat] \in \cal P(W)$ is well-defined.
The map $(g,\alpha) \mapsto [g^\flat]$ is smooth and defines the inverse to the map $(h_1,h_2)$.
 \end{proof}

\begin{rem} Let $(\lambda,f)\in\cal S(W)$ be a normalized quasi-periodic state.
The equation $f(t)=f_0(t)m(t)^{-1}$ evaluated at $t=1$ gives
\[
f(1)=f_0(1)\exp\left(-2\pi i\int_{S^1}\alpha\right).
\]
Since $f(1)=f(0)\lambda=f_0(0)\lambda$ and
$f_0(1)=f_0(0)\mu_{f^\sharp}$, where $\mu_{f^\sharp}$ is the
holonomy of the loop $f^\sharp\in L\Bbb P(W)$, we obtain
\begin{equation}\label{eqn:boxed}
\lambda=\mu_{f^\sharp}\exp\left(-2\pi i\int_{S^1}\alpha\right).
\end{equation}
\end{rem}

\begin{cor} \label{cor:projective-quasi-periodic}
There is a $U(1)$-equivariant diffeomorphism 
\[
\cal S(W) \cong L_{S(W)}\Bbb P(W)  \times \Omega^1(S^1)\, ,
\]
where $ L_{S(W)}\Bbb P(W)$ is the diffeological space given by the
fiber product of the diagram \[ L\Bbb P(W) \to \Bbb P(W)\leftarrow S(W) \, .\]
\end{cor}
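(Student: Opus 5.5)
The plan is to lift the diffeomorphism of Proposition \ref{prop:projective-quasi-periodic} through the principal $U(1)$-bundle $\cal S(W)\to\cal P(W)$. I will use the pullback square displayed above, whose horizontal maps are evaluation at $0$. It identifies $\cal S(W)$ with the fiber product $\cal P(W)\times_{\Bbb P(W)} S(W)$, where $\cal P(W)\to\Bbb P(W)$ sends $[\psi]$ to $[\psi(0)]$.

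Under the diffeomorphism $\cal P(W)\cong L\Bbb P(W)\times\Omega^1(S^1)$, the map $[\psi]\mapsto[\psi(0)]$ becomes $(g,\alpha)\mapsto g(1)$, i.e.\ evaluation of the loop at the basepoint composed with projection to the first factor. This holds because $h_1(\psi)=\psi^\sharp$ satisfies $\psi^\sharp(1)=[\psi(0)]$ by the commutative square \eqref{eqn:sharp}. Consequently
\[
\cal S(W)\cong \bigl(L\Bbb P(W)\times\Omega^1(S^1)\bigr)\times_{\Bbb P(W)} S(W)\cong L_{S(W)}\Bbb P(W)\times \Omega^1(S^1),
\]
since the $\Omega^1(S^1)$ factor does not participate in the fiber product. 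Concretely, the map is $\psi\mapsto\bigl((\psi^\sharp,\psi(0)),\,h_2(\psi)\bigr)$. Its inverse sends $((g,v),\alpha)$ to $m(t)^{-1}g^\flat_0(t)$, where $g^\flat_0$ is the $\nabla$-horizontal lift of $g$ starting at $v$ and $m$ corresponds to $\alpha$ by Lemma \ref{lem:qp-forms}. These formulas come directly from the proof of the proposition, now with the initial lift fixed to be $v$.

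For equivariance, I let $U(1)$ act on $L_{S(W)}\Bbb P(W)$ through its action on the $S(W)$ factor and trivially on $\Omega^1(S^1)$. Then:
\begin{itemize}
\item replacing $\psi$ by $\mu\psi$ leaves $\psi^\sharp$ unchanged and sends $\psi(0)$ to $\mu\psi(0)$;
\item it also leaves $h_2$ unchanged, because the horizontal lift $f_0$ gets multiplied by the same $\mu$, so the ratio $m$ is unchanged.
\end{itemize}
Hence the map is equivariant.

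The main point requiring care is the diffeological statement rather than the bijection. I need that fiber products, computed with subset diffeologies, are compatible with the diffeomorphism of Proposition \ref{prop:projective-quasi-periodic}, and that the pullback square really is cartesian in $\cal D$. Both follow from the smooth dependence on parameters of horizontal lifts, which are solutions of a linear ODE, and from the plot-level description in Lemma \ref{lem:qp-forms}. I would verify these on plots $U\to\cal S(W)$ directly.
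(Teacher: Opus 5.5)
Your proposal is correct and follows the paper's proof: the paper simply writes down the same map $(\lambda,f)\mapsto((f^\sharp,f(0)),\alpha)$. Your argument adds detail the paper leaves implicit, namely the identification via the evaluation-at-$0$ pullback square, the transport through Proposition~\ref{prop:projective-quasi-periodic}, the explicit inverse, and the equivariance check.
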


\begin{proof}
The diffeomorphism is given by $(\lambda,f) \mapsto ((f^\sharp,f(0)),\alpha)$, where the notation is as above.
\end{proof}

\begin{rem} When $W = \Bbb C$, Corollary \ref{cor:projective-quasi-periodic}
coincides with the statement of Lemma \ref{lem:qp-forms}(ii), since $L_{S(\Bbb C)}(\Bbb P(\Bbb C)) = S^1$.
\end{rem}

\section{Floquet states} \label{sec:Floquet}
We let
\[
 H(W)
\] 
denote the space of Hermitian operators $W\to W$.  A {\it periodic Hamiltonian} is a smooth $1$-periodic map $h\: \Bbb R \to H(W)$.
Equivalently it is a smooth free loop $\Bbb R/\Bbb Z \to H(W)$.
The space of periodic Hamiltonians is therefore the diffeological space $LH(W)$. 

\begin{defn} For a periodic Hamiltonian $h$,
a {\it Floquet state} is a normalized quasi-periodic state $(\lambda,\psi)$ that satisfies the Schr\"odinger equation
\[
i\frac{d\psi}{dt} = h(t) \psi(t)\, .
\]
\end{defn}

\begin{rem} \label{rem:evolution} Let $U(W)$ denote the Lie group of linear isometries of $W$, i.e., unitary operators $W\to W$. Given a periodic Hamiltonian $h$,
one has an associated {\it evolution operator} $\cal U := \cal U\! (h)\: \Bbb R\times \Bbb R\to U(W)$ which
characterizes the time evolution of solutions to the Schrödinger equation:
\[
\psi(t) = \cal U\! (t,t_0)\psi(t_0)\, .
\]
Equivalently, the operator $\cal U\! (t,t_0)$  solves the initial value problem
\[
i\cal U'\! (t,t_0) = h(t) \cal U\! (t,t_0), \qquad \cal U\! (t_0,t_0) = I\, .
\]
The spectrum of  the operator $\cal U\!(t+1,t)$ does not depend on $t$.
Moreover, a Floquet state $(\lambda,\psi)$  solves  the equation
\[
 \cal U\! (t+1,t) \psi(t) = \lambda\psi(t)\, .
\]
\end{rem}

\begin{defn} \label{defn:good-periodic-hamiltonian}
We say that a periodic Hamiltonian $h\: \Bbb R \to H(W)$ is {\it good} if the spectrum of $\cal U\! (1,0)$ is non-degenerate, i.e.,
the spectrum $\Lambda\subset S^1$ of  $\cal U\!(1,0)$ has $n$ distinct eigenvalues. We call $\Lambda$ the set of {\it quasi-periods}.
Let 
\[
\cal H_F(W) \subset LH(W)
\] 
denote the  space of good periodic Hamiltonians with the induced diffeology.
\end{defn}

The monodromy map is smooth, and the simple-spectrum locus in $U(W)$ is open.
Thus $\cal H_F(W)$ is an open subset of the nuclear Fr\'echet space
$LH(W)=C^\infty(S^1,H(W))$. In particular, it is Hausdorff and metrizable,
hence paracompact, and is therefore nice.

\section{The model of \texorpdfstring{$\cal H_F(W)$}{H\_F(W)}} \label{sec:H_F-model}

Let $V(W)$ be the Stiefel manifold of orthonormal $n$-frames in $W$. Then the unitary group $U(n)$ acts freely on the right
of $V(W)$. Let $T = U(1)^{\times n} \subset U(n)$ be the maximal torus. Then $F(W) = V(W)/T$ is the complete flag manifold, and
the map $V(W) \to F(W)$ is a $T$-principal bundle. A point of $F(W)$ can be described as an ordered $n$-tuple of lines
$(L_1,\dots,L_n)$ which are pairwise orthogonal.
Orthogonally projecting the trivial connection on $F(W)\times W$ onto
the $n$ tautological line bundles induces a canonical $T$-connection on
$V(W)\to F(W)$.

Let $C_n(S^1)$ be the configuration space of finite subsets of $S^1$ having cardinality $n$.
Let
\[
\tilde C_n(S^1) \to C_n(S^1)
\]
be the universal cover.  The total space $\tilde C_n(S^1)$ is contractible
and the base has fundamental group $\Bbb Z$. In particular, one has a diffeomorphism
\[
\tilde C_n(S^1)/\Bbb Z \cong  C_n(S^1)\, .
\]
The following concrete model for $\tilde C_n(S^1)$ is used throughout the paper.

\begin{defn} Let $\tilde C_n(S^1)\subset \Bbb R^n$
be the subspace given by the set of points $(t_1,\dots t_n)\in \Bbb R^n$ that satisfy the inequalities
\[
t_1 < t_2 < \cdots < t_n < t_1+1\, .
\]
The covering projection $\tilde C_n(S^1) \to C_n(S^1)$ is defined by
\[
(t_1,\dots,t_n) \mapsto \{\exp(2\pi i t_1),\dots,\exp(2\pi i t_n)\} \, .
\] 
The action by the deck generator is $(t_1,\dots,t_n)\mapsto (t_2,\dots,t_n,t_1+1)$.
\end{defn}

Our model of $\cal H_F(W)$ is constructed from a pair of cartesian squares
\begin{equation}\label{eqn:pair-cartesian}
\xymatrix{
V(W) \ar[r]^{/T}   & F(W) \\
\tilde{\cal H}_V(W) \ar[r]^{/T}  \ar[u] &  \tilde{\cal H}_{F}(W) \ar[u] \ar[r]^(.45){/\Bbb Z} \ar[d] &\cal H_{F}(W)  \ar[d] \\
& \tilde C_n(S^1) \ar[r]_(.45){/\Bbb Z} & C_n(S^1)
}
\end{equation}
in which each horizontal map is identified with the quotient of a free action, either by the $n$-torus $T$ or by the integers $\Bbb Z$.
The horizontal maps in the middle of the diagram are locally trivial diffeological principal bundles, since they are base changes along
smooth maps of locally trivial smooth principal bundles of smooth manifolds. Then the composition of the middle arrows of the diagram
is a locally trivial diffeological principal $G$-bundle
\[
\tilde{\cal H}_V(W) \to \cal H_{F}(W) \, .
\]
Explicitly, a point of $\tilde{\cal H}_{F}(W)$ is a pair
\[
(h,\bar t)
\]
in which $h\in \cal H_F(W)$ and $\bar t = (t_1,\dots,t_n)$ is such that the underlying unordered set of the $n$-tuple 
$\lambda := (\exp(2\pi i t_1),\dots,\exp(2\pi i t_n))$ is the set of quasi-periods of $h$.
A point of $\tilde{\cal H}_V(W)$ consists of a triple
\[
(h,\bar t, \bold e{\, \smallbullet})\, ,
\]
in which $(h,\bar t)\in\tilde{\cal H}_{F}(W)$ and
$\bold e{\,\smallbullet}=(\bold e_1,\dots,\bold e_n)\in V(W)$ is an orthonormal frame satisfying
\[
\mathcal U_h(1,0)\bold e_j=\exp(2\pi i t_j)\bold e_j,
\qquad j=1,\dots,n,
\]
where $\mathcal U_h$ is the evolution operator associated with $h$.

We equip $\tilde{\cal H}_{F}(W) \subset \cal H_F(W) \times \tilde C_n(S^1)$ and
$\tilde{\cal H}_V(W) \subset \tilde{\cal H}_{F}(W) \times V(W)$ with the subdiffeology.
Let 
\[
G = T\rtimes \Bbb Z
\] be the semi-direct product of $\Bbb Z$ acting on $T$ 
in which the generator $1\in \Bbb Z$ acts by the cyclic permutation
\[
(u_1,\dots,u_n)\mapsto(u_n,u_1,\dots,u_{n-1})\, .
\]
Then $G$ acts on $\tilde {\cal H}_V(W)$. The  quotient by $T \subset G$ and then $H =G/T$ yields the
horizontal maps in the middle of diagram \eqref{eqn:pair-cartesian}.
Then applying Proposition \ref{prop:composite-bundle}, we arrive at the following statement.

\begin{prop} The map $\tilde{\cal H}_V(W) \to \cal H_F(W)$ is a locally trivial diffeological principal $G$-bundle.
In particular, there 
are diffeomorphisms
\[
\tilde{\cal H}_V(W)/G \cong \tilde{\cal H}_F(W)/\Bbb Z \cong \cal H_F(W)\, .
\]
\end{prop}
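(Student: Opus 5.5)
The plan is to verify the hypotheses of Proposition \ref{prop:composite-bundle} for the factorization
\[
\tilde{\cal H}_V(W) @>p>> \tilde{\cal H}_F(W) @>q>> \cal H_F(W)
\]
with $N=T$ and $H=\Bbb Z$, and then read off the diffeomorphisms from the double quotient formula \eqref{eqn:double-quotient-formula}.

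\textbf{The map $q$.} The square with $\tilde C_n(S^1)\to C_n(S^1)$ is cartesian by definition of $\tilde{\cal H}_F(W)$, since it carries the subdiffeology of $\cal H_F(W)\times\tilde C_n(S^1)$. So I only need the map $\cal H_F(W)\to C_n(S^1)$, $h\mapsto$ (quasi-periods of $h$), to be smooth. Given a plot $U\to\cal H_F(W)$, the monodromy $\cal U_h(1,0)$ depends smoothly on $x\in U$ by smooth dependence of ODE solutions on parameters. Simple eigenvalues of a smoothly varying unitary depend smoothly on parameters, locally by the implicit function theorem applied to the characteristic polynomial. Hence the unordered eigenvalue set gives a smooth map to $C_n(S^1)$. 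The universal cover is a locally trivial principal $\Bbb Z$-bundle of manifolds, so the local-triviality discussion preceding the proposition shows that $q$ is a locally trivial diffeological principal $\Bbb Z$-bundle.

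\textbf{The map $p$.} Define $\tilde{\cal H}_F(W)\to F(W)$ by $(h,\bar t)\mapsto(L_1,\dots,L_n)$, where $L_j=\ker(\cal U_h(1,0)-e^{2\pi i t_j})$. These lines are pairwise orthogonal because the monodromy is unitary with simple spectrum. Smoothness of this map is the main obstacle. I would write the spectral projection as a Riesz integral
\[
P_j=\frac{1}{2\pi i}\oint_{\gamma_j}(z-\cal U_h(1,0))^{-1}\,dz
\]
over a small circle around $e^{2\pi i t_j}$, which is locally constant in the plot parameter by the uniform separation of the eigenvalues. This makes $P_j$ smooth along plots, and $F(W)$ embeds smoothly into tuples of rank-one projections. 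The square with $V(W)\to F(W)$ is then cartesian: a frame lies over $(L_j)$ exactly when $\cal U_h(1,0)\bold e_j=e^{2\pi i t_j}\bold e_j$, and the diffeology on $\tilde{\cal H}_V(W)$ is the subdiffeology. So $p$ is a locally trivial principal $T$-bundle.

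\textbf{Compatibility and conclusion.} It remains to check that these structures come from a single right action of $G=T\rtimes\Bbb Z$. The generator acts by
\[
(h,\bar t,\bold e_\bullet)\mapsto\bigl(h,(t_2,\dots,t_n,t_1+1),(\bold e_2,\dots,\bold e_n,\bold e_1)\bigr),
\]
which preserves the eigen-condition since $e^{2\pi i(t_1+1)}=e^{2\pi i t_1}$. Conjugating the coordinatewise torus action by this map gives the stated cyclic permutation; I would check the direction of the permutation, which must match the convention for $G$. With this, $\tilde{\cal H}_V(W)/T=\tilde{\cal H}_F(W)$ and $\tilde{\cal H}_F(W)/\Bbb Z=\cal H_F(W)$ as quotient diffeologies, because principal bundle projections are subductions. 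Proposition \ref{prop:composite-bundle} then gives that $q\circ p$ is a locally trivial principal $G$-bundle. Finally, \eqref{eqn:double-quotient-formula} with $N=T$ gives the chain $\tilde{\cal H}_V(W)/G\cong(\tilde{\cal H}_V(W)/T)/\Bbb Z\cong\tilde{\cal H}_F(W)/\Bbb Z\cong\cal H_F(W)$.
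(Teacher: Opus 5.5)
Your proposal is correct and follows the paper's argument. You base-change the locally trivial principal bundles $V(W)\to F(W)$ and $\tilde C_n(S^1)\to C_n(S^1)$ along the two cartesian squares of \eqref{eqn:pair-cartesian}, apply Proposition \ref{prop:composite-bundle}, and read off the quotients from \eqref{eqn:double-quotient-formula}; you also supply the smoothness of the maps to $C_n(S^1)$ and $F(W)$, which the paper leaves implicit. The permutation direction you flagged checks out: your formula gives $\delta(x\cdot u)=\delta(x)\cdot(u_2,\dots,u_n,u_1)$, so conjugation by the generator is $u\mapsto(u_n,u_1,\dots,u_{n-1})$, which matches the paper's convention for $G$.
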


We now generalize the notion of quasi-periodicity to actions of $T$. Let $E \to M$ be a smooth principal $T$-bundle.
Let $E_n(S^1) \subset (S^1)^n=T$ be the configuration space given by the set of ordered $n$-tuples
$(\lambda_1,\dots,\lambda_n) \in T$ such that
$\lambda_i \ne \lambda_j$ for $i\ne j$.  

\begin{defn} A smooth map $f\: \Bbb R \to E$ is {\it quasi-periodic} (with respect to the action of  $T$) if there is a $\lambda \in T$ 
such that $f(t+1) = f(t)\lambda$ for all $t\in \Bbb R$. In addition, we say that $f$ is {\it good} if $\lambda \in E_n(S^1)$.
\end{defn}

\begin{defn} \label{defn:q}
Let $\tilde{\cal V}(W)$ be the diffeological space whose points are pairs
\[
(\gamma,\bar t)\, ,
\]
in which $\gamma=(\gamma_1,\dots,\gamma_n)\:\Bbb R\to V(W)$ is good quasi-periodic and
$\bar t=(t_1,\dots,t_n)\in\tilde C_n(S^1)$ satisfies
\[
\gamma_j(s+1)=\exp(2\pi i t_j)\gamma_j(s),
\qquad s\in\Bbb R,\quad j=1,\dots,n.
\]
We equip this space with the subdiffeology inherited from
\[
C^\infty(\Bbb R,V(W))\times\tilde C_n(S^1)\, ,
\]
where the mapping space has the functional diffeology.
\end{defn}

The action by $G $ on $\tilde{\cal V}(W)$ is as follows: 
Let $(\gamma,\bar t) \in \tilde {\cal V}(W)$ be a point.
If $u\in T$, then we set $(\gamma,\bar t)\cdot u =  (\gamma\cdot u, \bar t)$, where
the action is pointwise in the first factor.
The action by the generator of $\Bbb Z$ is induced by
cyclically permuting the factors of $V(W)$ and acts by the deck transformation on the second factor:
i.e., if $\gamma = (\gamma_1,\dots,\gamma_n)$ and $\bar t = (t_1,\dots,t_n)$,
then $(\gamma,\bar t) \mapsto ((\gamma_2,\dots, \gamma_n,\gamma_1), (t_2,\dots,t_n,t_1+ 1))$.

\begin{prop} \label{prop:v-h} There is an equivariant diffeomorphism
\[
\tilde {\cal V}(W) \cong \tilde{\cal H}_V(W)\, .
\]
\end{prop}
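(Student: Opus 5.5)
The plan is to write down explicit mutually inverse maps at the level of points, check that both send plots to plots, and verify $G$-equivariance.

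Forward map $\Phi\colon\tilde{\cal H}_V(W)\to\tilde{\cal V}(W)$. Given $(h,\bar t,\bold e{\,\smallbullet})$, let $\mathcal U_h$ be the evolution operator and set
\[
\gamma_j(s):=\mathcal U_h(s,0)\bold e_j,\qquad \Phi(h,\bar t,\bold e{\,\smallbullet}):=(\gamma,\bar t).
\]
Since $\mathcal U_h(s,0)$ is unitary, $\gamma(s)$ is an orthonormal frame. Periodicity of $h$ gives $\mathcal U_h(s+1,1)=\mathcal U_h(s,0)$, so
\[
\gamma_j(s+1)=\mathcal U_h(s,0)\,\mathcal U_h(1,0)\bold e_j=\exp(2\pi i t_j)\gamma_j(s).
\]
Because $\bar t\in\tilde C_n(S^1)$, the numbers $\exp(2\pi i t_j)$ are distinct, so $\gamma$ is good quasi-periodic.

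Inverse map $\Psi$. Given $(\gamma,\bar t)$, let $g(s)\in U(W)$ be the unitary sending a fixed frame $\gamma(0)$ to $\gamma(s)$, i.e. $g(s)=\sum_j \gamma_j(s)\langle\,\cdot\,,\gamma_j(0)\rangle$. Set
\[
h(s):=i\,g'(s)g(s)^{-1},\qquad \Psi(\gamma,\bar t):=(h,\bar t,\gamma(0)).
\]
\begin{enumerate}[(i).]
\item $h$ is Hermitian because $g$ is unitary, so $g'g^{-1}$ is skew-Hermitian.
\item $g(s+1)=g(s)D$, where $D=g(1)$ is diagonal in the frame $\gamma(0)$ with entries $\exp(2\pi i t_j)$. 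Hence $g'(s+1)g(s+1)^{-1}=g'(s)g(s)^{-1}$, so $h$ is $1$-periodic.
\item By construction $\mathcal U_h(s,0)=g(s)$, so $\mathcal U_h(1,0)\bold e_j=\exp(2\pi i t_j)\bold e_j$ with $\bold e{\,\smallbullet}=\gamma(0)$. The spectrum is simple because $\bar t\in\tilde C_n(S^1)$, so $h$ is good.
\end{enumerate}
Uniqueness of solutions of $i\,\mathcal U'=h\,\mathcal U$ gives $\Psi\circ\Phi=\mathrm{id}$. The identity $\Phi\circ\Psi=\mathrm{id}$ holds by construction.

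Equivariance. For $u\in T$, right multiplication commutes with $\mathcal U_h(s,0)$ acting on the left, so $\Phi$ is $T$-equivariant. The $\Bbb Z$-generator acts on $\tilde{\cal H}_V(W)$ by cyclically permuting the frame and applying the deck transformation to $\bar t$. On $\tilde{\cal V}(W)$ it acts by the same permutation of the $\gamma_j$ together with the same deck transformation, and $\Phi$ acts columnwise, so it commutes with this action. The semidirect product structure on $G$ matches on both sides, since the action of the generator on $T$ is the same cyclic permutation in both cases.

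Smoothness, which I expect to be the main technical step. A plot of $\tilde{\cal H}_V(W)$ is a triple $(h_x,\bar t_x,\bold e_x)$ for $x\in U$, where $(x,s)\mapsto h_x(s)$ is smooth and $\bar t_x$, $\bold e_x$ are smooth. Smooth dependence of ODE solutions on parameters makes $(x,s)\mapsto\mathcal U_{h_x}(s,0)$ smooth, so $(x,s)\mapsto\gamma_x(s)$ is smooth and $\Phi$ sends plots to plots.

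Conversely, a plot of $\tilde{\cal V}(W)$ is smooth $(x,s)\mapsto\gamma_x(s)$ together with smooth $\bar t_x$. Then $g_x(s)$ is given by the explicit formula above, polynomial in $\gamma_x(s)$ and $\gamma_x(0)$, hence smooth in $(x,s)$. Therefore $h_x=i\,g_x'g_x^{-1}$ is smooth in $(x,s)$, and $\Psi$ sends plots to plots.

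One point needs care: plots of the loop space $LH(W)$ must coincide with smooth maps on $U\times\Bbb R$ that are periodic in $s$. This holds because the functional diffeology on $C^\infty(S^1,H(W))$ agrees with the Fréchet one (Example~\ref{ex:Kihara-function-space} together with the fully faithful embedding of Fréchet manifolds into $\cal D$). The subdiffeologies on both sides then match, so $\Phi$ is a diffeomorphism.
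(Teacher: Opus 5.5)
Your proof is correct and is essentially the paper's argument. The paper's inverse solves the Schr\"odinger equation starting from the initial frame, which is your $\Phi$. Its forward map, after a moving-frame computation with the horizontal lift $\gamma^0$, reduces to the invariant formula $h=i\dot\gamma\gamma^{-1}$, and this equals your $ig'g^{-1}$ because $g(s)=\gamma(s)\gamma(0)^{-1}$.

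You skip the moving-frame detour and check smoothness more carefully than the paper does. There is one cosmetic slip. It is $\Phi\circ\Psi=\mathrm{id}$ that needs uniqueness, via $\mathcal U_h(s,0)=g(s)$. By contrast, $\Psi\circ\Phi=\mathrm{id}$ is immediate from the definition of $\mathcal U_h$.
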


\begin{proof} Let $(\gamma,\bar t)\in \tilde{\cal V}(W)$ be given. Then $\gamma = (\gamma_1,\dots,\gamma_n)$, where
$\gamma_i\: \Bbb R \to S(W)$ is quasi-periodic.
One has an induced map $\gamma^\sharp\: \Bbb R/\Bbb Z \to F(W)$.
Using the canonical connection, let $\gamma^0\:\Bbb R\to V(W)$ be the
quasi-periodic horizontal lift with $\gamma^0(0)=\gamma(0)$. Then
\[
\gamma(t) = \gamma^0(t)f(t)\, ,
\]
where $f(t) \: \Bbb R\to T$ is quasi-periodic and satisfies $f(0) = (1,\dots,1)$. Let $f_i$ be the $i$th component of $f$.
Then $f = (f_1,\dots,f_n)$, $f_i\: \Bbb R \to U(1)$ is quasi-periodic,
and  $f_i(0) = 1$. Set $\bold e_i(t) := \gamma_i^0(t)$. Then $(\bold e_1(t),\dots,\bold e_n(t))$
is a moving basis for $W$.

By the Leibniz rule, we have
\begin{align*}
\tfrac{d}{dt}(f_j(t){\bold e}_j(t))&=  \dot{f_j}(t){\bold e}_j(t) + f_j(t){\dot {\bold  e}}_j(t) \, ,\\
           &= (\dot{f}_j(t) + if_j(t)\alpha_{jj}(t)) {\bold e}_j(t) + i\sum_{k \in I \setminus \{j\} } f_j(t)\alpha_{jk}(t) {\bold e}_k(t) \, ,
\end{align*}
 for $j\in I:=\{1,\ldots,n\}$,
where ${\dot{\bold  e}}_j(t) := i\sum_{k=1}^n \alpha_{jk}(t) {\bold e}_k(t)$. 
Then
\begin{align*}
f_j(t)h({\bold e}_j(t)) &= h(f_j(t){\bold e}_j(t))  \, ,\\
& =i \tfrac{d}{dt}(f_j(t){\bold e}_j(t)) \, ,\\
&=  (i\dot{f}_j(t)- f_j(t) \alpha_{jj}(t)) {\bold e}_j(t) - \sum_{k\in I \setminus \{j\}} f_j(t)\alpha_{jk}(t) {\bold e}_k(t)\, .
\end{align*}
Dividing by $f_j(t)$, we obtain the expression
\[
h({\bold e}_j(t)) =  (i\tfrac{\dot{f}_j(t)}{f_j(t)}-  \alpha_{jj}(t)) {\bold e}_j(t) - \sum_{k \in I \setminus \{j\} } \alpha_{jk}(t) {\bold e}_k(t)\, .
\]
Set $\rho_j(t) := i\frac{\dot{f}_j(t)}{f_j(t)}- \alpha_{jj}(t)$. 
Hence, the matrix entries of $h$ in the moving basis, with $h(\bold e_k)=\sum_j h_{jk}\bold e_j$, are
\begin{equation} \label{eqn:hjk}
h_{jk}(t) = \begin{cases}
\rho_j(t), & j= k, \\
-\alpha_{kj}(t), & j \ne k\, ,
\end{cases}
\end{equation}
Invariantly, $h(t)=i\dot\gamma(t)\gamma(t)^{-1}$, where $\gamma(t)$ is regarded as an isometry $\Bbb C^n\to W$.
Thus $h$ is Hermitian and $1$-periodic, and its evolution satisfies $\mathcal U_h(t,0)\gamma(0)=\gamma(t)$.
Then we define a map $\Phi\: \tilde{\cal V}(W)  \to \tilde{\cal H}_V(W)$ by
\[
\Phi(\gamma,\bar t) :=  (h,\bar t, \bold e{\, \smallbullet})\, ,
\]
where $h$ is as in \eqref{eqn:hjk},  and  $\bold e{\, \smallbullet}$ is the orthonormal frame
$(\bold e_1(0),\dots,\bold e_n(0))$.  It is  clear that $\Phi$ is $G$-equivariant. 
We will construct its inverse.

Let $(h,\bar t,e_\bullet)\in\tilde{\mathcal H}_V(W)$, with $\bar t=(t_1,\ldots,t_n)$.
For each $j$, let $\gamma_j(t)$ solve
\[
i\dot\gamma_j(t)=h(t)\gamma_j(t),\qquad \gamma_j(0)=e_j.
\]
Then $\gamma(t)=(\gamma_1(t),\ldots,\gamma_n(t))$ is an orthonormal frame for every $t$.
Since $e_j$ is a monodromy eigenvector with eigenvalue $\lambda_j=e^{2\pi i t_j}$, periodicity of $h$ gives
\[
\gamma_j(t+1)=\lambda_j\gamma_j(t).
\]
Thus $\gamma$ is a good quasi-periodic frame, and the inverse is
\[
\Psi\:\widetilde{\mathcal H}_V(W)\longrightarrow\widetilde{\mathcal V}(W),
\qquad (h,\bar t,e_\bullet)\longmapsto(\gamma,\bar t).
\]
Uniqueness of solutions to the Schrödinger equation implies that
$\Phi$ and $\Psi$ are inverse to one another. The constructions are
smooth and $G$-equivariant. Therefore $\Phi$ is a $G$-equivariant
diffeomorphism.
\end{proof}

\section{The diffeomorphism type of \texorpdfstring{$\cal H_F(W)$}{H\_F(W)}} \label{sec:diffeo-type-H}
The purpose of this section is to identify the diffeomorphism type
of $\cal H_F(W)$. Recall by Proposition \ref{prop:v-h} that there is a diffeomorphism
\[
\tilde{\cal V}(W)/G  \cong \cal H_F(W)\, .
\]
The following result identifies the diffeomorphism type of $\tilde{\cal V}(W)$. 

\begin{prop} \label{prop:diffeo-tilde-Q} There is a canonical $G $-equivariant diffeomorphism  
\[
\tilde{\cal V}(W) \cong \tilde C_n(S^1) \times LV(W)  \, .
\]
\end{prop}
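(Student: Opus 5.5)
We construct the diffeomorphism directly by ``untwisting'' the quasi-period with the explicit phase determined by $\bar t$. Given $(\gamma,\bar t)\in\tilde{\cal V}(W)$, set
\[
\Theta(\gamma,\bar t) := \bigl(\bar t,\ \beta\bigr),\qquad \beta(s) := \gamma(s)\cdot \bigl(e^{-2\pi i t_1 s},\dots,e^{-2\pi i t_n s}\bigr),
\]
that is, $\beta_j(s)=e^{-2\pi i t_j s}\gamma_j(s)$. The defining relation $\gamma_j(s+1)=e^{2\pi i t_j}\gamma_j(s)$ gives $\beta_j(s+1)=\beta_j(s)$. Hence $\beta$ is a smooth $1$-periodic map $\Bbb R\to V(W)$, i.e.\ a point of $LV(W)$. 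The inverse is $(\bar t,\beta)\mapsto(\gamma,\bar t)$ with $\gamma_j(s)=e^{2\pi i t_j s}\beta_j(s)$. This $\gamma$ is quasi-periodic with quasi-period $(e^{2\pi i t_1},\dots,e^{2\pi i t_n})$. That quasi-period is good because $\bar t\in\tilde C_n(S^1)$ forces the $e^{2\pi i t_j}$ to be distinct, since $t_1<\dots<t_n<t_1+1$. Note that the choice of logarithms $t_j$ is exactly the extra datum $\bar t$ carried by points of $\tilde{\cal V}(W)$, and this is why the product is with $\tilde C_n(S^1)$ rather than $C_n(S^1)$.

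Smoothness in families is the next step. A plot of $\tilde{\cal V}(W)$ is a smooth $F\colon U\times\Bbb R\to V(W)$ together with a smooth $\bar t\colon U\to\tilde C_n(S^1)$ satisfying the quasi-periodicity relation pointwise. Multiplying by the smooth function $(x,s)\mapsto (e^{-2\pi i t_j(x)s})_j$ gives a smooth map $U\times\Bbb R\to V(W)$ that is $1$-periodic in $s$. It therefore descends to a smooth map $U\times\Bbb R/\Bbb Z\to V(W)$, which is a plot of $LV(W)$ in its functional diffeology. This agrees with the Fréchet structure by Example~\ref{ex:Kihara-function-space} and Losik's theorem. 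The inverse direction is symmetric. So $\Theta$ and $\Theta^{-1}$ are both smooth, and $\Theta$ is a diffeomorphism.

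Finally comes equivariance, which is the only step requiring care. $G$ acts on $\tilde C_n(S^1)\times LV(W)$ as follows. $u\in T$ acts by pointwise right multiplication on the loop. The generator of $\Bbb Z$ acts by the deck transformation on $\bar t$ together with the cyclic permutation $(\beta_1,\dots,\beta_n)\mapsto(\beta_2,\dots,\beta_n,\beta_1')$, where the last entry must be checked. The $T$-equivariance of $\Theta$ is immediate because $T$ is abelian. For the generator, $(\gamma,\bar t)\mapsto((\gamma_2,\dots,\gamma_n,\gamma_1),(t_2,\dots,t_n,t_1+1))$, and untwisting the new last coordinate yields $e^{-2\pi i(t_1+1)s}\gamma_1(s)=e^{-2\pi i s}\beta_1(s)$. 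So the induced action on $LV(W)$ is cyclic permutation followed by multiplying the last coordinate by the loop $s\mapsto e^{-2\pi i s}$. This is a genuine smooth action, with inverse given by the inverse operations, and it commutes with the semidirect product relations. One must verify, using the permutation convention for $u\mapsto(u_n,u_1,\dots,u_{n-1})$, that conjugating $T$ by the generator matches on both sides. I expect this bookkeeping to be the main obstacle. The twisting loop $e^{-2\pi i s}$ is what makes the action on $LV(W)$ nontrivial, and it is what the word ``canonical'' in the statement must account for. Once the $G$-action on the right-hand side is defined by transport along $\Theta$, equivariance holds by construction, and one only checks that the resulting formula is the one displayed.
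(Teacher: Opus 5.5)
Your proposal is correct and follows the paper's proof: the paper writes the same untwisting map as $\gamma(s)\circ D_{\bar t}(s)$, with $D_{\bar t}(s)=\operatorname{diag}(e^{2\pi i t_1 s},\dots,e^{2\pi i t_n s})$, uses $D_{-\bar t}$ for the inverse, and leaves equivariance to the reader. Your plot-level smoothness argument and your derivation of the twisted generator action (cyclic permutation, with the last entry multiplied by $e^{-2\pi i s}$) supply exactly the details the paper omits, and they agree with the Remark following the proposition.
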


\begin{rem} The action of $G$ on $\tilde C_n(S^1)\times LV(W)$ is as follows.
The torus $T$ acts trivially on $\tilde C_n(S^1)$ and pointwise on the right of $LV(W)$.
Write $\delta\bar t=(t_2,\ldots,t_n,t_1+1)$ for the deck generator.
On the product, the generator of $\Bbb Z$ acts by
\[
(\bar t,\gamma)\longmapsto
\left(\delta\bar t,\ s\longmapsto
\bigl(\gamma_2(s),\ldots,\gamma_n(s),e^{-2\pi is}\gamma_1(s)\bigr)\right).
\]
The last phase factor compensates for the increase of the last lifted quasi-period by one.
\end{rem}

\begin{proof}[Proof of Proposition \ref{prop:diffeo-tilde-Q}] Identify the Stiefel manifold $V(W)$  with 
the space of linear isometries $\Bbb C^n\to W$.
% and the unitary group $U(n)$ is identified with the space of linear isometries of $\Bbb C^n$. Let
% $E\: \Bbb C^n \to W$ be a fixed isometry corresponding to  an orthonormal basis of $W$.

% Then one has a  diffeomorphism
% \begin{align}\label{eqn:v-u}
% \rho\: U(n) &@> \cong >> V(W) \\
 % A & \mapsto E\circ  A\, . \nonumber
% \end{align}
Then a smooth map $\Psi\: \tilde C_n(S^1) \times LV(W)  \to \tilde{\cal V}(W)$ is defined 
by
\[
\Psi(\bar t,\gamma) = \bigl(s\mapsto\gamma(s)\circ D_{\bar t}(s),\bar t\bigr)\, ,
\]
where
$\bar t = (t_1,\dots,t_n) \in \tilde C_n(S^1)$ and
$D_{\bar t}(s)\: \Bbb C^n \to \Bbb C^n$ is the linear transformation corresponding to the diagonal matrix
\[
D_{\bar t}(s)=\operatorname{diag}\bigl(e^{2\pi i t_1s},\dots,e^{2\pi i t_ns}\bigr)\, .
\]
The inverse of $\Psi$ is given by
\[
\Psi^{-1}(\psi,\bar t) := \bigl(\bar t,s\mapsto\psi(s)\circ D_{-\bar t}(s)\bigr)\, . 
\]
It follows that $\Psi$ is a diffeomorphism.
We leave the verification of equivariance to the reader.
\end{proof}

\begin{cor} \label{cor:diffeo-tilde-Q} There is a canonical diffeomorphism
\[
\tilde{\cal V}(W) \cong \tilde C_n(S^1) \times \Omega U(W) \times V(W)\, .
\]
\end{cor}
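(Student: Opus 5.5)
By Proposition \ref{prop:diffeo-tilde-Q} there is a canonical diffeomorphism $\tilde{\cal V}(W)\cong \tilde C_n(S^1)\times LV(W)$. It therefore suffices to produce a canonical diffeomorphism
\[
LV(W)\cong \Omega U(W)\times V(W).
\]
We then take the product with the identity on $\tilde C_n(S^1)$.

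The plan is to use the free and transitive left action of $U(W)$ on $V(W)$. Identify $V(W)$ with the space of linear isometries $\Bbb C^n\to W$, as in the proof of Proposition \ref{prop:diffeo-tilde-Q}. Since $n=\dim W$, these isometries are unitary isomorphisms, so $V(W)$ is a $U(W)$-torsor under post-composition. Define
\[
LV(W)\longrightarrow \Omega U(W)\times V(W),\qquad \gamma\longmapsto \bigl(s\mapsto \gamma(s)\circ\gamma(0)^{-1},\ \gamma(0)\bigr).
\]
The first component is a smooth loop in $U(W)$ based at the identity. Its inverse is $(\omega,e)\mapsto (s\mapsto \omega(s)\circ e)$. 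The two composites are visibly the identity, because $\gamma(s)\circ\gamma(0)^{-1}\circ\gamma(0)=\gamma(s)$ and $\omega(0)=I$.

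The remaining work is to check smoothness in the diffeological sense. A plot of $LV(W)$ is a smooth map $F\colon U\times S^1\to V(W)$. Its image under the map above is the pair consisting of $(x,s)\mapsto F(x,s)F(x,0)^{-1}$ and $x\mapsto F(x,0)$. Both are smooth, because composition and inversion are smooth operations in the finite-dimensional Lie group $U(W)$ and on $V(W)$. The same argument, run in reverse, shows that the inverse map takes plots to plots. Naturality under unitary isomorphisms of $W$ is clear from the formulas, and no choices are made, which justifies calling the diffeomorphism canonical.

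I expect no serious obstacle. The only point that needs care is that $V(W)$ consists of \emph{full} frames, so that $\gamma(0)^{-1}$ makes sense as a map $W\to\Bbb C^n$. The corollary asserts only a diffeomorphism, not $G$-equivariance, so the twisted $\Bbb Z$-action described in the remark after Proposition \ref{prop:diffeo-tilde-Q} need not be tracked here. We note for later use, without proving it, that under this splitting $T$ acts only on the $V(W)$ factor, by right multiplication.
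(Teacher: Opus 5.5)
Your proposal is correct and follows the paper's proof: the paper also combines Proposition~\ref{prop:diffeo-tilde-Q} with the splitting $\gamma\mapsto\bigl(s\mapsto\gamma(s)\gamma(0)^{-1},\gamma(0)\bigr)$ of $LV(W)$, whose inverse is $(K,E)\mapsto(s\mapsto K(s)E)$. Your plot-level smoothness check and your observation that $T$ acts only on the $V(W)$ factor are both consistent with the paper, which uses the latter in its description of the $G$-action right after the corollary.
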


\begin{proof} There is a canonical diffeomorphism
\[
LV(W)\longrightarrow\Omega U(W)\times V(W),
\qquad
\gamma\longmapsto\bigl(s\longmapsto\gamma(s)\gamma(0)^{-1},\gamma(0)\bigr).
\]
Here the composition $\gamma(s)\gamma(0)^{-1}$ is an operator on $W$.
The inverse sends $(K,E)$ to the loop $s\mapsto K(s)E$.
The result follows from Proposition \ref{prop:diffeo-tilde-Q}.
\end{proof}

We now identify the action of $G=T\rtimes\Bbb Z$ on
the right side of Corollary \ref{cor:diffeo-tilde-Q} making the diffeomorphism equivariant.
The subgroup $T$ acts trivially on $\tilde C_n(S^1)$ and $\Omega U(W)$,
and by the usual right action on $V(W)$.
For $E=(e_1,\ldots,e_n)$, write $rE=(e_2,\ldots,e_n,e_1)$ and $L_1=\Bbb C e_1$.
If $P_{L_1}$ denotes orthogonal projection onto $L_1$, the generator of $\Bbb Z$ acts by
\[
\delta(\bar t,K,E)
=\bigl(\delta\bar t,\ s\longmapsto K(s)e^{-2\pi isP_{L_1}},\ rE\bigr).
\]
The phase factor depends only on $L_1$, so this action is compatible with the torus action.
We leave it to the reader to verify the following.

\begin{add} \label{add:diffeo-tilde-Q} The above recipe provides a smooth action of $G$ on
$\tilde C_n(S^1)\times\Omega U(W)\times V(W)$ which makes
the diffeomorphism of Corollary \ref{cor:diffeo-tilde-Q} equivariant.
\end{add}

The next step is to determine the quotient of the right side of Corollary \ref{cor:diffeo-tilde-Q} by the action of $G $.
By the double quotient formula  \eqref{eqn:double-quotient-formula}
\[
\tilde {\cal V}(W)/G  \cong (\tilde {\cal V}(W)/T)/\Bbb Z \, .
\] 
By Corollary \ref{cor:diffeo-tilde-Q} and Addendum \ref{add:diffeo-tilde-Q}, there is a $\Bbb Z$-equivariant diffeomorphism
\[
\tilde{\cal V}(W)/T\cong
Y:=\tilde C_n(S^1)\times\Omega U(W)\times F(W),
\]
where $F(W)=V(W)/T$. Write
\[
\mathcal L=(L_1,\ldots,L_n),\qquad
r\mathcal L=(L_2,\ldots,L_n,L_1).
\]
The deck action is
\[
\delta(\bar t,K,\mathcal L)
=\bigl(\delta\bar t,\ s\longmapsto K(s)e^{-2\pi isP_{L_1}},\ r\mathcal L\bigr).
\]
Let
\[
w(K):=\operatorname{wind}(\det K)
=\frac{1}{2\pi i}\int_0^1\operatorname{tr}\bigl(K(s)^{-1}K'(s)\bigr)\,ds\in\Bbb Z.
\]
Since $\det(e^{-2\pi isP_{L_1}})=e^{-2\pi is}$, application of $\delta$ decreases $w(K)$ by one.
Thus every orbit has exactly one representative in
\[
Y_0:=\tilde C_n(S^1)\times\Omega_0U(W)\times F(W),
\]
where $\Omega_0U(W)=\{K\in\Omega U(W)\mid w(K)=0\}$.
More precisely, there is a $\Bbb Z$-equivariant diffeomorphism
\[
Y\longrightarrow Y_0\times\Bbb Z,
\qquad z=(\bar t,K,\mathcal L)\longmapsto
\bigl(\delta^{w(K)}z,-w(K)\bigr),
\]
with inverse $(z_0,m)\mapsto\delta^m z_0$.
Here $\Bbb Z$ acts trivially on $Y_0$ and by translation on the last factor.
Smoothness follows because $w$ is locally constant and the deck action is smooth.

For $K\in\Omega_0U(W)$, the determinant has a unique based lift
$f\in\Omega\Bbb R$ satisfying
\[
\det K(s)=e^{2\pi if(s)},\qquad f(0)=0.
\]
It is given by
\[
f(s)=\frac{1}{2\pi i}\int_0^s\operatorname{tr}\bigl(K(u)^{-1}K'(u)\bigr)\,du.
\]
Scalar multiplication gives a canonical diffeomorphism
\[
\Omega_0U(W)\longrightarrow\Omega SU(W)\times\Omega\Bbb R,
\qquad K\longmapsto\bigl(e^{-2\pi if/n}K,f\bigr),
\]
whose inverse is $(A,f)\mapsto e^{2\pi if/n}A$.
This construction requires no choice of basis for $W$.
Consequently, there is a $\Bbb Z$-equivariant diffeomorphism
\begin{equation} \label{eqn:Z-eq}
\tilde{\cal V}(W)/T\cong
\tilde C_n(S^1)\times\Omega SU(W)\times\Omega\Bbb R\times F(W)\times\Bbb Z,
\end{equation}
where $\Bbb Z$ acts by translation on the last factor and trivially on all other factors.

To complete the identification of $\cal H_F(W)$, we take the quotient by $\Bbb Z$ in \eqref{eqn:Z-eq}.
This is a special case of the following fact.

\begin{lem}[Untwisting the action] \label{lem:untwist} Let $X$ be a diffeological space equipped with the smooth right action of 
a discrete group $H$. Let $H$ act on itself by right translation. Give the product $X\times H$ the diagonal action. 
Let $X^t$ denote $X$ with the trivial $H$-action. Then there is
an $H$-equivariant diffeomorphism
\[
X \times H \cong X^t\times H\, .
\]
In particular, the quotient $(X\times H)/H$ is diffeomorphic to $X$.
\end{lem}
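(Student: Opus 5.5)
The plan is to write down the standard shearing map explicitly and check that it is a diffeomorphism intertwining the two actions. Write the action of $h\in H$ on $x\in X$ as $x\cdot h$. Define
\[
\Theta\: X\times H\to X^t\times H,\qquad \Theta(x,g) := (x\cdot g^{-1},\,g),
\]
with candidate inverse
\[
\Theta^{-1}(y,g) := (y\cdot g,\,g).
\]
These are mutually inverse on points because the action is a right action: $(x\cdot g^{-1})\cdot g = x$.

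First I will check equivariance. For $h\in H$ the diagonal action sends $(x,g)$ to $(x\cdot h,\,gh)$, and
\[
\Theta(x\cdot h,\,gh) = \bigl((x\cdot h)\cdot h^{-1}g^{-1},\,gh\bigr) = (x\cdot g^{-1},\,gh).
\]
This is exactly $\Theta(x,g)$ acted on by $h$ in $X^t\times H$, where $h$ fixes the first coordinate and right translates the second.

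Next I will verify smoothness, which is the only point needing any care. Since $H$ is discrete, a plot $U\to X\times H$ is a pair $(\phi,c)$ with $\phi$ a plot of $X$ and $c\:U\to H$ locally constant. On each open set where $c\equiv g$, the composite $\Theta\circ(\phi,c)$ equals $(\phi(\cdot)\cdot g^{-1},\,g)$. This is a plot, because the action map by a fixed element $g^{-1}$ is smooth, being the restriction of the smooth action $X\times H\to X$ to $X\times\{g^{-1}\}$. Axiom (ii) of a diffeology (locality) then shows that $\Theta\circ(\phi,c)$ is a plot on all of $U$. The same argument applies to $\Theta^{-1}$. Hence $\Theta$ is an $H$-equivariant diffeomorphism.

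Finally, for the quotient statement, $\Theta$ induces a diffeomorphism $(X\times H)/H\cong (X^t\times H)/H$, since both quotients carry the pushforward diffeology and $\Theta$ is an equivariant diffeomorphism. It remains to identify $(X^t\times H)/H\cong X$. The map $X^t\times H\to X$, $(y,g)\mapsto y$, is $H$-invariant and surjective. It is also a subduction, since any plot $\psi$ of $X$ lifts to the plot $(\psi,e)$. Its fibers are exactly the $H$-orbits. Therefore it descends to a bijection $(X^t\times H)/H\to X$, and because both sides carry the pushforward diffeology from $X^t\times H$, this bijection is a diffeomorphism. I expect no real obstacle here; the only subtlety is the locality argument above, which handles plots of $X\times H$ whose $H$-component is merely locally constant.
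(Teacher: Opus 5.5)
Your proposal is correct and uses the same map as the paper, $(x,g)\mapsto(x\cdot g^{-1},g)$, together with the same equivariance check. You also spell out two steps the paper leaves implicit: smoothness, via local constancy of the $H$-component of a plot and the locality axiom, and the identification $(X^t\times H)/H\cong X$ via the projection subduction.
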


\begin{proof} 
Let $\phi\: X\times H \to X^t \times H$ be the diffeomorphism
defined by $(x,h) \mapsto (xh^{-1},h)$. Then $\phi$ is $H$-equivariant.
\end{proof}

The preceding identifications yield the following result.

\begin{thm} \label{thm:diffeo-hamiltonians}
There is a preferred diffeomorphism, natural under unitary isomorphisms of $W$,
\[
\cal H_F(W)\cong\tilde C_n(S^1)\times\Omega SU(W)\times\Omega\Bbb R\times F(W).
\]
\end{thm}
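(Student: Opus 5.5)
The plan is to assemble the chain of identifications already established in this section. By Proposition~\ref{prop:v-h} and the preceding proposition, $\cal H_F(W)\cong \tilde{\cal H}_V(W)/G\cong\tilde{\cal V}(W)/G$. By the double quotient formula \eqref{eqn:double-quotient-formula} applied with $N=T$ and $H=\Bbb Z$, this equals $(\tilde{\cal V}(W)/T)/\Bbb Z$. Corollary~\ref{cor:diffeo-tilde-Q} and Addendum~\ref{add:diffeo-tilde-Q} give a $\Bbb Z$-equivariant diffeomorphism $\tilde{\cal V}(W)/T\cong Y=\tilde C_n(S^1)\times\Omega U(W)\times F(W)$. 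To justify this, note that quotienting a product by $T$, where $T$ acts only on the $V(W)$ factor, gives the product with $F(W)$. This holds because $V(W)\to F(W)$ is a locally trivial principal bundle, so $\mathrm{id}\times(V(W)\to F(W))$ is a subduction whose fibres are exactly the $T$-orbits.

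Next I invoke the equivariant splitting $Y\cong Y_0\times\Bbb Z$ constructed above via the winding number $w(K)$. There is one point to check here: $\delta$ lowers $w$ by exactly one, since $\det(K(s)e^{-2\pi i sP_{L_1}})=\det K(s)\,e^{-2\pi i s}$. Also, $\delta^{w(K)}z$ is a smooth function of $z$, because $w$ is locally constant on plots. I then compose with the diffeomorphism $\Omega_0U(W)\cong\Omega SU(W)\times\Omega\Bbb R$ to get \eqref{eqn:Z-eq}. Lemma~\ref{lem:untwist}, applied with $X=\tilde C_n(S^1)\times\Omega SU(W)\times\Omega\Bbb R\times F(W)$ and the trivial action, shows that the $\Bbb Z$-quotient is $X$ itself. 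Composing all these maps yields the diffeomorphism.

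Naturality under a unitary isomorphism $g\colon W\to W'$ is checked factor by factor. The map $g$ acts on $\cal H_F$ by $h\mapsto ghg^{-1}$, on frames by $E\mapsto gE$, on loops by conjugation $K\mapsto gKg^{-1}$, and on flags by $\mathcal L\mapsto g\mathcal L$; it fixes $\bar t$ and $f$, since conjugation preserves the determinant. Each step in the chain uses only such intrinsic operations, namely the Schr\"odinger solution, $\gamma(s)\gamma(0)^{-1}$, the determinant, the projection $P_{L_1}$, and scalar multiplication by $e^{2\pi i f/n}$. No basis is chosen, so every map commutes with $g$. The word ``preferred'' is justified because the normal form $Y_0$ singles out the orbit representative with $w(K)=0$, which fixes the lift to $\tilde C_n(S^1)$.

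The main obstacle is the smoothness bookkeeping in the quotient steps. One must check that the identifications through $T$- and $\Bbb Z$-quotients are diffeomorphisms of diffeological spaces, not merely bijections. For the $T$-step I will use that quotient diffeologies are pushforwards and that local sections of $V(W)\to F(W)$ lift plots locally. For the $\Bbb Z$-step, Lemma~\ref{lem:untwist} reduces everything to the trivial action, where the quotient of $X\times\Bbb Z$ by translation is evidently $X$, with section $x\mapsto(x,0)$. Finally, the Fr\'echet statement in Theorem~\ref{bigthm:diffeo-hamiltonians} follows from the fact that each factor is a nice Fr\'echet manifold, by Example~\ref{ex:Kihara-function-space} for the loop spaces, and from the full faithfulness of $\cal F\to\cal D$.
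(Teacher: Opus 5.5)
Your proposal is correct and follows the paper's own argument. You identify $\cal H_F(W)$ with $(\tilde{\cal V}(W)/T)/\Bbb Z$ via the double quotient formula and pass to $Y=\tilde C_n(S^1)\times\Omega U(W)\times F(W)$. You then split off $\Bbb Z$ using the winding number of $\det K$, apply $\Omega_0U(W)\cong\Omega SU(W)\times\Omega\Bbb R$, and remove the translation factor by Lemma~\ref{lem:untwist}, with naturality following because every step is basis-free; your subduction argument for the $T$-quotient just makes explicit a step the paper states without comment.
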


\begin{proof}
By the double quotient formula \eqref{eqn:double-quotient-formula},
\[
\cal H_F(W)\cong\tilde{\cal V}(W)/G
\cong(\tilde{\cal V}(W)/T)/\Bbb Z.
\]
Under the equivariant diffeomorphism \eqref{eqn:Z-eq}, the group $\Bbb Z$ acts by translation on the last factor and trivially on all other factors. Taking the quotient therefore removes the last factor and gives the asserted product, as in Lemma~\ref{lem:untwist}. All maps used above are canonically specified and commute with unitary isomorphisms of $W$, so the resulting diffeomorphism is preferred and natural.
\end{proof}

\begin{construct}[Explicit description of the diffeomorphism]
\label{prop:intrinsic-product-decomposition}
With the period fixed to be one and the time origin zero, we describe explicitly the diffeomorphism of Theorem~\ref{thm:diffeo-hamiltonians} and its inverse.

For $h\in\mathcal H_F(W)$, write
\[
\mathcal U_h(t):=\mathcal U_h(t,0),
\qquad
M_h:=\mathcal U_h(1),
\qquad
\sigma_h:=-\frac{1}{2\pi}
                  \int_0^1\operatorname{tr}h(u)\,du.
\]
The evolution equation gives
\[
\det\mathcal U_h(t)
=
\exp\left(-i\int_0^t\operatorname{tr}h(u)\,du\right),
\]
so $\det M_h=e^{2\pi i\sigma_h}$.

There is a unique tuple
\[
\bar x_h=(x_1,\ldots,x_n)\in\widetilde{C}_n(S^1)
\]
such that
\[
\{e^{2\pi i x_1},\ldots,e^{2\pi i x_n}\}
   =\operatorname{spec}(M_h),
\qquad
\sum_{j=1}^n x_j=\sigma_h.
\]
Indeed, for any lift of the spectrum, the difference
$\sum_jx_j-\sigma_h$ is an integer. The deck generator
\[
(x_1,\ldots,x_n)\longmapsto
(x_2,\ldots,x_n,x_1+1)
\]
increases the sum by one. Thus the trace condition selects
exactly one lift.

Let $L_j$ be the eigenspace of $M_h$ with eigenvalue
$e^{2\pi i x_j}$, and let $P_{L_j}$ be its orthogonal
projection. Define
\[
\mathcal L_h:=(L_1,\ldots,L_n)\in F(W),
\qquad
B_h:=\sum_{j=1}^n x_jP_{L_j}.
\]
Then $B_h$ is Hermitian,
\[
M_h=e^{2\pi iB_h},
\qquad
\operatorname{tr}B_h=\sigma_h.
\]
Consequently,
\[
K_h(t):=\mathcal U_h(t)e^{-2\pi itB_h}
\]
is a smooth based loop in $U(W)$. Its periodicity follows
from $\mathcal U_h(t+1)=\mathcal U_h(t)M_h$.

Define
\[
f_h(t):=
-\frac{1}{2\pi}\int_0^t\operatorname{tr}h(u)\,du
-t\sigma_h.
\]
This function satisfies
\[
f_h(0)=0,\qquad f_h(t+1)=f_h(t),
\qquad
\det K_h(t)=e^{2\pi if_h(t)}.
\]
Thus $f_h\in\Omega\mathbb R$, and the determinant of $K_h$
has winding number zero. Set
\[
A_h(t):=e^{-2\pi if_h(t)/n}K_h(t).
\]
Then $A_h\in\Omega SU(W)$. The diffeomorphism is
\[
\Phi_W(h):=
\bigl(\bar x_h,A_h,f_h,\mathcal L_h\bigr).
\]
To see how this formula arises from the preceding argument, start with any spectral lift
$\bar t=(t_1,\ldots,t_n)$ and its associated based loop $K$. The determinant identity above gives
\[
w(K)=\sigma_h-\sum_{j=1}^n t_j.
\]
Applying $\delta^{w(K)}$ changes the sum of the lifted quasi-periods to $\sigma_h$.
Thus the zero-winding representative is precisely $(\bar x_h,K_h,\mathcal L_h)$.
The scalar determinant splitting then gives the factors $A_h$ and $f_h$.

To construct the inverse of $\Phi_W$, take
\[
\bigl(\bar x,A,f,\mathcal L\bigr)
\in
\widetilde{C}_n(S^1)\times
\Omega SU(W)\times
\Omega\mathbb R\times F(W),
\]
where $\mathcal L=(L_1,\ldots,L_n)$, and put
\[
B:=\sum_{j=1}^n x_jP_{L_j},
\qquad
K(t):=e^{2\pi if(t)/n}A(t),
\qquad
\mathcal U(t):=K(t)e^{2\pi itB}.
\]
Define
\[
h(t):=i\mathcal U'(t)\mathcal U(t)^{-1}
     =iK'(t)K(t)^{-1}-2\pi K(t)BK(t)^{-1}.
\]
This is a smooth, $1$-periodic Hermitian family.
Its evolution operator based at zero is $\mathcal U$,
and its monodromy is $e^{2\pi iB}$, which has simple
spectrum. Moreover,
\[
\operatorname{tr}h(t)
=-2\pi f'(t)-2\pi\sum_{j=1}^n x_j,
\]
and hence
\[
-\frac{1}{2\pi}\int_0^1\operatorname{tr}h(t)\,dt
=\sum_{j=1}^n x_j.
\]
The normalization above therefore recovers $\bar x$,
and then recovers $\mathcal L$, $K$, $f$, and $A$.
Conversely, reconstructing $h$ from $\Phi_W(h)$ returns
the original Hamiltonian.

Both constructions are smooth in the diffeological sense.
Indeed, solutions of the evolution equation depend
smoothly on finite-dimensional smooth parameters.
On the simple-spectrum locus, eigenvalues and spectral
projections admit smooth local choices; the trace
normalization selects a locally constant deck translate
of each local spectral lift. The remaining operations
are integration, differentiation, multiplication, and
exponentiation of smooth families. These observations
also establish smoothness of the displayed inverse.

Finally, if $Q:W\to W'$ is unitary, replacing $h$ by
$QhQ^{-1}$ leaves $\bar x_h$ and $f_h$ unchanged,
replaces $A_h$ by $QA_hQ^{-1}$, and replaces
$\mathcal L_h$ by $(QL_1,\ldots,QL_n)$.
Thus $\Phi_W$ is natural under unitary isomorphisms.
\end{construct}

\begin{rem}
The determinant is intrinsically defined by the action on the one-dimensional space $\bigwedge^nW$, and the factors $e^{\pm2\pi if/n}$ act by scalar multiplication. Neither operation requires a basis or a choice of a nonzero vector in $\bigwedge^nW$. An orthonormal basis is needed only to identify $SU(W)$ and $F(W)$ with $SU(n)$ and $F(\mathbb C^n)$.
\end{rem}

\subsection{The cohomology of $\cal H_F(W)$}
The spaces $\tilde C_n(S^1)$ and $\Omega \Bbb R$  are contractible. 
Consequently,

\begin{cor} There is a homotopy equivalence of spaces
\[
\cal H_F(W) \simeq \Omega SU(W) \times F(W)\, .
\]
\end{cor}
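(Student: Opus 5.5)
The plan is to deduce the homotopy equivalence directly from Theorem \ref{thm:diffeo-hamiltonians}, which identifies $\cal H_F(W)$ with the product $\tilde C_n(S^1)\times\Omega SU(W)\times\Omega\Bbb R\times F(W)$. Under this diffeomorphism the projection onto $\Omega SU(W)\times F(W)$ is smooth, and hence continuous for the underlying topologies. The inclusion at a basepoint of the two contractible factors is also smooth. Since diffeomorphisms induce homeomorphisms of the $\cal D$-topologies, it suffices to show that the composite of these two maps is homotopic to the identity. That reduces to showing that $\tilde C_n(S^1)$ and $\Omega\Bbb R$ are contractible.

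Both contractions can be written down explicitly, and both are smooth.

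\emph{The factor $\tilde C_n(S^1)$.} In the concrete model it is the subset of $\Bbb R^n$ cut out by $t_1<t_2<\cdots<t_n<t_1+1$. These inequalities are preserved under convex combinations, so the set is convex, and the straight-line homotopy to a fixed point such as $(0,1/n,\dots,(n-1)/n)$ stays inside it.

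\emph{The factor $\Omega\Bbb R$.} Since $\Omega\Bbb R$ is a vector space, the map $(u,f)\mapsto uf$ for $u\in[0,1]$ contracts it to the constant loop. This map is smooth on plots.

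Taking the product of these two contractions with the identity on the remaining factors gives the required homotopy. It is smooth as a map of diffeological spaces, so it is continuous for the product $\cal D$-topology.

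The main point to check is that the $\cal D$-topology of a product agrees with the product topology, since otherwise the homotopy on $\cal H_F(W)$ might fail to be continuous. To avoid this issue I would argue in the Fréchet category instead. Theorem \ref{bigthm:diffeo-hamiltonians} says $\cal H_F(W)$ is a nice Fréchet manifold, and each factor is a Fréchet manifold: $\Omega SU(W)$ and $\Omega\Bbb R$ are closed submanifolds of mapping spaces as in Example \ref{ex:Kihara-function-space}, while the other two factors are finite dimensional. Products of metrizable spaces carry the usual product topology, and by Losik's theorem this coincides with the $\cal D$-topology. The homotopy equivalence therefore holds in the ordinary topological sense.
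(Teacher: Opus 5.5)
Your proposal is correct and follows the paper's argument, which simply invokes Theorem~\ref{thm:diffeo-hamiltonians} and notes that $\tilde C_n(S^1)$ and $\Omega\Bbb R$ are contractible. Your explicit contractions (convexity of the model of $\tilde C_n(S^1)$, and scalar multiplication on $\Omega\Bbb R$) and your check, via Losik's theorem, that the $\cal D$-topology on the product is the product Fréchet topology supply details the paper leaves implicit.
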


\begin{cor} The rational cohomology ring of $\cal H_F(W)$ is isomorphic to the graded tensor product
\[
\Bbb Q[s_1,s_2,\dots s_{n-1}] \otimes \Bbb Q[t_1,t_2,\dots t_{n}]/\Sigma\, ,
\]
in which $\deg s_j = 2j, \deg t_j = 2$ and $\Sigma \subset \Bbb Q[t_1,t_2,\dots t_{n}]$ is the ideal
generated by the elementary symmetric polynomials $e_1,\dots,e_n$ in $t_1,\dots,t_n$.
\end{cor}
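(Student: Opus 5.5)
Using the preceding corollary, the rational cohomology of $\cal H_F(W)$ agrees with that of $\Omega SU(W)\times F(W)$. Both factors have rational cohomology of finite type; in fact both are of finite type integrally, since $\Omega SU(n)$ has a CW structure with finitely many cells in each even dimension. The Künneth theorem over $\Bbb Q$ therefore gives a ring isomorphism
\[
H^*(\cal H_F(W);\Bbb Q)\cong H^*(\Omega SU(W);\Bbb Q)\otimes H^*(F(W);\Bbb Q).
\]
Both tensor factors are concentrated in even degrees, so the graded tensor product involves no sign issues. It remains to identify each factor.

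For the flag manifold, we use the principal $T$-bundle $V(W)\to F(W)$ and the Borel picture $F(W)\simeq U(n)/T$. The first Chern classes $t_j=c_1(\eta_j)$ of the $n$ tautological line bundles $L_j$ generate $H^*(F(W);\Bbb Q)$. Since $\bigoplus_j L_j=F(W)\times W$ is trivial, the total Chern class satisfies $\prod_j(1+t_j)=1$, so every elementary symmetric polynomial $e_k(t_1,\dots,t_n)$ vanishes. The classical theorem of Borel (equivalently, the Leray--Hirsch argument applied to the iterated projective bundle tower $F(W)\to\cdots\to\Bbb P(W)$) shows that these are the only relations. This yields $H^*(F(W);\Bbb Q)\cong\Bbb Q[t_1,\dots,t_n]/\Sigma$. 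As a consistency check, the Poincaré series equals $\prod_{k=1}^n(1-x^{2k})/(1-x^2)^n$, which is the Poincaré polynomial of $F(W)$.

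For the loop space, $\Omega SU(W)$ is an H-space, so by Milnor--Moore/Hopf its rational cohomology is a free graded-commutative algebra on its rational homotopy, dualized. We compute $\pi_*(\Omega SU(n))\otimes\Bbb Q=\pi_{*+1}(SU(n))\otimes\Bbb Q$. This is one-dimensional in each degree $2j$ for $j=1,\dots,n-1$, because $SU(n)$ is rationally $S^3\times S^5\times\cdots\times S^{2n-1}$. Since these degrees are all even, the free graded-commutative algebra is the polynomial algebra $\Bbb Q[s_1,\dots,s_{n-1}]$ with $\deg s_j=2j$.

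Here the smooth based loop space is used. Its inclusion into the continuous loop space is a homotopy equivalence, and the homotopy equivalence of the preceding corollary is to be read in that sense. Combining the two identifications via Künneth gives the stated ring.

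The main obstacle is conceptual rather than computational. One has to make sure that the cohomology of the diffeological or Fréchet space $\cal H_F(W)$ is meant as singular cohomology of its underlying topological space. For Fréchet manifolds this agrees with the $\cal D$-topology (by the remark citing Losik). One also needs the identification of smooth and continuous loop spaces up to homotopy. Granting these, the rest is standard.
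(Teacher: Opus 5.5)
Your proposal is correct and follows the paper's proof. You reduce to $\Omega SU(W)\times F(W)$ via the preceding corollary, pass from smooth to continuous loops, and apply K\"unneth together with the classical computation of each factor; for the loop comparison the paper only invokes a weak homotopy equivalence (Kihara), which already suffices for singular cohomology. The only difference is that you derive the polynomial algebra for $\Omega SU(W)$ from Hopf/Milnor--Moore and the rational homotopy of $SU(n)$, and you sketch Borel's presentation via Chern classes of the tautological lines, whereas the paper simply cites Bott and Borel.
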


\begin{proof} The space $\Omega SU(W)$ is equipped with the $\cal D$-topology.
Let
$\ubar \Omega SU(W)$ denote the space of continuous based loops equipped with the compact-open topology. Then  the inclusion
\[
\Omega SU(W)  \to \ubar \Omega SU(W)
\]
is a weak homotopy equivalence \cite{Kihara}. Hence these two loop spaces have the same rational cohomology.
As is well-known \cite{Bott1958}, it is a polynomial algebra with generators $s_j$ of degree $2j$. The other tensor factor
in the statement of the corollary is the rational cohomology of $F(W)$ \cite{Borel1953}.
\end{proof}

\section{The Adiabatic Floquet Conjecture} \label{sec:Floquet-conjecture}
Suppose that $h\: [t_0,t_1] \to H(W)$ is a smooth family of Hamiltonians, not necessarily periodic. Let $\psi\: [t_0,t_1] \to S(W)$ be the solution to the Schrödinger equation
with normalized initial condition $\psi_0 = \psi(t_0) \in S(W)$. Let
\[
[\psi] \: [t_0,t_1] \to \Bbb P(W)
\]
denote the projection of $\psi$.  Note $([\psi],\psi(t_0))$ defines a lifting problem for 
the principal $U(1)$-bundle
\[
p\: S(W) @>>> \Bbb P(W)\, ,
\]
where the latter is equipped with the canonical Fubini-Study connection. The connection
in turn defines a ``geometric'' lift
\[
\Gamma([\psi],\psi(t_0))\:[t_0,t_1]\longrightarrow S(W).
\]

For a ray $\ell\in\Bbb P(W)$ and a Hermitian operator $a\in H(W)$, write
\[
\langle\ell|a|\ell\rangle:=\langle av,v\rangle,
\]
where $v\in\ell$ is any unit vector. This is independent of the choice of $v$.

\begin{lem}[Aharonov--Anandan] \label{lem:aa-statement} One has
\[
\psi(t) = \exp\left( -i \int_{t_0}^t \langle [\psi](u)|h(u)|[\psi](u)\rangle du\right) \, \Gamma([\psi],\psi(t_0))(t)\, .
\] 
\end{lem}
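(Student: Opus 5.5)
Since $\psi$ and $\Gamma([\psi],\psi(t_0))$ both lie over the same path $[\psi]$ in $\Bbb P(W)$, they differ by a phase: $\psi(t)=c(t)\,\Gamma(t)$ for a smooth $c\colon[t_0,t_1]\to U(1)$ with $c(t_0)=1$. Here $\Gamma(t)$ abbreviates $\Gamma([\psi],\psi(t_0))(t)$. The plan is to derive a linear ODE for $c$ and solve it. We want to show $c'(t)=-i\langle[\psi](t)|h(t)|[\psi](t)\rangle\,c(t)$, whose unique solution with $c(t_0)=1$ is the asserted exponential.

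The first step is to make the Fubini--Study connection explicit, which also supplies the justification promised after the definition of quasi-periodic states. The connection on $S(W)\to\Bbb P(W)$ comes from the canonical line bundle $\gamma_W$ by orthogonally projecting the trivial connection. Under this description, a smooth path $v(t)$ in $S(W)$ is horizontal exactly when the projection of $\dot v$ onto $\Bbb C v$ vanishes, i.e. $\langle \dot v(t),v(t)\rangle=0$. The connection $1$-form is $A=\langle dv,v\rangle$, which is purely imaginary on $S(W)$ since $\langle v,v\rangle=1$. Thus $\Gamma$ is characterized by $\Gamma(t_0)=\psi(t_0)$, $p\circ\Gamma=[\psi]$, and $\langle\dot\Gamma,\Gamma\rangle=0$. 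Existence and uniqueness of this lift is standard parallel transport along a path.

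Next we compute $\langle\dot\psi,\psi\rangle$ in two ways. From the Schr\"odinger equation,
\[
\langle\dot\psi,\psi\rangle=\langle -ih\psi,\psi\rangle=-i\langle h\psi,\psi\rangle=-i\langle[\psi]|h|[\psi]\rangle,
\]
using linearity in the first slot. From $\psi=c\Gamma$ and the Leibniz rule,
\[
\langle\dot\psi,\psi\rangle=\langle \dot c\,\Gamma+c\dot\Gamma,\,c\Gamma\rangle=\dot c\,\bar c+|c|^2\langle\dot\Gamma,\Gamma\rangle=\dot c\,\bar c.
\]
Since $|c|=1$, we have $\bar c=c^{-1}$. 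Equating the two expressions gives $\dot c/c=-i\langle[\psi]|h|[\psi]\rangle$, which is the desired ODE.

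The remaining concern is smoothness of $c$. It follows because $c(t)=\langle\psi(t),\Gamma(t)\rangle$ and both factors are smooth. Integrating the ODE from $t_0$ with $c(t_0)=1$ finishes the proof. The only delicate point, if any, is pinning down the sign and linearity conventions of the connection form so that horizontality is exactly $\langle\dot\Gamma,\Gamma\rangle=0$. The inner product convention fixed in the preliminaries settles this.
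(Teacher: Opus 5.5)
Your proposal is correct and follows essentially the paper's argument: write $\psi$ as a phase times the horizontal lift, differentiate, pair with the state, and use horizontality $\langle\dot\Gamma,\Gamma\rangle=0$ to isolate the derivative of the phase. The differences are cosmetic. You pair $\dot\psi$ with $\psi$ directly, whereas the paper first cancels the phase and pairs the reduced equation with $\phi$. You also write the phase multiplicatively as $c$ and justify its smoothness via $c=\langle\psi,\Gamma\rangle$, a point the paper leaves implicit.
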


\begin{rem} Lemma \ref{lem:aa-statement} states that the lifts $\psi$ and  $\Gamma([\psi],\psi(t_0))$ coincide up to the dynamical 
factor $\exp\left( -i \int_{t_0}^t \langle [\psi](u)|h(u)|[\psi](u)\rangle du\right)$.
\end{rem}

\begin{proof}[Proof of Lemma  \ref{lem:aa-statement}] Set $\phi(t) = \Gamma([\psi],\psi(t_0))(t)$. We need to prove that
\[
\psi(t) = e^{-i\theta(t)}\phi(t), \qquad \theta(t) = \int_{t_0}^t \langle [\psi](u)|h(u)|[\psi](u)\rangle du\, .
\]
The connection is defined as follows: if $v \in S(W)$, then the horizontal distribution
at $v$ is defined by 
\[
{\cal H}_v = \{w\in T_v S(W)|\,  \langle v,w\rangle = 0\}\, .
\]
For a lifted path $v(t)\in S(W)$, the horizontality condition is
\[
\langle v(t),\dot v(t) \rangle = 0\, .
\]
The paths $\psi(t),\phi(t)$ differ by a phase, i.e., a smooth real-valued function $\theta(t)$ such that 
\[
\psi(t) = e^{-i\theta(t)} \phi(t),\quad \theta(t_0) = 0\, .
\]
Differentiate and multiply by $i$ to get
\[
i\dot \psi = \dot\theta e^{-i\theta} \phi + i e^{-i\theta}\dot \phi\, .
\]
Comparing this with the Schrödinger equation $i\dot\psi(t)=h(t)\psi(t)$
and cancelling $e^{-i\theta(t)}$ gives
\[
\dot\theta(t)\phi(t)+i\dot\phi(t)=h(t)\phi(t).
\]
Taking the Hermitian inner product with $\phi(t)$ in the second argument,
and using linearity in the first argument, we obtain
\[
\dot\theta(t)\langle\phi(t),\phi(t)\rangle
+i\langle\dot\phi(t),\phi(t)\rangle
=\langle h(t)\phi(t),\phi(t)\rangle.
\]
Since $\langle\phi(t),\phi(t)\rangle=1$ and
$\langle\dot\phi(t),\phi(t)\rangle=0$ by horizontality and Hermitian symmetry,
self-adjointness of $h(t)$ gives
\[
\dot\theta(t)=\langle h(t)\phi(t),\phi(t)\rangle
=\langle\phi(t),h(t)\phi(t)\rangle.
\]
Lastly, integrate to obtain
\[
\theta(t) = \int_{t_0}^t \langle h(u)\phi(u),\phi(u)\rangle\,du\, . \qedhere
\]
\end{proof}

\begin{rem} 
Let $\psi$ be a Floquet state with associated quasi-period $\lambda$.
Define the scalar $\mu\in U(1)$ by
\[
\Gamma([\psi],\psi(t_0))(t_0+1)=\mu\,\psi(t_0).
\]
Thus $\mu$ is the holonomy of the Fubini--Study connection around the
projective loop $[\psi]$. The lemma implies that
\begin{equation} \label{eqn:holonomy-lambda}
\lambda = \exp\left(-i \int_{t_0}^{t_0+1} \langle [\psi](u)|h(u)|[\psi](u)\rangle du\right)\,\mu\, .
\end{equation}
The scalar $\mu$ is independent of the choice of initial unit lift.
Each of the two scalar factors on the right of \eqref{eqn:holonomy-lambda}
is independent of the reference point $t_0$.

In physics terminology, write $\lambda=e^{i\gamma}$ and
$\mu=e^{i\gamma_g}$. Equation~\eqref{eqn:holonomy-lambda} gives
\[
\gamma\equiv\gamma_d+\gamma_g\pmod{2\pi},
\]
where
\[
\gamma_d:=-\int_{t_0}^{t_0+1}
\langle[\psi](u)|h(u)|[\psi](u)\rangle\,du
\]
is the {\it dynamical phase}. The phase $\gamma_g$, defined modulo
$2\pi$, is the {\it geometric phase}, also called the
{\it Aharonov--Anandan phase} of the cyclic evolution
\cite{Aharonov-Anandan}. This decomposition does not require an
adiabatic assumption.
\end{rem}
%\part{Stacks}

\subsection{Statement of the conjecture}
Let $h\:[0,1]\to\cal H_F(W)$ be a smooth family of good periodic Hamiltonians.
Choose a normalized Floquet state $\psi_0\in\cal S(W)$ of $h(0)$,
and fix an initial time $t_1\in\Bbb R$. Define
\[
M_s:=\mathcal U_{h(s)}(t_1+1,t_1),
\qquad L_0:=\Bbb C\psi_0(t_1).
\]
The operators $M_s$ depend smoothly on $s$ and have simple spectrum.
Each eigenvalue therefore admits a smooth local rank-one spectral projection.
Starting with $L_0$, these local spectral projections continue uniquely along
$[0,1]$, giving a smooth family of eigenlines $L_s$ and a smooth eigenvalue
branch $\lambda(s)\in U(1)$ such that
\[
M_s|_{L_s}=\lambda(s)\,\operatorname{id}_{L_s}.
\]
No eigenvalues collide, so the local continuations agree wherever they overlap.

Locally in $s$, choose a smooth unit vector $v_s\in L_s$ and set
\[
\psi_s(u):=\mathcal U_{h(s)}(u,t_1)v_s,
\qquad u\in\Bbb R.
\]
This solves the frozen Schr\"odinger equation for $h(s)$ with initial
condition $\psi_s(t_1)=v_s$. Periodicity of $h(s)$ gives
\[
\psi_s(u+1)=\lambda(s)\psi_s(u).
\]
Changing $v_s$ multiplies $\psi_s$ by a unimodular scalar independent of $u$.
Thus the resulting projective quasi-periodic states agree on overlaps and
define a smooth path
\[
[\psi]_\bullet\:[0,1]\longrightarrow\cal P(W),
\qquad [\psi]_0=[\psi_0].
\]
In particular, no choice of phase for $v_s$ is part of the conjecture's data.

The pair $([\psi]_\bullet,\psi_0)$ defines a lifting problem for the
principal $U(1)$-bundle
\[
p\:\cal S(W)\longrightarrow\cal P(W).
\]
The initial state $\psi_0$ is determined by $\psi_0(t_1)$ through the
frozen equation for $h(0)$, so we also write the lifting data as
$([\psi]_\bullet,\psi_0(t_1))$.
Below we construct a 1-parameter family of connections on $p$.

In the meantime, we consider evolution on $[t_1,t_2+N]$, with $t_1,t_2$
and $N$ as in the introduction. Set
\[
\tau(t):=\frac{t-t_1}{N+t_2-t_1}\, ,
\qquad t\in[t_1,t_2+N].
\]
The time rescaled Hamiltonian is
\[
\tilde h(t):=h(\tau(t))(t)\, .
\]
Let $\tilde\psi\:[t_1,t_2+N]\to S(W)$ solve the Schr\"odinger equation
for $\tilde h$ with initial condition $\tilde\psi(t_1)=\psi_0(t_1)$.

\begin{conjecture}[Adiabatic Floquet Conjecture] \label{conj:adiabatic-Floquet-first}
With $Q_N(t)$ and $P_s(t)$ as in the introduction, one has
\[
\sup_{t\in[t_1,t_2+N]}
\bigl\|Q_N(t)-P_{\tau(t)}(t)\bigr\|_{\operatorname{op}}
=O(1/N)
\]
as $N\to\infty$.
\end{conjecture}

\begin{rem} The Aharonov--Anandan identity suggests the formal approximation
\begin{equation} \label{eqn:asymptotic}
\tilde\psi(t) \approx \exp\left(-i \int_{t_1}^t \langle [\psi'](u)|\tilde h(u)|[\psi'](u)\rangle du\right)\,  \Gamma([\psi'],\tilde\psi(t_1))(t)\, ,
\end{equation}
where $[\psi']\:[t_1,N+t_2]\to\Bbb P(W)$ is defined by
\[
[\psi'](t)=[\psi]_{\tau(t)}(t)\, .
\]
A justification of this formal substitution requires estimates on the dynamical
integral and parallel transport over an interval of length $O(N)$,
beyond the projective estimate above. We defer these estimates to a subsequent paper.
\end{rem}

\subsubsection{The Floquet 1-form} The diffeomorphism 
\[
\cal P(W) \cong L\Bbb P(W) \times \Omega^1(S^1)
\]
of Proposition \ref{prop:projective-quasi-periodic} identifies the projection of
the principal $U(1)$-bundle
$p\: \cal S(W) \to \cal P(W)$
with the map
\[
\cal S(W)  \to L\Bbb P(W) \times \Omega^1(S^1)
\]
defined by $\psi\mapsto(\psi^\sharp,\omega)$, where $\omega$ is the second coordinate
of Proposition \ref{prop:projective-quasi-periodic}, obtained from the phase multiplier
using Lemma \ref{lem:qp-forms}.

We will associate to the latter  a  1-parameter family of diffeological 1-forms
\[
A^F\:\Bbb R \to \Omega^1(\cal P(W))\, .
\]
To accomplish this we must describe a rule which assigns to a plot
$\phi\: M \to \cal P(W)$ a family of $1$-forms $A^{F}_t(\phi) \in \Omega^1(M)$.
The rule is required to be compatible over all plots.
   
By Proposition \ref{prop:projective-quasi-periodic},
a  plot $M \to \cal P(W)$ corresponds to  a pair 
\[
(\psi^\sharp,\omega) \in C^{\infty}(M \times S^1 ,\Bbb P(W)) \times C^\infty(M,\Omega^1(S^1))    \, ,
\] 
in which
$\psi^\sharp \: M \times S^1 \to \Bbb P(W)$ is a family of free loops parametrized by $M$ and
 $\omega\: M \to \Omega^1(S^1)$ is a family of 1-forms on the circle.
 Let $\tilde \omega\: M \to \Omega^1(\Bbb R)$ denote the effect of pulling back
 forms along the universal cover. 
 
 Then
 \[
 \tilde \omega \in \Omega^0(M;\Omega^1(\Bbb R)) =: \Omega^{0,1}(M \times \Bbb R) \, .
 \]
Here $d\widetilde\omega$ denotes the exterior derivative on $M\times\Bbb R$.
Since $\widetilde\omega$ has only a component in the $\Bbb R$-direction,
$d\widetilde\omega$ has bidegree $(1,1)$.

We use real-valued connection forms, with the Fubini--Study connection
on $S(W)$ normalized by
\[
\mathcal A_v(w)=i\langle v,w\rangle=-i\langle w,v\rangle
\]
for $v\in S(W)$ and $w\in T_vS(W)$.
For a normalized representative $\psi_x$ of the state at $x\in M$, the
definition of $\omega$ gives
\[
\psi_x(u)=\psi_{x,0}(u)
\exp\left(-2\pi i\int_0^u\widetilde\omega_x\right),
\]
where $\psi_{x,0}$ is the Fubini--Study horizontal lift with
$\psi_{x,0}(0)=\psi_x(0)$. Consequently,
\[
i\langle\psi_x(u),\partial_u\psi_x(u)\rangle
=-2\pi\widetilde\omega_x(\partial_u).
\]

Let $\kappa$ be the pullback to $M\times\Bbb R$ of the real curvature
$F(\nabla)\in\Omega^2(\Bbb P(W);\Bbb R)$ in this normalization. Define
\[
A^F_t(\psi^\sharp,\omega)
=\int_t^{t+1}\left(u-t-\frac12\right)
\bigl(-2\pi\,d\widetilde\omega-\kappa\bigr)
\in\Omega^1(M).
\]
Integration is along the fibers of $M\times\Bbb R\to M$, with the convention
\[
\int_t^{t+1}\eta(u)\wedge du=\int_t^{t+1}\eta(u)\,du
\]
for a smooth family $\eta(u)\in\Omega^1(M)$; only the $(1,1)$-component
of the integrand contributes. This defines $A^F_t$ on the plot
$(\psi^\sharp,\omega)$. Compatibility with changes of plot follows from
naturality of exterior differentiation, curvature pullback, and fiber integration.

\subsubsection{The Floquet connection} For $t\in \Bbb R$ let $\nabla_t$ denote the connection
on the principal $U(1)$-bundle
\[
p\: \cal S(W) \to \cal P(W)
\]
defined by pulling back the Fubini-Study connection along 
the map
\[
\cal P(W) @>\epsilon_t >> \Bbb P(W)
\]
that evaluates a projective quasi-periodic state at time $t \in \Bbb R$.

Then
\[
\nabla^F_t = -A^F_t + \nabla_t, \qquad t\in\Bbb R
\]
defines a 1-parameter family of connections, called the {\it Floquet connection}.

In what follows we let
\[
\Gamma^F_t
\]
denote the associated 1-parameter family of lifting functions for the diffeological principal $U(1)$-bundle $p$.

\subsubsection{Auxiliary functions} The explicit expression for the adiabatic limit relies on 
two functions denoted by $\rho$ and $\varphi$.

We first define
\[
\rho\: C^\infty([0,1],U(1)) \to \Bbb R/(2\pi\Bbb Z)\, .
\]
Let $\lambda \: [0,1] \to U(1)$ be a smooth function. By choosing a branch of $\ln(\lambda(0))$ (i.e., a lift of the universal cover
$\Bbb R \to \Bbb R/2\pi \Bbb Z = U(1)$), we obtain a smooth function
\[
 -i\ln(\lambda) \: [0,1] \to \Bbb R\, .
\]
Set
 \[
 \rho (\lambda) = -i\int_0^1 \ln (\lambda(u))\,  du \pmod{2\pi}\, .
  \]
Changing the logarithm by $2\pi i k$, with $k\in\Bbb Z$, changes the real integral by $2\pi k$.
Hence its class modulo $2\pi$ is independent of the choice, as is $e^{iN\rho(\lambda)}$ for every integer $N$.

For $j = 1,2$,  define $\varphi_1,\varphi_2\: C^\infty([0,1],U(1)) \to \Bbb R$ by
\[
\varphi_j(\lambda) = i\left( \ln(\lambda(u_j))  - \int_0^1 \ln(\lambda(u))\, du  \right)\, , \qquad u_1 = 0, u_2 = 1\, ,
\]
and then define
\[
\varphi\: C^\infty([0,1],U(1)) \times \Bbb R^3  \to \Bbb R
\]
by 
\[
\varphi(\lambda, t_0,t_1,t_2) = (t_2-t_0)\varphi_2(\lambda) - (t_1-t_0)\varphi_1(\lambda) \, .
\]

\subsubsection{Geometric refinement} We are now in a position to state the geometric
refinement of Conjecture \ref{conj:adiabatic-Floquet-first}.
Retain the driving protocol $h\:[0,1]\to\cal H_F(W)$, the initial Floquet state
$\psi_0$, and the continued eigenvalue branch $\lambda$ constructed above.
The system starts at $\tilde\psi(t_1)=\psi_0(t_1)\in S(W)$, and
$\lambda(s)$ is the quasi-period of $[\psi]_s$.

\begin{conjecture}[Geometric Adiabatic Floquet Conjecture] \label{conj:adiabatic-Floquet-second-version}
As $N\to\infty$ through positive integers, we have
\begin{equation} \label{eqn:conj}
\tilde \psi(t_2+N) \approx e^{iN\rho(\lambda)} e^{i\varphi(\lambda,t_0,t_1,t_2)} \Gamma^F_{t_0}([\psi]_\bullet,\psi_0(t_1))(t_2)\, .
\end{equation}
Here ``$\approx$'' has the meaning given in Conjecture \ref{bigconj:geometric-adiabatic-Floquet}.
\end{conjecture}

\subsubsection{Dependence on the parameter $t_0$}
Even though each of the factors $e^{i\varphi(\lambda,t_0,t_1,t_2)}$ and  $\Gamma^F_{t_0}([\psi]_\bullet,\psi_0(t_1))(t_2)$
depends on $t_0$, we will see that their product does not. Let
\[
\alpha \in \Omega^1(U(1))\, ,
\]
be the canonical invariant 1-form
\[
-i\bar z dz, \qquad \int_{U(1)} \alpha =2\pi\, .
\]
The proof that the product of the factors does not depend on $t_0$ is then a direct consequence of the 
following result.

\begin{lem} The Floquet connection $\nabla_t^F$ and the auxiliary function $\varphi$ satisfy the identities
\begin{enumerate}[(i).]
\item $\nabla^F_{t_0+t}=\nabla^F_{t_0}+t\,\lambda^*\alpha$, where
$\lambda\:\cal P(W)\to U(1)$ is the quasi-period map and
$\lambda^*\alpha\in\Omega^1(\cal P(W))$.
\item For a smooth path $\lambda\:[0,1]\to U(1)$,
\[
\begin{aligned}
\varphi(\lambda,t_0+t,t_1,t_2)
&=\varphi(\lambda,t_0,t_1,t_2)\\
&\quad+t\int_{[0,1]}\lambda^*\alpha.
\end{aligned}
\]
\end{enumerate}
\end{lem}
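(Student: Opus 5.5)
Part (ii) is a direct computation from the definitions. For fixed $\lambda$, the map $t_0\mapsto\varphi(\lambda,t_0,t_1,t_2)$ is affine with slope $\varphi_1(\lambda)-\varphi_2(\lambda)$. Using a continuous branch of the logarithm along $[0,1]$, this slope equals $i(\ln\lambda(0)-\ln\lambda(1))$, and hence equals $\int_{[0,1]}\lambda^*\alpha$ up to the sign convention for $\alpha=-i\bar z\,dz$. Indeed, $\lambda^*\alpha=-i\lambda^{-1}d\lambda=d(-i\ln\lambda)$. I will fix the orientation and sign conventions in this step so that they match those used in (i).

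For (i), I will work plotwise. Fix a plot $M\to\cal P(W)$ with data $(\psi^\sharp,\omega)$, and choose a local normalized lift $\psi_x$, so that locally $\cal S(W)\cong\cal P(W)\times U(1)$. On this trivialization, $\nabla_t$ has connection form $\sigma_t^*\mathcal A$, where $\sigma_t(x)=\psi_x(t)$ and $\mathcal A$ is the Fubini--Study form. The proof then splits into two $t$-derivatives.

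The first is $\frac{d}{dt}\nabla_t$. Apply Cartan's formula, or equivalently Stokes on $M\times[t,t+\delta]$, to $\Psi^*\mathcal A$, where $\Psi(x,u)=\psi_x(u)$. This gives
\[
\partial_t(\sigma_t^*\mathcal A)=\iota_{\partial_u}\Psi^*d\mathcal A+d_M\bigl(\mathcal A(\partial_u\psi)\bigr).
\]
Here $d\mathcal A$ is the pullback of the curvature $\kappa$. By the computation preceding the definition of $A^F_t$, the scalar $\mathcal A(\partial_u\psi)$ equals $-2\pi\widetilde\omega(\partial_u)$.

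The second is $\frac{d}{dt}A^F_t$. Differentiating the fiber integral $\int_t^{t+1}(u-t-\tfrac12)\eta(u)\,du$ gives
\[
-\tfrac12\bigl(\eta(t+1)+\eta(t)\bigr)-\int_t^{t+1}\eta(u)\,du,
\]
where $\eta$ is the $(1,1)$-part of $-2\pi d\widetilde\omega-\kappa$. Quasi-periodicity makes $\widetilde\omega$ and $\kappa$ invariant under $u\mapsto u+1$, so $\eta(t+1)=\eta(t)$. The boundary terms then reproduce exactly $\iota_{\partial_u}$ of $(-2\pi d\widetilde\omega-\kappa)$ at $u=t$. This matches $\partial_t\nabla_t$, since the $d_M$ of the scalar $-2\pi\widetilde\omega(\partial_u)$ is the $(1,1)$-part of $-2\pi d\widetilde\omega$.

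So the local terms cancel in $\frac{d}{dt}(\nabla_t-A^F_t)$, leaving only the averaged term $\int_t^{t+1}(2\pi d\widetilde\omega+\kappa)$. I must identify this term with $\lambda^*\alpha$. By \eqref{eqn:boxed}, $\lambda=\mu_{\psi^\sharp}\exp(-2\pi i\int_{S^1}\omega)$. The derivative of $\log$ of the Fubini--Study holonomy of a family of loops is the fiber integral of the curvature, by the standard variation formula, and the derivative of $-2\pi\int_{S^1}\omega$ is $-2\pi\int d\widetilde\omega$. Hence $\lambda^*\alpha=\pm\int_t^{t+1}(\kappa+2\pi d\widetilde\omega)$, which is independent of $t$. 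Integrating in $t$ then gives $\nabla^F_{t_0+t}=\nabla^F_{t_0}+t\,\lambda^*\alpha$.

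The main obstacle is bookkeeping rather than any conceptual difficulty: the signs and orientations of the fiber-integration convention, the real normalization of $\mathcal A$ and $\kappa$, and the sign of $\alpha$ all have to agree. I would settle them first on the trivial case $W=\Bbb C$, where $\kappa=0$ and everything reduces to Lemma \ref{lem:qp-forms}. The same consistency check would then fix the sign in (ii), so that the $t_0$-dependence cancels in the product appearing in Conjecture \ref{conj:adiabatic-Floquet-second-version}.
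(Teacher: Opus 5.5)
Your part (ii) is the paper's computation, and no sign convention needs fixing: $\int_{[0,1]}\lambda^*\alpha=\int_0^1 d(-i\ln\lambda)=i(\ln\lambda(0)-\ln\lambda(1))=\varphi_1(\lambda)-\varphi_2(\lambda)$ exactly.

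For (i), the plotwise strategy is sound and close to the paper's. However, the sign bookkeeping is the entire content of the identity, since it is what makes the $t_0$-dependence cancel against (ii). In your proposal it fails at one step and is left open at another.

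\textbf{The derivative of $A^F_t$.} The correct formula is
\[
\frac{d}{dt}\int_t^{t+1}\Bigl(u-t-\tfrac12\Bigr)\eta(u)\,du=\tfrac12\bigl(\eta(t+1)+\eta(t)\bigr)-\int_t^{t+1}\eta(u)\,du ,
\]
with a plus sign on the boundary terms, not a minus.

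\textbf{The ``match'' with $\partial_t\nabla_t$.} Write $a(t)=\sigma_t^*\mathcal A$ and $(-2\pi\,d\widetilde\omega-\kappa)^{1,1}=\eta\wedge du$. Your Cartan formula gives $\partial_t a(t)=\iota_{\partial_u}\kappa+d_M\bigl(-2\pi\widetilde\omega(\partial_u)\bigr)=\eta(t)$. On the other hand, $\iota_{\partial_u}(\eta\wedge du)=-\eta$. So with your displayed derivative, the local terms in $\frac{d}{dt}\bigl(a(t)-A^F_t\bigr)$ would add up to $2\eta(t)$ rather than cancel. Your claimed cancellation goes through only because a second slip compensates: you identify $\partial_t a(t)$ with $\iota_{\partial_u}(-2\pi\,d\widetilde\omega-\kappa)$, which is actually $-\partial_t a(t)$.

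\textbf{The surviving term.} In the paper's fiber-integration convention, the leftover term is $\int_t^{t+1}\eta(u)\,du=\int_t^{t+1}(-2\pi\,d\widetilde\omega-\kappa)$. It is not $\int_t^{t+1}(2\pi\,d\widetilde\omega+\kappa)$.

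\textbf{The final sign.} You leave $\lambda^*\alpha=\pm\int_t^{t+1}(\kappa+2\pi\,d\widetilde\omega)$ undetermined. Your proposed check on $W=\mathbb C$ cannot fix the sign of the holonomy-variation contribution, because $\kappa=0$ there.

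The repair is already in your hands. Since $\partial_t a(t)=\eta(t)$ for every $t$, i.e.\ $\partial_u a=\eta$, the surviving term is
\[
\int_t^{t+1}\partial_u a\,du=a(t+1)-a(t).
\]
The gauge rule applied to $\Psi_{u+1}=\Psi_u\cdot\lambda$ gives $a(t+1)=a(t)+\lambda^*\alpha$. The sign is forced by the normalizations $\mathcal A_v(iv)=1$ and $\alpha_z(iz)=1$. This route needs neither \eqref{eqn:boxed} nor a holonomy variation formula.

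This is exactly the paper's argument. The paper obtains $\eta=\partial_u a$ from the curvature of $\bar\nabla=\nabla+2\pi\widetilde\omega$ in the trivialization $\Psi$, in which $\Psi$ is horizontal along $\mathbb R$. It then integrates by parts to get the closed form
\[
a^F(t)=\int_t^{t+1}a(u)\,du-\tfrac12\lambda^*\alpha,
\]
and differentiates in $t$.
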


\begin{proof}[Proof of (i):] Since the assertion is an equality of
diffeological connections, it is enough to prove it after pulling back
along an arbitrary plot
\[
q\colon M\longrightarrow\cal P(W).
\]
Represent $q$ by $(\psi^\sharp,\omega)$, and let
\[
E\longrightarrow M\times\mathbb R
\]
be the pullback of the canonical bundle
$S(W)\to\mathbb P(W)$. Let $\nabla$ be the pulled-back
Fubini--Study connection and set
\[
\bar\nabla=\nabla+2\pi\widetilde\omega.
\]
Its curvature is
\[
\bar F=\kappa+2\pi\,d\widetilde\omega.
\]

After restricting to an open set $U\subset M$, choose a smooth lift
of $q|_U$ to $\cal S(W)$. Evaluating these normalized quasi-periodic
states defines a section
\[
\Psi\colon U\times\mathbb R\longrightarrow E.
\]
Write $\lambda\colon U\to U(1)$ for their quasi-period, namely the
pullback of the quasi-period map along $q|_U$. Thus
\[
\Psi(x,u+1)=\Psi(x,u)\cdot\lambda(x).
\]
By the normalization of $\omega$, the connection form of $\nabla$
in this section has $du$-component $-2\pi\widetilde\omega$.
Hence $\Psi$ is $\bar\nabla$-horizontal in the $\mathbb R$-direction.
In particular, the monodromy of $\bar\nabla$ is the quasi-period
of the given state.

For fixed $u$, write $\Psi_u(x)=\Psi(x,u)$, and let
\[
a(u)\in\Omega^1(U)
\]
be the local connection form of $\nabla_u$ with respect to the section
$\Psi_u$. Since $\widetilde\omega$ has only an $\mathbb R$-component,
the restrictions of $\nabla$ and $\bar\nabla$ to
$U\times\{u\}$ agree. Moreover, $\Psi$ is
$\bar\nabla$-horizontal in the $\mathbb R$-direction. Thus the local
connection form of $\bar\nabla$ with respect to $\Psi$ is simply the
smooth family $a(u)$, with no $du$-component.

Since $U(1)$ is abelian, the curvature of $\bar\nabla$ in this
trivialization is
\[
\bar F
=
d_Ua(u)-\frac{\partial a}{\partial u}(u)\wedge du.
\]
Consequently,
\[
\bigl(-2\pi\,d\widetilde\omega-\kappa\bigr)^{1,1}
=
\frac{\partial a}{\partial u}(u)\wedge du.
\]
Only this $(1,1)$-component contributes to integration along the
$\mathbb R$-factor.

Set
\[
\beta=\lambda^*\alpha.
\]
The relation $\Psi_{u+1}=\Psi_u\cdot\lambda$ and the gauge
transformation rule for a $U(1)$-connection give
\[
a(u+1)=a(u)+\beta.
\]
It follows from the definition of the Floquet form that
\begin{align*}
A_t^F
&=
\int_t^{t+1}
\left(u-t-\frac12\right)
\frac{\partial a}{\partial u}(u)\,du\\
&=
\frac12a(t+1)+\frac12a(t)
-\int_t^{t+1}a(u)\,du\\
&=
a(t)+\frac12\beta-\int_t^{t+1}a(u)\,du.
\end{align*}
Therefore the local connection form of
$\nabla_t^F=\nabla_t-A_t^F$ is
\[
a^F(t)
=
\int_t^{t+1}a(u)\,du-\frac12\beta.
\]

Let
\[
J(r)=\int_r^{r+1}a(u)\,du.
\]
Using $a(r+1)-a(r)=\beta$, we obtain
\[
J'(r)=a(r+1)-a(r)=\beta.
\]
Hence
\[
a^F(t_0+t)-a^F(t_0)
=
J(t_0+t)-J(t_0)
=
t\beta
=
t\,\lambda^*\alpha.
\]
Since the plot $q$ and the open set $U$ were arbitrary, the local
identities assemble to give
\[
\nabla^F_{t_0+t}
=
\nabla^F_{t_0}+t\,\lambda^*\alpha.
\]
{\it Proof of (ii):} By direct calculation,
\begin{align*}
\varphi(\lambda,t_0+t,t_1,t_2)
&=(t_2-t_0-t)\varphi_2(\lambda)
  -(t_1-t_0-t)\varphi_1(\lambda)\\
&=\varphi(\lambda,t_0,t_1,t_2)
  +t\bigl(\varphi_1(\lambda)-\varphi_2(\lambda)\bigr).
\end{align*}
For a continuous choice of logarithm along $\lambda$, we have
\[
\varphi_1(\lambda)-\varphi_2(\lambda)
=i\bigl(\ln\lambda(0)-\ln\lambda(1)\bigr)
=\int_{[0,1]}\lambda^*\alpha.
\]
Substitution gives the asserted identity. \qedhere
\end{proof}

\appendix

\section{The Picard group}  \label{sec:Picard}
% Let $G = U(1)$  be the Lie group of unimodular complex numbers.
Let  $\xi\: L\to X$ be a smooth complex line bundle over a Fréchet manifold $X$.
Then $\xi$ determines a locally trivial smooth principal $\Bbb G_m$-bundle $\pi\:E\to X$.
Here, $\Bbb G_m = \Bbb C^*$ is the
multiplicative group of the nonzero complex numbers, and $E$ is the space of pairs $(\phi,x)$ in which $x\in X$
and  $\phi\: \Bbb C\to \xi ^{-1}(x)$ is an isomorphism
of complex vector spaces.
Conversely, a  smooth principal $\Bbb G_m$-bundle $\pi\: E\to X$ has an associated smooth complex line bundle
$E\times_{\Bbb G_m} \Bbb C\to X$.

If $G$ is a Lie group,  we let $\ubar G := C^\infty({-},G)$
be the sheaf of smooth $G$-valued functions.
 Recall that an {\it invertible sheaf} $\cal F$ on $X$ is a sheaf of $\ubar {\Bbb C}$-modules which is locally free of rank one.
 If  $\xi\: L\to X$ is a complex line bundle, then its sheaf of smooth local sections
is invertible. Conversely, given an invertible sheaf of $\ubar {\Bbb C}$-modules, there is
a unique smooth line bundle inducing it up to isomorphism. 

Summarizing, one has bijections between the following sets of isomorphism classes of objects over $X$:
{\small \begin{equation}\label{eqn:identifications}
\text{Principal $\Bbb G_m$-bundles} \quad \Leftrightarrow \quad \text{Line bundles} 
\quad \Leftrightarrow \quad \text{Invertible sheaves}\, .
\end{equation}}

% In view of the diffeomorphism \eqref{eqn:line-bundle-diff},
% we may identify such a line bundle with a stack morphism \[\xi\: \cal X\to [\ast/\Bbb G_m]\, .\]
% Recall that $[\ast/\Bbb G_m]$ has an atlas $\ast \to [\ast/\Bbb G_m]$ which is
% defined by the principal $\Bbb G_m$-bundle $\Bbb G_m\to \ast$. 
% Let  $\cal E(\xi) := \cal X \times_{[\ast/\Bbb G_m]}\ast$.
% Then the first factor projection 
% \[
%\pi\: \cal E(\xi) \to \cal X,
% \]
% (i.e., the $\Bbb G_m$-bundle projection), is  a surjective submersion. The stack $\cal E(\xi)$ is therefore the  ``total space'' of the bundle.

Given invertible sheaves $\cal F_1$ and $\cal F_2$ on $X$, the tensor product
operation
\[
\cal F_1 \otimes_{\ubar {\Bbb C}} \cal F_2
\]
is again an invertible sheaf.

The {\it Picard group}
\[
\Pic(X)
\]
is the abelian group of isomorphism classes 
of invertible sheaves on $X$, in which the group operation is induced
by tensor product and the group inverse is induced
by mapping an invertible sheaf $\cal F$ to its linear dual $\cal F^{\vee} := \underline{\hom}_{\ubar {\Bbb C}}(\cal F,\ubar {\Bbb C})$,
where $\underline{\hom}$ denotes internal $\hom$. By \eqref{eqn:identifications}, the Picard group may be 
alternatively defined as the abelian group of isomorphism classes of line bundles over $X$ with respect to the internal tensor product
operation on line bundles.
% Recall that $\cal G$ is the sheaf of smooth functions with values in $G$.

The following result is classical \cite[III, exer.~4.5]{Hartshorne}, \cite[prop.~3.4.6]{Milne}.

\begin{thm} \label{thm:Picard} There is a natural isomorphism of abelian groups
\[
\Pic(X) \cong H^1(X;\ubar {\Bbb G}_m)\, .
\]
\end{thm}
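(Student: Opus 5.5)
The plan is to carry out the standard Čech argument, working throughout with the sheaf $\ubar{\Bbb G}_m$ of smooth nowhere-vanishing complex functions on $X$. By \eqref{eqn:identifications} it is enough to produce a natural group isomorphism from isomorphism classes of invertible sheaves to $H^1(X;\ubar{\Bbb G}_m)$. Here $H^1$ is defined as the colimit over open covers of Čech cohomology. For a sheaf of abelian groups, this Čech $H^1$ agrees with derived-functor $H^1$ on any topological space; the agreement in degree one does not need paracompactness.

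First I define the map. Let $\cal F$ be invertible. I choose an open cover $\{U_\alpha\}$ of $X$ and trivializations $\phi_\alpha\colon \ubar{\Bbb C}|_{U_\alpha}\cong \cal F|_{U_\alpha}$. On each overlap, $\phi_\alpha^{-1}\phi_\beta$ is an automorphism of the rank-one free module $\ubar{\Bbb C}|_{U_{\alpha\beta}}$. Any such automorphism is multiplication by $g_{\alpha\beta}=(\phi_\alpha^{-1}\phi_\beta)(1)$, and this function is invertible, so it is a section of $\ubar{\Bbb G}_m(U_{\alpha\beta})$. The identity $g_{\alpha\beta}g_{\beta\gamma}=g_{\alpha\gamma}$ holds on triple overlaps, so $(g_{\alpha\beta})$ is a Čech $1$-cocycle.

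Next I check that the class is well defined. Changing the trivializations to $\phi_\alpha u_\alpha$ with $u_\alpha\in\ubar{\Bbb G}_m(U_\alpha)$ changes the cocycle by the coboundary $u_\alpha^{-1}g_{\alpha\beta}u_\beta$. Passing to a common refinement shows that the class in the colimit is independent of the cover. An isomorphism $\cal F\cong\cal F'$ transports trivializations, so isomorphic sheaves give the same class.

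The map is a homomorphism. Under tensor product, $\phi_\alpha\otimes\phi'_\alpha$ trivializes $\cal F\otimes\cal F'$ with transition functions $g_{\alpha\beta}g'_{\alpha\beta}$. The dual $\cal F^\vee$ has transition functions $g_{\alpha\beta}^{-1}$.

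For injectivity, suppose the cocycle of $\cal F$ is a coboundary on some refinement. Then the modified trivializations $\phi_\alpha u_\alpha$ agree on overlaps, and by the sheaf axiom they glue to a global isomorphism $\ubar{\Bbb C}\cong\cal F$. So the kernel is trivial.

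Surjectivity is the step I expect to need the most care. Given a cocycle $(g_{\alpha\beta})$, I build $\cal F$ by gluing the free sheaves $\ubar{\Bbb C}|_{U_\alpha}$ along multiplication by $g_{\alpha\beta}$. Concretely, a section of $\cal F$ over $V$ is a family $s_\alpha\in C^\infty(V\cap U_\alpha,\Bbb C)$ with $s_\alpha=g_{\alpha\beta}s_\beta$ on $V\cap U_{\alpha\beta}$. The sheaf axioms are immediate, and the cocycle condition makes each projection $\cal F|_{U_\alpha}\to\ubar{\Bbb C}|_{U_\alpha}$ an isomorphism. This gluing is purely sheaf-theoretic, so it works verbatim on a Fréchet manifold with no partitions of unity. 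Its cocycle is $(g_{\alpha\beta})$ by construction.

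Naturality in $X$ follows from compatibility of pullback with trivializations and transition functions.
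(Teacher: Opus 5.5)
Your proof is correct and is exactly the standard argument behind the classical result the paper only cites (Hartshorne, III, Ex.~4.5). Transition functions of local trivializations give a \v{C}ech $1$-cocycle in $\ubar{\Bbb G}_m$, gluing along a cocycle inverts this, and \v{C}ech $H^1$ agrees with derived-functor $H^1$ for any sheaf of abelian groups on any space; as you note, nothing here uses niceness of $X$, which the paper needs only later, to make $\ubar{\Bbb C}$ acyclic for the first Chern class.
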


\subsection{Hermitian line bundles} A smooth complex line bundle $\xi\:L\to X$ is {\it Hermitian} if
it is equipped with a smooth complex-linear bundle map
\[
L\otimes_{\Bbb C}\overline L\longrightarrow X\times\Bbb C
\]
whose associated sesquilinear pairing on each fiber is Hermitian and positive definite.
Here $\overline L$ denotes the conjugate line bundle; our inner products are
linear in the first argument and conjugate-linear in the second.
Equivalently, a Hermitian metric specifies a reduction of the principal
$\Bbb G_m$-bundle of complex frames of $\xi$ to $S^1=U(1)\subset\Bbb G_m$.
 
Let $\hPic(X)$ denote the abelian group of isomorphism classes of Hermitian line bundles
over $X$.

\begin{add} Assume in addition that $X$ is nice. Then there is an isomorphism of abelian groups
\[
\hPic(X) \cong H^1(X;\ubar{S}^1)\, .
 \]
 \end{add}
 
 \begin{rem} If $X$ is not paracompact Hausdorff, then the homomorphism  $\hPic(X)\to \Pic(X)$ 
need not be an isomorphism, since there is no guarantee that every complex line
 bundle over $X$ admits a fiberwise Hermitian inner product.
 \end{rem}

\subsection{The first Chern class}
The  short exact sequence of sheaves
\[
0 \to \ubar{\Bbb Z} \to \ubar{\Bbb C} @> \exp(2\pi i{-}) >> \ubar{\Bbb G}_m \to 0
\]
induces a long exact sequence in sheaf cohomology
{\small \begin{equation} \label{eqn:exponential}
\cdots \to H^\ast(X; \ubar{\Bbb Z})\to  H^\ast(X; \ubar{\Bbb C}) \to H^\ast(X;\ubar{\Bbb G}_m) @> \delta >>   H^{\ast+1}(X; \ubar{\Bbb Z}) \to \cdots\, .
\end{equation}}
The  connecting homomorphism $\delta$
corresponds to the first Chern class
\[
c_1\: \Pic(X)  \to  H^2(X; \ubar{\Bbb Z})\, .
\]

\begin{lem} \label{lem:c-one} Let $X$ be a nice Fr\'echet manifold. Then
 $c_1$ is an isomorphism.
\end{lem}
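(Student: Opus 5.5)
Since $c_1$ is, by definition, the connecting homomorphism $\delta\colon H^1(X;\ubar{\Bbb G}_m)\to H^2(X;\ubar{\Bbb Z})$ in the exponential sequence \eqref{eqn:exponential}, combined with Theorem \ref{thm:Picard}, the plan is to show that the sheaf $\ubar{\Bbb C}$ of smooth complex-valued functions is acyclic on a nice Fréchet manifold. Exactness of
\[
H^1(X;\ubar{\Bbb C})\to H^1(X;\ubar{\Bbb G}_m)\xrightarrow{\ \delta\ } H^2(X;\ubar{\Bbb Z})\to H^2(X;\ubar{\Bbb C})
\]
then gives the result once $H^1(X;\ubar{\Bbb C})=H^2(X;\ubar{\Bbb C})=0$. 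Before using the sequence, I will check that $\exp(2\pi i\,{-})\colon\ubar{\Bbb C}\to\ubar{\Bbb G}_m$ is an epimorphism of sheaves with kernel the locally constant sheaf $\ubar{\Bbb Z}$. Surjectivity is local: a smooth $f\colon U\to\Bbb C^\ast$ has a smooth logarithm on any open set on which $f$ lands in a slit plane, and such sets cover $X$ by continuity. The kernel consists of smooth integer-valued functions, which are exactly the locally constant ones.

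For acyclicity, I will use that $\ubar{\Bbb C}$ is a fine sheaf (more precisely, a soft one, which suffices) when $X$ admits smooth partitions of unity subordinate to arbitrary open covers. This is where niceness enters. By Example \ref{ex:Kihara-function-space} and the cited results of Kriegl--Michor \cite[16.10, 27.4]{Kriegl-Michor} (and \cite[\S11]{Kihara}), a Hausdorff paracompact Fréchet manifold modeled on nuclear Fréchet spaces is smoothly paracompact. Nuclear Fréchet spaces admit smooth bump functions, and together with paracompactness this yields smooth partitions of unity. Multiplication by such a partition of unity makes $\ubar{\Bbb C}$ a module over the fine sheaf of smooth real functions, so it is fine. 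Hence $H^q(X;\ubar{\Bbb C})=0$ for $q\geq1$, either by the standard argument that the \v{C}ech cohomology of a fine sheaf vanishes on paracompact Hausdorff spaces, or because fine sheaves are soft and soft sheaves are acyclic there.

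I also need to make sure that the cohomology appearing in Theorem \ref{thm:Picard} matches the derived-functor or \v{C}ech cohomology used here. On a paracompact Hausdorff space these agree, and $H^1$ of $\ubar{\Bbb G}_m$ agrees with the \v{C}ech $H^1$ in any case, so the identification with $\Pic(X)$ is unaffected.

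The main obstacle is the existence of smooth partitions of unity. In infinite dimensions, smooth paracompactness does not follow from topological paracompactness alone, and this is exactly why nuclearity is included in the definition of "nice." I will handle it by quoting Kriegl--Michor's theorem that nuclear Fréchet spaces are smoothly paracompact, together with its extension to paracompact Hausdorff manifolds modeled on them (cf.\ Example \ref{ex:Kihara-function-space} and \cite{Kihara}). Everything else is formal sheaf theory.
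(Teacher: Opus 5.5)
Your proposal is correct and follows the same route as the paper. Smooth partitions of unity on a nice Fr\'echet manifold (the paper simply cites \cite[Corollary~11.17]{Kihara} for this) make $\ubar{\Bbb C}$ fine and hence acyclic, so the connecting homomorphism $\delta$ in \eqref{eqn:exponential} is an isomorphism; your extra checks on the exactness of the exponential sequence and on which cohomology theory is meant are details the paper leaves implicit.
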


\begin{proof} 
Since $X$ is nice, smooth partitions of unity exist \cite[Corollary~11.17]{Kihara}.
Hence the sheaf $\ubar{\Bbb C}$ is fine and therefore acyclic.
By the long exact sequence \eqref{eqn:exponential}, the connecting homomorphism
$\delta\: H^1(X;\ubar{\Bbb G}_m) \to H^2(X;\ubar{\Bbb Z})$ is an isomorphism.
The result follows. 
\end{proof}

\begin{rem} If $X$ is nice, then the homomorphism
\[
\hPic(X) \to \Pic(X)
\]
is an isomorphism. It is surjective by local triviality and the existence of
smooth partitions of unity. For injectivity, two Hermitian metrics $h_0,h_1$ on the same line bundle
satisfy $h_1=rh_0$ for a smooth positive function $r$ on $X$.
Multiplication by $r^{-1/2}$ is an isometry from $(L,h_0)$ to $(L,h_1)$.
\end{rem}

\section{Principal \texorpdfstring{$(T \rtimes H)$}{(T semidirect H)}-bundles} \label{sec:prin-TtimesH}
Let $G$ be a Lie group of the form $T\rtimes H$, where $T$ is an $n$-torus and $H$ is discrete.
Set $\Lambda = \pi_1(T)$. Then 
one has the induced action of $H$ on $\Lambda$. Let $\sigma\: H \to G$ be a homomorphism splitting the projection
$G\to H$, and let $i\: T\to G$ denote the inclusion.

 The goal of this section is to describe the isomorphism classes of smooth principal $G$-bundles
 on a nice Fréchet manifold $X$ in terms of the isomorphism classes of smooth $H$-principal bundles on $X$ and 
auxiliary cohomological data.

Given an $H$-principal
bundle $\xi\: E \to X$, we may form the associated covering space 
\[
E \times_H \Lambda \to X\, .
\]
Let 
\[
\cal L = \cal L(\xi,\Lambda)
\] 
denote the associated locally constant sheaf on $X$ with stalk $\Lambda$.

Consider pairs
\[
(\xi,\theta)\, ,
\]
where
\begin{enumerate}[(i).]
\item $\xi\: E\to X$ is a smooth principal $H$-bundle, and
\item $\theta \in H^2(X,\cal L)$
is a cohomology class. 
\end{enumerate}
Write
\[
(\xi,\theta) \sim (\xi',\theta')
\]
if there is an isomorphism of bundles $f\: \xi @>\cong >> \xi'$
covering the identity map of $X$ such that $f_\ast(\theta) = \theta'$,
where 
$f_\ast\: H^2(X,\cal L(\xi,\Lambda)) \to H^2(X,\cal L(\xi',\Lambda))$ 
is the homomorphism induced by the evident map of locally constant sheaves
$\cal L(\xi,\Lambda) \to \cal L(\xi',\Lambda)$. Then `$\sim$' defines an equivalence relation. Let 
\[
\Prin_H(X;\Lambda)
\]
denote the set of equivalence classes. Then $\Prin_H(X;\Lambda)$ has a preferred basepoint.
Let $\Prin_G(X)$ be the pointed set of isomorphism classes of smooth principal $G$-bundles on $X$.

\begin{prop} \label{prop:classify} If $X$ is a nice Fréchet manifold, then there is an isomorphism of pointed sets
\[
\Prin_G(X) \cong \Prin_H(X;\Lambda)\, .
\]
\end{prop}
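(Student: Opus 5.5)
The plan is to build maps in both directions between $\Prin_G(X)$ and $\Prin_H(X;\Lambda)$ using the exact sequence $1\to T\to G\to H\to 1$ and the splitting $\sigma$.

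\emph{The forward map.} Given a principal $G$-bundle $P\to X$, set $E:=P/T$. By the double quotient formula \eqref{eqn:double-quotient-formula} and local triviality, $E\to X$ is a principal $H$-bundle $\xi$, and $P\to E$ is a principal $T$-bundle. Because $H$ is discrete, $E\to X$ is a covering space. I would then descend the $T$-bundle $P\to E$ to a twisted object on $X$. The $H$-action on $E$ lifts to $P$ through $\sigma$, and it intertwines the $T$-action up to the automorphism of $T$ coming from conjugation. So $P\to E$ is an $H$-equivariant $T$-bundle over the free $H$-space $E$.

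Since $T\cong\Lambda\otimes U(1)$, the sheaf of smooth $T$-valued functions, twisted by $\xi$, is $\cal L\otimes\ubar S^1$. Hence such twisted bundles are classified by $H^1(X;\cal L\otimes\ubar S^1)$. Tensoring the exponential sequence of Appendix~\ref{sec:Picard} with the locally constant sheaf $\cal L$ (which is flat, with stalk $\Bbb Z^n$) gives
\[
0\to\cal L\to\cal L\otimes\ubar{\Bbb R}\to\cal L\otimes\ubar S^1\to 0.
\]
The middle sheaf is a module over the fine sheaf $\ubar{\Bbb R}$, using the partitions of unity available because $X$ is nice, so it is acyclic. The connecting map is therefore an isomorphism
\[
H^1(X;\cal L\otimes\ubar S^1)\cong H^2(X;\cal L).
\]
This is the twisted analogue of Lemma~\ref{lem:c-one}, and it produces the class $\theta$.

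\emph{Well-definedness.} I would check that an isomorphism of $G$-bundles induces an isomorphism $f$ of the associated $H$-bundles with $f_\ast\theta=\theta'$. This is the naturality of the construction.

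\emph{The inverse.} Given $(\xi,\theta)$, lift $\theta$ to a class in $H^1(X;\cal L\otimes\ubar S^1)$. Realize it by a Čech cocycle for a good cover over which $\xi$ is trivial. Use the trivializations of $\xi$ together with the semidirect product law to assemble $G$-valued transition functions $(t_{\alpha\beta},h_{\alpha\beta})$. The cocycle condition in $G$ should be exactly the twisted cocycle condition, because multiplication in $T\rtimes H$ twists the $T$-component by the $H$-component. Gluing then gives $P$. Both maps preserve basepoints: the trivial bundle corresponds to the trivial $\xi$ with $\theta=0$.

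\emph{Main obstacle.} I expect the hardest step to be the precise identification of twisted $T$-bundles with $H^1(X;\cal L\otimes\ubar S^1)$ in the smooth Fréchet setting. Nonabelian $H$ makes the Čech description of $G$-bundles only a pointed set, so I need to be careful about how changing the trivializations of $\xi$ acts on the $T$-part. That action is precisely what the relation $\sim$ through $f_\ast$ is designed to absorb. I would handle it by first fixing $\xi$ and showing that $G$-bundles with $P/T\cong\xi$, modulo isomorphisms inducing the identity on $\xi$, correspond to $H^1(X;\cal L\otimes\ubar S^1)$. I would then quotient by $\operatorname{Aut}(\xi)$, which acts on the cohomology through $f_\ast$.
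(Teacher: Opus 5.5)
Your proposal is correct and follows essentially the same route as the paper. The paper likewise uses the twisted exponential sequence $0\to\cal L_\xi\to\cal V_\xi\to\cal T_\xi\to 0$ (your $\cal L\otimes\ubar{\Bbb R}$ and $\cal L\otimes\ubar{S}^1$) with fine middle term to get $H^1(X;\cal T_\xi)\cong H^2(X;\cal L_\xi)$, classifies lifts of a fixed $\xi$ by twisted \v{C}ech cocycles with transition functions $i(a_{ij})\sigma(h_{ij})$, and then passes to isomorphisms of $\xi$ acting through $f_\ast$.

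The one detail to make explicit in your last step is that the reference lift, with transition functions $\sigma(h_{ij})$, is natural under isomorphisms of $\xi$. This is what makes the action of $\operatorname{Aut}(\xi)$ on lifts match $f_\ast$ on cohomology.
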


\begin{proof}
Fix a principal $H$-bundle $\xi\:E\to X$, and write
$\cal L_\xi=\cal L(\xi,\Lambda)$. Let $\cal V_\xi$ and $\cal T_\xi$
be the sheaves of smooth sections of the associated bundles
$E\times_H\mathfrak t$ and $E\times_H T$, respectively, where
$\mathfrak t$ is the Lie algebra of $T$. The $H$-equivariant exponential
sequence $0\to\Lambda\to\mathfrak t\to T\to0$ induces a short exact
sequence of sheaves of abelian groups
\[
0\longrightarrow\cal L_\xi\longrightarrow\cal V_\xi
\longrightarrow\cal T_\xi\longrightarrow0.
\]
Since $X$ is nice, smooth partitions of unity exist
\cite[Corollary~11.17]{Kihara}. Thus $\cal V_\xi$ is fine
and hence acyclic. The connecting homomorphism is therefore an isomorphism
\[
\delta_\xi\:H^1(X;\cal T_\xi)\xrightarrow{\cong}H^2(X;\cal L_\xi).
\]

We first classify lifts of $\xi$, meaning principal $G$-bundles
$P\to X$ equipped with an isomorphism $P/T\cong E$, with isomorphisms
of lifts required to preserve this identification.
Choose local sections $s_i$ of $E$ over an open cover $\{U_i\}$, and
write $s_j=s_i h_{ij}$ on $U_{ij}=U_i\cap U_j$. After refining the
cover if necessary, a lift has transition functions of the form
\[
g_{ij}=i(a_{ij})\sigma(h_{ij}),\qquad a_{ij}\:U_{ij}\to T.
\]
Write $h\cdot a$ for the action of $H$ on $T$, so that
$i(h\cdot a)=\sigma(h)i(a)\sigma(h)^{-1}$.
The cocycle identity $g_{ij}g_{jk}=g_{ik}$ is equivalent to
\[
a_{ij}\bigl(h_{ij}\cdot a_{jk}\bigr)=a_{ik}.
\]
This is precisely the \v{C}ech cocycle condition for $\cal T_\xi$,
expressed in the local trivializations determined by the $s_i$.
Changing the local sections of $P$ by maps $b_i\:U_i\to T$, while
keeping their projections $s_i$ fixed, replaces $a_{ij}$ by
\[
a'_{ij}=b_i^{-1}a_{ij}\bigl(h_{ij}\cdot b_j\bigr).
\]
These are exactly the changes by a coboundary in $\cal T_\xi$.
Conversely, any such twisted cocycle defines a lift using the
transition functions $i(a_{ij})\sigma(h_{ij})$.
Passing to refinements, lifts of $\xi$ are therefore classified by
$H^1(X;\cal T_\xi)$, and hence by $H^2(X;\cal L_\xi)$.
The zero class corresponds to the lift with transition functions
$\sigma(h_{ij})$.

An isomorphism $f\:\xi\to\xi'$ induces isomorphisms of the three
associated sheaves and commutes with the connecting homomorphisms.
The lift defined by the fixed splitting $\sigma$ is also natural under
$f$. Thus transporting a lift along $f$ carries its cohomology class
$\theta$ to $f_\ast\theta$. Forgetting the specified identification
$P/T\cong E$ therefore imposes exactly the equivalence relation on
pairs $(\xi,\theta)$ defining $\Prin_H(X;\Lambda)$.
This proves the asserted bijection. The trivial $G$-bundle corresponds
to the trivial $H$-bundle with zero class, so the bijection preserves
basepoints.
\end{proof}

\begin{rem} Ignoring the issue of smoothness, a quick proof of Proposition \ref{prop:classify}
can be sketched as follows. Regard the map
\[
BG\to BH
\]
as a fibration with fiber $BT=K(\Lambda,2)$ and section induced by the splitting $\sigma$.
The action of $\pi_1(BH)=H$ on $\pi_2(BT)=\Lambda$ is the given action of $H$ on $\Lambda$.
For a fixed classifying map $u\:X\to BH$ of $\xi$, the section supplies a reference lift.
Homotopy classes of lifts of $u$, with homotopies through lifts of this fixed map,
are then identified with $H^2(X;\cal L(\xi,\Lambda))$.
To classify the underlying $G$-bundles, one must further identify these classes under
isomorphisms of the principal $H$-bundles, as in the equivalence relation defining $\Prin_H(X;\Lambda)$.
\end{rem}

In the following we let $H$  act
on  the underlying pointed set $H^2(X;\ubar\Lambda)$ by means of its action on $\Lambda$. 

\begin{cor} \label{cor:1-conn-classification} Assume in addition that $X$ is $1$-connected. Then there is an isomorphism of pointed sets
\[
\Prin_G(X) \cong H^2(X;\ubar\Lambda)_{H}\, ,
\]
where on the right, coinvariants are taken in
the category of pointed sets.
\end{cor}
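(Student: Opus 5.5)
The plan is to specialize Proposition~\ref{prop:classify} to the simply connected case. That proposition identifies $\Prin_G(X)$ with $\Prin_H(X;\Lambda)$, the set of equivalence classes of pairs $(\xi,\theta)$. What remains is to compute this set when $\pi_1(X)=1$.

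The first step is to show that every smooth principal $H$-bundle over $X$ is trivial when $H$ is discrete. A smooth principal $H$-bundle $E\to X$ with $H$ discrete is a covering space; local triviality comes from the construction in Proposition~\ref{prop:classify}, where we chose local sections over an open cover. Since $X$ is a nice Fréchet manifold, it is locally path connected, being locally modeled on convex open sets. Because $X$ is also connected and simply connected, covering space theory gives a global section. That section is smooth, since locally it agrees with one of the smooth local sections, $H$ being discrete. So $\xi$ is isomorphic to $X\times H$, and every pair is equivalent to one of the form $(X\times H,\theta)$.

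The second step concerns the trivial bundle $\xi_0=X\times H$. Its associated covering is $X\times\Lambda$, so $\cal L(\xi_0,\Lambda)=\ubar\Lambda$ is the constant sheaf and $\theta\in H^2(X;\ubar\Lambda)$. The automorphisms of $\xi_0$ covering the identity are the $H$-equivariant maps $(x,k)\mapsto(x,c(x)k)$ with $c\colon X\to H$ smooth, hence locally constant. Since $X$ is connected, $c$ is a constant $h\in H$. The induced map on $X\times_H\Lambda\cong X\times\Lambda$ is $(x,\ell)\mapsto(x,h\cdot\ell)$, up to the convention for left versus right actions. Hence $f_\ast$ on $H^2(X;\ubar\Lambda)$ is the map induced by the coefficient automorphism $h\colon\Lambda\to\Lambda$.

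In conclusion, two pairs $(\xi_0,\theta)$ and $(\xi_0,\theta')$ are equivalent exactly when $\theta'=h\cdot\theta$ for some $h\in H$. Therefore $\Prin_H(X;\Lambda)\cong H^2(X;\ubar\Lambda)/H=H^2(X;\ubar\Lambda)_H$. The basepoint is the orbit of $0$, which is a single point, so the bijection is one of pointed sets.

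The main obstacle is the first step: getting from the Fréchet and diffeological setting to classical covering space theory. This needs local path connectedness of $X$, and it needs simple connectivity in the $\cal D$-topology to agree with the usual notion; for Fréchet manifolds this holds by the remark that the manifold topology is the $\cal D$-topology. A secondary check is that the identification $E\times_H\Lambda\cong X\times\Lambda$ is compatible with the chosen action conventions, but any discrepancy there only replaces $h$ by $h^{-1}$, which does not change the orbits.
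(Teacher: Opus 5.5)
Your proof is correct and takes essentially the same route as the paper. Simple connectivity trivializes every principal $H$-bundle, automorphisms of $X\times H$ are left multiplication by a constant $h\in H$ acting on $H^2(X;\ubar\Lambda)$ through the coefficient action, and Proposition~\ref{prop:classify} then yields the orbit set. You also justify the triviality step, which the paper only asserts, via local path connectedness, agreement of the manifold and $\cal D$-topologies, and smoothness of the continuous global section.
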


\begin{proof} 
Since $X$ is simply connected and $H$ is discrete, every
principal $H$-bundle over $X$ is trivial. Choosing a
trivialization identifies its associated locally constant
sheaf with the constant sheaf $\underline{\Lambda}$.

Every automorphism of the trivial principal $H$-bundle
$X\times H$ covering the identity of $X$ has the form
\[
(x,k)\longmapsto (x,a(x)k),
\]
where $a\colon X\to H$ is smooth. Since $X$ is connected
and $H$ is discrete, $a$ is constant. Thus the automorphism
group is $H$, and its action on
$H^2(X;\underline{\Lambda})$ is induced by the given action
of $H$ on $\Lambda$.

The equivalence relation on pairs $(\xi,\theta)$ therefore
identifies precisely the $H$-orbits of cohomology classes.
Proposition~\ref{prop:classify} now gives
\[
\operatorname{Bun}_G(X)
\cong H^2(X;\underline{\Lambda})_H.
\]
The trivial $G$-bundle corresponds to the orbit of zero.
\end{proof}

%\begin{thebibliography}
%\bibliographystyle{amsplain}
%\bibliography{periodic}

\end{document}